\title{
  Non-Asymptotic Bounds and a General Formula for\\
  the Rate-Distortion Region of the Successive Refinement Problem\thanks{Portions of this paper were presented at the 38th Symposium on Information Theory and Its Applications \cite{matsuta2015nab}, and at the 2016 IEICE Society Conference \cite{matsuta2016general}.}
}
\author{
  Tetsunao Matsuta\thanks{tetsu@ict.e.titech.ac.jp} \and Tomohiko Uyematsu\thanks{uematsu@ict.e.titech.ac.jp}  
}
\date{\empty}
\theoremstyle{definition}
\newtheorem{thm}{Theorem}
\newtheorem{lem}{Lemma}
\newtheorem{cor}{Corollary}
\newtheorem{defn}{Definition}
\newtheorem{rem}{Remark}
\newcommand{\ra}{\rightarrow}
\newcommand{\teq}{\triangleq}
\newcommand{\A}{\alpha}
\newcommand{\cA}{\mathcal{A}}
\newcommand{\cB}{\mathcal{B}}
\newcommand{\cC}{\mathcal{C}}
\newcommand{\cD}{\mathcal{D}}
\newcommand{\cE}{\mathcal{E}}
\newcommand{\cF}{\mathcal{F}}
\newcommand{\cI}{\mathcal{I}}
\newcommand{\cM}{\mathcal{M}}
\newcommand{\cP}{\mathcal{P}}
\newcommand{\cR}{\mathcal{R}}
\newcommand{\cS}{\mathcal{S}}
\newcommand{\cU}{\mathcal{U}}
\newcommand{\cX}{\mathcal{X}}
\newcommand{\cY}{\mathcal{Y}}
\newcommand{\cZ}{\mathcal{Z}}
\newcommand{\plimsup}{\mathrm{p}\mathchar`-\!\mathop{\limsup}\limits}
\newcommand{\bfP}{\mathbf{P}}
\newcommand{\bfS}{\mathbf{S}}
\newcommand{\bfU}{\mathbf{U}}
\newcommand{\bfX}{\mathbf{X}}
\newcommand{\bfY}{\mathbf{Y}}
\newcommand{\bfZ}{\mathbf{Z}}
\def\eqo#1{\overset{\mathrm{#1}}{=}}
\def\geqo#1{\overset{\mathrm{#1}}{\geq}}
\def\leqo#1{\overset{\mathrm{#1}}{\leq}}
\begin{document}
\maketitle

\renewcommand{\thefootnote}{\fnsymbol{footnote}}
\footnote[0]{The authors are with Dept.\ of Information and Communications Engineering, Tokyo Institute of Technology, Tokyo, 152-8552 Japan.}
\renewcommand{\thefootnote}{\arabic{footnote}}

{\small
  \textbf{SUMMARY}  
  In the successive refinement problem, a fixed-length sequence emitted from an information source is encoded into two codewords by two encoders in order to give two reconstructions of the sequence. One of two reconstructions is obtained by one of two codewords, and the other reconstruction is obtained by all two codewords. For this coding problem, we give non-asymptotic inner and outer bounds on pairs of numbers of codewords of two encoders such that each probability that a distortion exceeds a given distortion level is less than a given probability level. We also give a general formula for the rate-distortion region for general sources, where the rate-distortion region is the set of rate pairs of two encoders such that each maximum value of possible distortions is less than a given distortion level.

  \textbf{Key words:} general source, information spectrum, non-asymptotic bound, rate-distortion region, successive refinement
}
\section{Introduction}
The successive refinement problem is a fixed-length lossy source coding problem with many terminals (see Fig.\ \ref{fig: successive refinement problem}). In this coding problem, a fixed-length sequence emitted from an information source is encoded into two codewords by two encoders in order to give two reconstructions of the sequence. One of two reconstructions is obtained by one of two codewords by using a decoder, and the other reconstruction is obtained by all two codewords by using the other decoder.

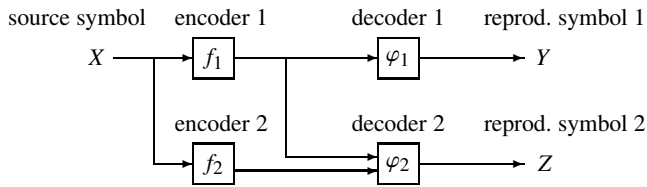
\begin{figure}[t]
    \centering
\begin{picture}(240, 70)
    \small
 \put(-10, 60){source symbol}
 \put(20, 45){$X$}
 \put(30, 47.5){\vector(1, 0){30}}
 \put(45, 47.5){\line(0, -1){40}}
 \put(45, 7.5){\vector(1, 0){15}}

 \put(53, 60){encoder 1}
 \put(60, 40){\framebox(15, 15){$f_{1}$}} 
 \put(75, 47.5){\vector(1, 0){55}}
 \put(95, 47.5){\line(0, -1){37.5}}
 \put(95, 10){\vector(1, 0){35}}

  \put(53, 20){encoder 2}
  \put(60, 0){\framebox(15, 15){$f_{2}$}} 
  \put(75, 5){\vector(1, 0){55}}

 \put(120, 60){decoder 1}
 \put(130, 40){\framebox(15, 15){$\varphi_{1}$}}
 \put(145, 47.5){\vector(1, 0){40}}

 \put(120, 20){decoder 2}
  \put(130, 0){\framebox(15, 15){$\varphi_{2}$}}
 \put(145, 7.5){\vector(1, 0){40}}
 
 \put(170, 60){reprod.\ symbol 1}
 \put(190, 45){$Y$}

  \put(170, 20){reprod.\ symbol 2}
 \put(190, 5){$Z$} 
\end{picture}
    \label{fig: successive refinement problem}
    \caption{Successive refinement problem}
\end{figure}

An important parameter of the successive refinement problem is a pair of \textit{rates} of two encoders such that each distortion between the source sequence and a reconstruction is less than a given distortion level. The set of these pairs when the length (blocklength) of the source sequence is unlimited is called the \textit{rate-distortion region}. Since a codeword is used in both decoders, we cannot always optimize rates like the case where each codeword is used for each reconstruction separately. However, there are some cases where we can achieve the optimum rates. Necessary and sufficient conditions for such cases were independently given by Koshelev \cite{koshelev1981estimation}, \cite{koshelev1980hierarchical} and Equitz and Cover \cite{equitz1991successive}. The complete characterization of the rate-distortion region for discrete stationary memoryless sources was given by Rimoldi \cite{272493}. Yamamoto \cite{490549} also gave the rate-distortion region as a special case of a more general coding problem. Later, Effros \cite{782109} characterized the rate-distortion region for discrete stationary ergodic and non-ergodic sources.

Recently, the asymptotic analysis of the second-order rates to the blocklength becomes an active target of the study. Especially, for the successive refinement problem, No et al.\ \cite{7445859} and Zhou et al.\ \cite{zhou2017second} gave a lot of results to the set of second-order rates for discrete and Gaussian stationary memoryless sources. No et al.\ \cite{7445859} considered \textit{separate} excess-distortion criteria such that a probability that a distortion exceeds a given distortion level is less than a given probability level separately for each reconstruction. On the other hand, Zhou et al.\ \cite{zhou2017second} considered the \textit{joint} excess-distortion criterion such that a probability that either of distortions exceeds a given distortion level is less than a given probability level. Although they also gave several non-asymptotic bounds on the set of pairs of rates, they mainly focus on the asymptotic behavior of the set.

On the other hand, in this paper, we consider non-asymptotic bounds on pairs of rates in finite blocklengths. Especially, since a rate is easily calculated by a number of codewords, we focus on pairs of two numbers of codewords. Although we adopt separate excess-distortion criteria, our result can be easily applied to the joint excess-distortion criterion. We give inner and outer bounds on pairs of numbers of codewords. These bounds are characterized by using the \textit{smooth max R\'enyi divergence} introduced by Warsi \cite{6691300}. For the point-to-point lossy source coding problem, we also used the smooth max R\'enyi divergence to characterize the rate-distortion function which is the minimum rate when the blocklength is unlimited \cite{uyematsu2014ITWrevisiting}. Proof techniques are similar to that of \cite{uyematsu2014ITWrevisiting}, but we employ several extended results for the successive refinement problem. The inner bound is derived by using an extended version of the previous lemma \cite[Lemma~2]{uyematsu2014ITWrevisiting}. We give this lemma as a special case of an extended version of the previous generalized covering lemma \cite[Lemma~1]{matsuta2016new}. The outer bound is derived by using an extended version of the previous converse bound \cite[Lemma~4]{uyematsu2014ITWrevisiting}.

In this paper, we also consider the rate-distortion region for general sources. In this case, we adopt the maximum-distortion criterion such that the maximum value of possible distortion is less than a given distortion level for each reconstruction. By using the spectral sup-mutual information rate (cf.\ \cite{hanspringerinformation}) and the non-asymptotic inner and outer bounds, we give a general formula for the rate-distortion region. We show that our rate-distortion region coincides with the region obtained by Rimoldi \cite{272493} when a source is discrete stationary memoryless. Furthermore, we consider a mixed source which is a mixture of two sources and show that the rate-distortion region is the intersection of those of two sources.

The rest of this paper is organized as follows. In Section \ref{sec: preliminaries}, we provide some notations and the formal definition of the successive refinement problem. In Section \ref{sec: covering lemma}, we give several lemmas for an inner bound on pairs of numbers of codewords and the rate-distortion region. These lemmas are extended versions of our previous results \cite[Lemma 2]{uyematsu2014ITWrevisiting} and \cite[Lemma 1]{matsuta2016new}. In Section \ref{sec: inner and outer bounds}, we give outer and inner bounds using the smooth max R\'enyi divergence on pairs of numbers of codewords. In Section \ref{sec: the limit of the rate region}, we give a general formula for the rate-distortion region. In this section, we consider the rate-distortion region for discrete stationary memoryless sources and mixed sources. In Section \ref{sec: conclusion}, we conclude the paper.

\section{Preliminaries}
\label{sec: preliminaries}
Let $\mathbb{N}$, $\mathbb{R}$, and $\mathbb{R}_{\geq 0}$ be sets of positive integers, real numbers, and non-negative real numbers, respectively.

Unless otherwise stated, we use the following notations. For a pair of integers $i \leq j$, the set of integers $\{i, i+1, \cdots, j\}$ is denoted by $[i: j]$. For finite or countably infinite sets $\cX$ and $\cY$, the set of all probability distributions over $\cX$ and $\cX \times \cY$ are denoted by $\cP_{\cX}$ and $\cP_{\cX \cY}$, respectively. The set of all conditional probability distributions over $\cX$ given $\cY$ is denoted by $\cP_{\cX|\cY}$. The probability distribution of a random variable (RV) $X$ is denoted by the subscript notation $P_{X}$, and the conditional probability distribution for $X$ given an RV $Y$ is denoted by $P_{X|Y}$. The $n$-fold Cartesian product of a set $\cX$ is denoted by $\cX^n$ while an $n$-length sequence of symbols $(a_1,a_2,\cdots,a_n)$ is denoted by $a^n$. The sequence of RVs $\{X^n\}_{n=1}^\infty$ is denoted by the bold-face letter $\bfX$. Sequences of probability distributions $\{P_{X^n}\}_{n = 1}^{\infty}$ and conditional probability distributions $\{P_{X^n|Y^n}\}_{n = 1}^{\infty}$ are denoted by bold-face letters $\bfP_{\bfX}$ and $\bfP_{\bfX|\bfY}$, respectively.


For the successive refinement problem (Fig.~\ref{fig: successive refinement problem}), let $\cX$, $\cY$, and $\cZ$ be finite or countably infinite sets, where $\cX$ represents the source alphabet, and $\cY$ and $\cZ$ represent two reconstruction alphabets. Let $X$ over $\cX$ be an RV which represents a single source symbol. Since the source can be characterized by $X$, we also refer to it as the source. When we consider $\cX$ as an $n$-fold Cartesian product of a certain finite or countably infinite set, we can regard the source symbol $X$ as an $n$-length source sequence. Thus, for the sake of brevity, we deal with the single source symbol unless otherwise stated.

Two encoders encoder 1 and encoder 2 are represented as functions $f_1:\cX\ra[1:M_1]$ and $f_2:\cX\ra[1:M_2]$, respectively, where $M_1$ and $M_2$ are positive integers which denote numbers of codewords. Two decoders decoder 1 and decoder 2 are represented as functions $\varphi_1:[1:M_1]\ra\cY$ and $\varphi_2:[1:M_1]\times [1:M_2]\ra\cZ$, respectively. We refer to a tuple of encoders and decoders $(f_1,f_2,\varphi_1,\varphi_2)$ as a \textit{code}. In order to measure distortions between the source symbol and reconstruction symbols, we introduce distortion measures defined by functions $d_1: \cX \times \cY \ra [0, +\infty)$ and $d_2: \cX \times \cZ \ra [0, +\infty)$.

We define two events of exceeding given distortion levels $D_1 \geq 0$ and $D_2 \geq 0$ as follows:
\begin{align*}
    \cE_1(D_1) &\teq \{d_1(X,\varphi_1(f_1(X))) > D_1\},\\
    \cE_2(D_2) &\teq \{d_2(X,\varphi_2(f_1(X),f_2(X))) > D_2\}.
\end{align*}
Then, we define the achievability under the excess-distortion criterion.
\begin{defn}
    For positive integers $M_1,M_2$, real numbers $D_1, D_2 \geq 0$, and $\epsilon_1,\epsilon_2 \in [0, 1]$, let $M = (M_1,M_2)$, $D = (D_1,D_2)$, and $\epsilon =(\epsilon_1,\epsilon_2)$. Then, for a source $X$, we say $(M,D)$ is \textit{$\epsilon$-achievable} if and only if there exists a code $(f_1,f_2,\varphi_1,\varphi_2)$ such that numbers of codewords of encoder 1 and encoder 2 are $M_1$ and $M_2$, respectively, and
    \begin{align*}
        \Pr\{\cE_i(D_i)\}\leq \epsilon_i, \quad \forall i\in\{1,2\}.
    \end{align*}
\end{defn}
In what follows, for constants $M_1,M_2, D_1, D_2,\epsilon_1,$ and $\epsilon_2$, we often use the above simple notations: $M = (M_1,M_2)$, $D = (D_1,D_2)$, and $\epsilon =(\epsilon_1,\epsilon_2)$. In this setting, we consider the set of all pairs $(M_1,M_2)$ of numbers of codewords under the excess-distortion criterion. According to the $\epsilon$-achievability, this set is defined as follows:
\begin{defn}
    For a source $X$, real numbers $D_1, D_2 \geq 0$, and $\epsilon_1,\epsilon_2 \in [0, 1]$, we define
    \mathindent=0mm
    \begin{align*}
        \cM(D,\epsilon| X) \teq \{(M_1,M_2) \in \mathbb{N}^{2}: (M,D) \mbox{ is $\epsilon$-achievable}\}.
    \end{align*}
    \mathindent=7mm
\end{defn}

Basically, this paper deals with a coding for a single source symbol. However, in Section \ref{sec: the limit of the rate region}, we deal with the coding for an $n$-length source sequence. Hence in that section, by abuse of notation, we regard the above sets $\cX$, $\cY$, and $\cZ$ as $n$-fold Cartesian products $\cX^n$, $\cY^n$, and $\cZ^n$, respectively. We also regard source symbol $X$ on $\cX$ as an $n$-length source sequence $X^n$ on $\cX^n$. Then we call the sequence $\bfX = \{X^n\}_{n = 1}^{\infty}$ of source sequences the \textit{general source} that is not required to satisfy the consistency condition.

We use the superscript $(n)$ for a code, distortion measures, and numbers of codewords (e.g., $(f_1^{(n)}, f_2^{(n)}, \varphi_1^{(n)}, \varphi_2^{(n)})$) to make clear that we are dealing with source sequences of length $n$. For a code, we define rates $R_{1}^{(n)}$ and $R_{2}^{(n)}$ as
\begin{align*}
    R_{i}^{(n)} \teq \frac{1}{n} \log M_{i}^{(n)}, \quad \forall i \in \{1, 2\}.
\end{align*}
Hereafter, $\log$ means the natural logarithm.

We introduce maximum distortions for a sequence of codes. To this end, we define the limit superior in probability \cite{hanspringerinformation}.
\begin{defn}[Limit superior in probability]
    For an arbitrary sequence $\bfS = \{S^n\}_{n = 1}^{\infty}$ of real-valued RVs, we define the limit superior in probability by
    \begin{align*}
        \plimsup_{n\ra\infty} S_n &\teq \inf\left\{\A : \lim_{n\ra \infty}\Pr\left\{S_n > \A\right\} = 0 \right\}.
    \end{align*}
\end{defn}
Now we introduce the maximum distortions:
\begin{align*}
    \plimsup_{n\ra\infty} d_1^{(n)}(X^n, \varphi_1^{(n)}(f_1^{(n)}(X^n))),\\
    \plimsup_{n\ra\infty} d_2^{(n)}(X^n, \varphi_2^{(n)}(f_1^{(n)}(X^n),f_2^{(n)}(X^n))).
\end{align*}
Then, we define the achievability under the maximum distortion criterion.
\begin{defn}
    For real numbers $R_1, R_2 \geq 0$, let $R = (R_1, R_2)$. Then, for a general source $\bfX$, and real numbers $D_1, D_2 \geq 0$,  we say a pair $(R, D)$ is \textit{fm-achievable} if and only if there exists a sequence $\{(f_1^{(n)}, f_2^{(n)}, \varphi_1^{(n)}, \varphi_2^{(n)})\}$ of codes satisfying
    \begin{align*}
        \plimsup_{n\ra\infty} d_1^{(n)}(X^n, \varphi_1^{(n)}(f_1^{(n)}(X^n))) &\leq D_1,\\
        \plimsup_{n\ra\infty} d_2^{(n)}(X^n, \varphi_2^{(n)}(f_1^{(n)}(X^n),f_2^{(n)}(X^n))) &\leq D_2,
    \end{align*}
    and
    \begin{align*}
        \limsup_{n\ra\infty} R_i^{(n)} \leq R_i, \quad \forall i \in \{1, 2\}.
    \end{align*}
\end{defn}
In what follows, for constants $R_1$ and $R_2$, we often use the above simple notation: $R = (R_1, R_2)$. In this setting, we consider the set of all rate pairs under the maximum distortion criterion. According to the fm-achievability, this set, usually called the rate-distortion region, is defined as follows:
\begin{defn}[Rate-distortion region]
    For a general source $\bfX$ and real numbers $D_1, D_2 \geq 0$, we define
    \begin{align*}
        \cR(D| \bfX) \teq \{(R_1, R_2) \in \mathbb{R}_{\geq 0}^{2}: (R, D) \mbox{ is fm-achievable}\}.
    \end{align*}
\end{defn}
\begin{rem}
    We can show that the rate-distortion region $\cR(D| \bfX)$ is a closed set by the definition and using the diagonal line argument (cf.\ \cite{hanspringerinformation}).
\end{rem}

We note that when we regard $X$ as $n$-length sequence in the definition of $\cM(D, \epsilon| X)$, it gives a non-asymptotic region of pairs of rates for a given finite blocklength.

\section{Covering Lemma}
\label{sec: covering lemma}
In this section, we introduce some useful lemmas and corollaries for an inner bound on the set $\cM(D, \epsilon| X)$ and $\cR(D| \bfX)$.

The next lemma is the most basic and important result in the sense that all subsequent results in this section are given by this lemma.
\begin{lem}
    \label{lem: Covering Lemma}
    Let $A \in \cA$ be an arbitrary RV, and $\tilde B \in \cB$ and $\tilde C \in \cC$ be RVs such that the pair $(\tilde B, \tilde C)$ is independent of $A$. For an integer $M_1 \geq 1$, let $\tilde B_1,\tilde B_2,\cdots, \tilde B_{M_1}$ be RVs which are independent of each other and of $A$, and each distributed according to $P_{\tilde B}$. For an integer $i\in [1 : M_1]$ and $M_2 \geq 1$, let $\tilde C_{i,1},\tilde C_{i,2},\cdots, \tilde C_{i,M_2}$ be RVs which are independent of each other and of $A$, and each distributed according to $P_{\tilde C|\tilde B}(\cdot | \tilde B_i)$. Then, for any set $\cF\subseteq \cA\times \cB\times \cC$, we have
    \begin{align}
        &\Pr\left\{ \bigcap_{i=1}^{M_1} \bigcap_{j=1}^{M_2}\{(A,\tilde B_i, \tilde C_{i,j}) \notin \cF\}\right\}\notag\\
        &=\mathrm{E} \left[\mathrm{E} \left[ \mathrm{E} \left[\mathbf{1}\{ (A,\tilde B, \tilde C) \notin \cF\} \middle| A, \tilde B \right]^{M_2} \middle| A \right]^{M_1}\right],
        \label{equ: lem: covering lemma}
    \end{align}
    where $\mathbf{1}\{\cdot\}$ denotes the indicator function, $\mathrm{E} \left[ \cdot \right]$ denotes the expectation operator, and $\mathrm{E}\left[ \cdot \right]^{M}$ denotes the $M$-th power of the expectation, i.e., $\mathrm{E}\left[ \cdot \right]^{M} = \left(\mathrm{E}\left[ \cdot \right]\right)^{M}$.
\end{lem}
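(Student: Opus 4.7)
The proof is essentially a careful conditional-expectation calculation exploiting the nested independence structure: across $i$ the "branches" $(\tilde B_i, \tilde C_{i,1}, \ldots, \tilde C_{i,M_2})$ are mutually independent, and within each branch, conditional on $\tilde B_i$, the $\tilde C_{i,j}$ are i.i.d.; everything is independent of $A$.

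The plan is as follows. First I would rewrite the probability on the left-hand side of \eqref{equ: lem: covering lemma} as the expectation of a product of indicators:
\begin{align*}
    \Pr\!\left\{\bigcap_{i,j}\{(A,\tilde B_i,\tilde C_{i,j})\notin\cF\}\right\}
    = \mathrm{E}\!\left[\prod_{i=1}^{M_1}\prod_{j=1}^{M_2}\mathbf{1}\{(A,\tilde B_i,\tilde C_{i,j})\notin\cF\}\right].
\end{align*}
Next I would condition on $A$ and invoke the tower property. Because $(\tilde B, \tilde C)$, and hence all the $\tilde B_i$ and all the $\tilde C_{i,j}$, are independent of $A$, and because the groups indexed by different $i$ are (unconditionally, hence conditionally on $A$) mutually independent, the conditional expectation given $A$ factors over $i$:
\begin{align*}
    \mathrm{E}\!\left[\prod_{i=1}^{M_1}\prod_{j=1}^{M_2}\mathbf{1}\{(A,\tilde B_i,\tilde C_{i,j})\notin\cF\}\,\middle|\, A\right]
    = \prod_{i=1}^{M_1}\mathrm{E}\!\left[\prod_{j=1}^{M_2}\mathbf{1}\{(A,\tilde B_i,\tilde C_{i,j})\notin\cF\}\,\middle|\, A\right].
\end{align*}

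Then I would handle the inner product by another conditioning, this time on $(A,\tilde B_i)$. Since the $\tilde C_{i,j}$ are conditionally i.i.d.\ given $\tilde B_i$ (and independent of $A$), the $M_2$ indicator RVs inside are i.i.d.\ given $(A,\tilde B_i)$, so their product factors as an $M_2$-th power:
\begin{align*}
    \mathrm{E}\!\left[\prod_{j=1}^{M_2}\mathbf{1}\{(A,\tilde B_i,\tilde C_{i,j})\notin\cF\}\,\middle|\, A,\tilde B_i\right]
    = \mathrm{E}\!\left[\mathbf{1}\{(A,\tilde B_i,\tilde C_{i,1})\notin\cF\}\,\middle|\, A,\tilde B_i\right]^{M_2}.
\end{align*}
Taking expectation over $\tilde B_i$ given $A$ and noting that $\tilde B_i \sim P_{\tilde B}$ and $\tilde C_{i,1}$ given $(A,\tilde B_i)$ is distributed as $\tilde C$ given $(A,\tilde B)$ (with $(\tilde B,\tilde C)\perp A$), each factor on the right-hand side of the product over $i$ becomes the same quantity $\mathrm{E}[\mathrm{E}[\mathbf{1}\{(A,\tilde B,\tilde C)\notin\cF\}\mid A,\tilde B]^{M_2}\mid A]$. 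Raising it to the $M_1$-th power and taking the outer expectation over $A$ yields the claimed identity \eqref{equ: lem: covering lemma}.

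The only subtle point, and the one I would write out most carefully, is the passage from $\mathrm{E}[\,\cdot\mid A,\tilde B_i\,]$ to $\mathrm{E}[\,\cdot\mid A,\tilde B\,]$: one must justify that the conditional distribution of $\tilde C_{i,1}$ given $(A,\tilde B_i)$ really equals $P_{\tilde C\mid\tilde B}(\cdot\mid\tilde B_i)$, which uses both the prescribed construction of $\tilde C_{i,1}$ and the hypothesis that $(\tilde B,\tilde C)$ is independent of $A$. Beyond this bookkeeping, there is no genuine obstacle: the lemma is a purely structural identity obtained by iterating the tower property twice and applying independence at each layer.
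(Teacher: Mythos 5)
Your proof is correct and takes essentially the same approach as the paper: both exploit the nested independence structure to factor the joint probability, first across the $M_1$ branches given $A$, then across the $M_2$ draws within a branch given $(A,\tilde B_i)$, obtaining the two nested powers. The paper carries this out by explicitly writing the probability as a sum, factoring the product measure, and regrouping the sums, whereas you phrase the same factorization via the tower property of conditional expectation; this is a purely notational difference, and your identification of the one subtle step (equality of the conditional kernels of $\tilde C_{i,1}$ given $(A,\tilde B_i)$ and of $\tilde C$ given $(A,\tilde B)$, both reducing to $P_{\tilde C\mid \tilde B}$ by independence from $A$) is exactly where the paper uses the independence hypothesis at the end of its computation.
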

\begin{proof}
    We have
    \begin{align*}
        &\Pr\left\{ \bigcap_{i=1}^{M_1} \bigcap_{j=1}^{M_2}\{(A,\tilde B_i, \tilde C_{i,j}) \notin \cF\}\right\}\\
        &= \sum_{a\in\cA}\sum_{(\tilde b_{1}\cdots,\tilde b_{M_1})\in\cB^{M_1}}
        \sum_{(\tilde c_{1,1},\cdots, \tilde c_{1,M_2},c_{2,1}\cdots, \tilde c_{M_1,M_2}) \in\cC^{M_1M_2}}\\
        &\quad\times \left(P_{A}(a)\prod_{i=1}^{M_1}P_{\tilde B}(\tilde b_i)\prod_{j=1}^{M_2}P_{\tilde C|\tilde B}(\tilde c_{i,j}|\tilde b_i)\right)\\
        &\quad \times \left(\prod_{i=1}^{M_1} \prod_{j=1}^{M_2} \mathbf{1}\{(a,\tilde b_i, \tilde c_{i,j}) \notin \cF\}\right)\\
        &= \sum_{a\in\cA}\sum_{(\tilde b_{1}\cdots,\tilde b_{M_1})\in\cB^{M_1}}
        \sum_{(\tilde c_{1,1},\cdots, \tilde c_{1,M_2},c_{2,1}\cdots, \tilde c_{M_1,M_2}) \in\cC^{M_1M_2}}\\
        &\quad\times P_{A}(a)\prod_{i=1}^{M_1}P_{\tilde B}(\tilde b_i)\prod_{j=1}^{M_2}P_{\tilde C|\tilde B}(\tilde c_{i,j}|\tilde b_i)\\
        &\quad \times \mathbf{1}\{(a,\tilde b_i, \tilde c_{i,j}) \notin \cF\}\\
        &= \sum_{a\in\cA}P_{A}(a)\prod_{i=1}^{M_1}\sum_{\tilde b_i\in\cB}P_{\tilde B}(\tilde b_i)
        \prod_{j=1}^{M_2}\sum_{\tilde c_{i,j}\in\cC}
        P_{\tilde C|\tilde B}(\tilde c_{i,j}|\tilde b_i)\\
        &\quad \times \mathbf{1}\{(a,\tilde b_i, \tilde c_{i,j}) \notin \cF\}\\
        &= \sum_{a\in\cA}P_{A}(a)\prod_{i=1}^{M_1}\sum_{\tilde b\in\cB}P_{\tilde B}(\tilde b)
        \prod_{j=1}^{M_2}\sum_{\substack{\tilde c\in\cC:\\ (a,\tilde b, \tilde c) \notin \cF}}
        P_{\tilde C|\tilde B}(\tilde c|\tilde b)\\
        &= \sum_{a\in\cA} P_{A}(a) \left(\sum_{\tilde b\in\cB}P_{\tilde B}(\tilde b) \left(\sum_{\substack{\tilde c\in\cC:\\ (a,\tilde b, \tilde c) \notin \cF}} P_{\tilde C|\tilde B}(\tilde c|\tilde b) \right)^{M_2} \right)^{M_1}.
    \end{align*}
    By recalling that $(\tilde B, \tilde C)$ is independent of $A$, this coincides with the right-hand side (RHS) of \eqref{equ: lem: covering lemma}.
\end{proof}           

This lemma implies an exact analysis of the error probability of covering a set $\cA$ in terms of a given condition $\cF$ by codewords $\{\tilde B_{i}\}$ and $\{\tilde C_{i, j}\}$ of random coding. Hence, this lemma can be regarded as an extended version of \cite[Theorem 9]{kostina2012fixed}.

Although the above lemma gives an exact analysis, it is difficult to use it for characterizing an inner bound on pairs of numbers of codewords and the rate-distortion region. Instead of it, we will use the next convenient lemma.
\begin{lem} 
    \label{lem: Bound for Covering Lemma}
    Let $A \in \cA$, $B \in \cB$, and $C \in \cC$ be arbitrary RVs, and $\tilde B \in \cB$ and $\tilde C \in \cC$ be RVs such that the pair $(\tilde B, \tilde C)$ is independent of $A$. Let $\psi_1:\cA\times \cB \ra [0,1]$ be a function and $\alpha_1 \in [0, 1]$ be a constant such that
    \begin{align}
        & P_{A}(a) P_{\tilde B}(b) \geq \alpha_1 \psi_1(a,b) P_{AB}(a, b),\notag\\
        & \quad \forall (a,b)\in\cA\times\cB.
        \label{equ: alpha_1psi_1condition}
    \end{align}
    Furthermore, let $\psi_2:\cA\times\cB\times\cC \ra [0,1]$ be a function and $\alpha_2 \in [0, 1]$ be a constant such that
    \begin{align}   
        &P_{AB}(a, b) P_{\tilde C|\tilde B}(c|b) \geq \alpha_2 \psi_2(a,b,c) P_{ABC}(a, b, c),\notag\\
        &\quad \forall (a,b,c)\in\cA\times\cB\times\cC.
        \label{equ: alpha_2psi_2condition}      
    \end{align}
    Then, for any set $\cF\subseteq \cA\times \cB\times \cC$, we have
    \begin{align}
        &\mathrm{E} \left[\mathrm{E} \left[ \mathrm{E} \left[ \mathbf{1}\{ (A,\tilde B, \tilde C) \notin \cF\} \middle| A, \tilde B \right]^{M_2} \middle| A \right]^{M_1}\right]\notag\\
        &\leq 1- \mathrm{E}[\psi_1(A,B) \psi_2(A,B,C)] + \Pr\left\{(A,B,C) \notin \cF\right\}\notag\\
        &\quad +e^{-\alpha_2 M_2-\log\alpha_1} +e^{-\alpha_1 M_1}.
        \label{equ: lem: bound for covering lemma}
    \end{align}
\end{lem}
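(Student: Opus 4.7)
The plan is to carry out a nested three-step bound mirroring the three nested expectations on the left-hand side of \eqref{equ: lem: bound for covering lemma}, with the change-of-measure hypotheses \eqref{equ: alpha_1psi_1condition} and \eqref{equ: alpha_2psi_2condition} driving the inner two and outer steps, respectively. The single workhorse inequality is
$$
e^{-\lambda x} \leq 1 - x + e^{-\lambda}, \quad x \in [0,1],\ \lambda \geq 0,
$$
which follows because the difference is concave in $x$ with non-negative endpoint values at $x=0$ and $x=1$, hence non-negative on $[0,1]$.

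For the innermost expectation, I would write
$$
\mathrm{E}\!\left[\mathbf{1}\{(A,\tilde B,\tilde C)\notin\cF\}\,\middle|\,A=a,\tilde B=b\right] = 1 - \sum_c P_{\tilde C|\tilde B}(c|b)\,\mathbf{1}\{(a,b,c)\in\cF\},
$$
and, for $(a,b)$ with $P_{AB}(a,b)>0$ (other $(a,b)$ contribute nothing downstream), invoke \eqref{equ: alpha_2psi_2condition} to lower-bound the subtracted term by $\alpha_2 \beta_2(a,b)$, where $\beta_2(a,b) \teq \mathrm{E}[\psi_2(a,b,C)\mathbf{1}\{(a,b,C)\in\cF\}\,|\,A=a,B=b]$. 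Applying $(1-u)^{M_2}\leq e^{-M_2 u}$ and then the workhorse inequality with $\lambda=\alpha_2 M_2$ yields
$$
\mathrm{E}\!\left[\mathbf{1}\{(A,\tilde B,\tilde C)\notin\cF\}\,\middle|\,A=a,\tilde B=b\right]^{M_2} \leq 1 - \beta_2(a,b) + e^{-\alpha_2 M_2}.
$$
Taking the $\tilde B$-expectation and using \eqref{equ: alpha_1psi_1condition} in the form $P_{\tilde B}(b)\geq \alpha_1\psi_1(a,b)P_{B|A}(b|a)$ (again for $P_A(a)>0$) gives
$$
W(a)\teq \mathrm{E}\!\left[\cdot^{M_2}\,\middle|\,A=a\right] \leq 1 - \alpha_1 h(a) + e^{-\alpha_2 M_2},
$$
with $h(a)\teq \mathrm{E}[\psi_1(a,B)\beta_2(a,B)\,|\,A=a]$.

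Combining this with the trivial bound $W(a)\leq 1$ yields $W(a)\leq 1-\tilde h(a)$ for $\tilde h(a)\teq \max\{0,\alpha_1 h(a)-e^{-\alpha_2 M_2}\}\in[0,\alpha_1]$, hence $W(a)^{M_1}\leq e^{-M_1\tilde h(a)}$. A second application of the workhorse inequality with $\lambda=\alpha_1 M_1$ and $x=\tilde h(a)/\alpha_1\in[0,1]$ gives $e^{-M_1\tilde h(a)}\leq 1-\tilde h(a)/\alpha_1+e^{-\alpha_1 M_1}$. Since $\tilde h(a)\geq \alpha_1 h(a)-e^{-\alpha_2 M_2}$, taking the $A$-expectation produces
$$
\mathrm{E}[W(A)^{M_1}] \leq 1 - \mathrm{E}[h(A)] + \frac{e^{-\alpha_2 M_2}}{\alpha_1} + e^{-\alpha_1 M_1}.
$$
Finally, the tower property gives $\mathrm{E}[h(A)]=\mathrm{E}[\psi_1(A,B)\psi_2(A,B,C)\mathbf{1}\{(A,B,C)\in\cF\}]$, and splitting $\mathbf{1}_{\cF}=1-\mathbf{1}_{\cF^{c}}$ together with $\psi_1\psi_2\leq 1$ delivers $\mathrm{E}[h(A)]\geq \mathrm{E}[\psi_1\psi_2]-\Pr\{(A,B,C)\notin\cF\}$; rewriting $e^{-\alpha_2 M_2}/\alpha_1 = e^{-\alpha_2 M_2-\log\alpha_1}$ completes the bound.

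The main obstacle is the truncation step. The residual $e^{-\alpha_2 M_2}$ produced by the first layer can drive $1-\alpha_1 h(a)+e^{-\alpha_2 M_2}$ above $1$, so naively raising that bound to the $M_1$-th power and using $(1-u)^{M_1}\leq e^{-M_1 u}$ fails; passing through $\tilde h(a)=\max\{0,\alpha_1 h(a)-e^{-\alpha_2 M_2}\}$ together with the trivial bound $W(a)\leq 1$ is what allows the second application of the workhorse inequality to go through and lets the two exponential residuals sit additively next to the linear $1-\mathrm{E}[\psi_1\psi_2]$ term in the final estimate.
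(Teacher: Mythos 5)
Your proof is correct and takes essentially the same approach as the paper's: the paper uses the single combined inequality $(1-xy)^M \leq 1 - y + e^{-xM}$ (citing Cover--Thomas) at both layers, whereas you split it into $(1-u)^M \leq e^{-Mu}$ followed by your workhorse bound, and your truncation via $\tilde h(a) = \max\{0,\alpha_1 h(a) - e^{-\alpha_2 M_2}\}$ combined with $W(a)\leq 1$ is exactly the paper's step of clipping the inner quantity at $|\cdot|^{+}$ using the fact that the expression is a probability bounded by $1$.
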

\begin{proof}
    We have
    \mathindent=0mm
    \begin{align*}
        &\mathrm{E} \left[\mathrm{E} \left[ \mathrm{E} \left[\mathbf{1}\{ (A,\tilde B, \tilde C) \notin \cF\} \middle|A, \tilde B \right]^{M_2} \middle| A \right]^{M_1}\right]\\
        &= \sum_{\substack{a\in\cA:\\ P_{A}(a) > 0}} P_{A}(a) \left(\sum_{b\in\cB}P_{\tilde B}(b)\right.\\
        &\quad \times \left. \left(1-\sum_{\substack{c\in\cC:\\ (a,b, c) \in \cF}} P_{\tilde C|\tilde B}(c|b) \right)^{M_2} \right)^{M_1}\\
        &\overset{\rm (a)}{\leq} \sum_{\substack{a\in\cA:\\ P_{A}(a)>0}} P_{A}(a) \left(\sum_{b\in\cB}P_{\tilde B}(b)\right.\\
        &\quad \left. \times  \left(1 - \alpha_2 \sum_{\substack{c\in\cC: \\(a,b, c) \in \cF,\\ P_{B|A}(b|a)>0}} \psi_2(a,b,c) P_{C|AB}(c|a,b) \right)^{M_2} \right)^{M_1}\\
        &\overset{\rm (b)}{\leq} \sum_{\substack{a\in\cA:\\ P_{A}(a)>0}} P_{A}(a) \left(\sum_{b\in\cB}P_{\tilde B}(b) \left(1 - \sum_{\substack{c\in\cC: \\(a,b, c) \in \cF,\\ P_{B|A}(b|a)>0}} \psi_2(a,b,c) \right. \right.\\
        &\quad \left. \left. \times P_{C|AB}(c|a,b) +e^{-\alpha_2 M_2} \right) \right)^{M_1}\\
        &\overset{\rm (c)}{\leq} \sum_{\substack{a\in\cA:\\ P_{A}(a)>0}} P_{A}(a) \left(1-\sum_{\substack{b\in\cB:\\ P_{B|A}(b|a)>0}} \alpha_1\psi_1(a,b)P_{B|A}(b|a)\right. \\
        &\quad \times \left. \sum_{\substack{c\in\cC: \\ (a,b, c) \in \cF}} \psi_2(a,b,c) P_{C|AB}(c|a,b) + e^{-\alpha_2 M_2} \right)^{M_1}\\
        &= \sum_{\substack{a\in\cA:\\ P_{A}(a)>0}} P_{A}(a) \left(1-\alpha_1 \left(\sum_{\substack{(b, c)\in\cB\times\cC:\\ (a,b, c) \in \cF,\\ P_{B|A}(b|a)>0}} \psi_1(a,b)\psi_2(a,b,c) \right. \right.\\
        &\quad \times \left. \left. P_{BC|A}(b,c|a) -e^{-\alpha_2 M_2-\log\alpha_1} \right) \right)^{M_1},
    \end{align*}
    \mathindent=7mm
    where (a) comes from (\ref{equ: alpha_2psi_2condition}),
    (b) follows since $(1-xy)^M \leq 1 - y + e^{-xM}$ for $0\leq x, y \leq 1$ and $M>0$ (cf.\ \cite[Lemma 10.5.3]{cover2006eit}), and (c) comes from (\ref{equ: alpha_1psi_1condition}). Since the probability is not greater than $1$, we have
    \mathindent=0mm
    \begin{align*}
        &\mathrm{E} \left[\mathrm{E} \left[ \mathrm{E} \left[\mathbf{1}\{ (A,\tilde B, \tilde C) \notin \cF\} \middle|A, \tilde B \right]^{M_2} \middle| A \right]^{M_1}\right]\\
        &\leq \sum_{\substack{a\in\cA:\\ P_{A}(a)>0}} P_{A}(a) \left(1-\alpha_1 \left|\sum_{\substack{(b, c)\in\cB\times\cC: \\ (a,b, c) \in \cF,\\ P_{B|A}(b|a)>0}} \psi_1(a,b)\psi_2(a,b,c) \right. \right.\\
        &\quad \times \left. \left.  P_{BC|A}(b,c|a) - e^{-\alpha_2 M_2-\log\alpha_1}\right|^{+} \right)^{M_1}\\
        &\overset{\rm (a)}{\leq} \sum_{\substack{a\in\cA:\\ P_{A}(a)>0}} P_{A}(a) \left(1 - \left|\sum_{\substack{(b,c)\in\cB\times\cC: \\ (a,b, c) \in \cF, \\ P_{B|A}(b|a)>0}} \psi_1(a,b) \psi_2(a,b,c) \right. \right.\\
        &\quad \left. \left. \times P_{BC|A}(b,c|a) - e^{-\alpha_2 M_2-\log\alpha_1}\right|^+ +e^{-\alpha_1 M_1} \right)\\
        &\leq 1-\sum_{\substack{(a,b,c)\in\cA\times\cB\times\cC: \\ (a,b,c) \in \cF,P_{AB}(a,b)>0}} \psi_1(a,b) \psi_2(a,b,c) P_{ABC}(a,b,c)\\
        & \quad +e^{-\alpha_2 M_2-\log\alpha_1} + e^{-\alpha_1 M_1}\\
        &\leqo{(b)} 1- \mathrm{E}[\psi_1(A,B) \psi_2(A,B,C)] + \Pr\left\{(A,B,C) \notin \cF\right\}\\
        &\quad +e^{-\alpha_2 M_2-\log\alpha_1} + e^{-\alpha_1 M_1},
    \end{align*}
    \mathindent=7mm
    where $|x|^+ = \max\{0,x\}$, (a) follows since $(1-xy)^M \leq 1 - y + e^{-xM}$ for $0\leq x, y \leq 1$ and $M>0$, and (b) comes from the fact that $\psi_1(a, b) \psi_2(a, b, c) \leq 1$.
\end{proof}

The importance of this lemma is to be able to change RVs from $(A, \tilde B, \tilde C)$ to arbitrary correlated RVs $(A, B, C)$. This makes it possible to characterize an inner bound on pairs of numbers of codewords and the rate-distortion region.

Lemma \ref{lem: Bound for Covering Lemma} can be regarded as an extended version of our previous lemma \cite[Lemma 1]{matsuta2016new} to multiple correlated RVs. Hence, like the previous lemma, by changing functions and constants, it gives many types of bounds such as the following two corollaries.
\begin{cor}
    For any set $\cF\subseteq \cA\times \cB\times \cC$, any real numbers $\gamma_1,\gamma_2 \in \mathbb{R}$, and any integers $M_1,M_2 \geq 1$ such that $M_1\geq \exp(\gamma_1)$ and $M_2\geq \exp(\gamma_2)$, we have
    \begin{align*}
        &\mathrm{E}\left[\mathrm{E}\left[ \mathrm{E}\left[\mathbf{1}\{ (A,\tilde B, \tilde C) \notin \cF\} \middle|A, \tilde B \right]^{M_2} \middle| A \right]^{M_1}\right]\\
        &\leq \Pr\left\{ \log\frac{P_{B|A}(B|A)}{P_{\tilde B}(B)} > \log M_1 - \gamma_1\right.\\
        &\quad \left. \mbox{ or } \log\frac{P_{C|AB}(C|A,B)}{P_{\tilde C|\tilde B}(C|B)} > \log M_2 - \gamma_2\right\}\\
        &\quad + \Pr\left\{(A,B,C) \notin \cF\right\} + e^{-\exp(\gamma_2)-\gamma_1+\log M_1} +e^{-\exp(\gamma_1)}.
    \end{align*}
\end{cor}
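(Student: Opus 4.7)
The plan is to specialize Lemma~\ref{lem: Bound for Covering Lemma} by choosing the constants $\alpha_1,\alpha_2$ and the functions $\psi_1,\psi_2$ so that the four error terms in the corollary appear. The natural choice is
$$\alpha_1 = \frac{\exp(\gamma_1)}{M_1}, \qquad \alpha_2 = \frac{\exp(\gamma_2)}{M_2},$$
which lie in $[0,1]$ precisely because $M_1 \geq \exp(\gamma_1)$ and $M_2 \geq \exp(\gamma_2)$. With this choice, $-\alpha_1 M_1 = -\exp(\gamma_1)$ and $-\alpha_2 M_2 - \log\alpha_1 = -\exp(\gamma_2) - \gamma_1 + \log M_1$, matching the last two exponential terms on the right-hand side exactly.

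For the indicator functions I take
$$\psi_1(a,b) = \mathbf{1}\left\{\log\frac{P_{B|A}(b|a)}{P_{\tilde B}(b)} \leq \log M_1 - \gamma_1\right\},$$
$$\psi_2(a,b,c) = \mathbf{1}\left\{\log\frac{P_{C|AB}(c|a,b)}{P_{\tilde C|\tilde B}(c|b)} \leq \log M_2 - \gamma_2\right\}.$$
I would then verify the hypotheses \eqref{equ: alpha_1psi_1condition} and \eqref{equ: alpha_2psi_2condition}. When an indicator equals $0$ the inequality is trivial. When $\psi_1(a,b)=1$, exponentiating its defining inequality gives $P_{B|A}(b|a) \leq (M_1/\exp(\gamma_1)) P_{\tilde B}(b)$, i.e.\ $\alpha_1 P_{B|A}(b|a) \leq P_{\tilde B}(b)$; multiplying by $P_A(a)$ yields $\alpha_1 \psi_1(a,b) P_{AB}(a,b) \leq P_A(a) P_{\tilde B}(b)$. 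The check for $\psi_2$ is identical after multiplying by $P_{AB}(a,b)$ in place of $P_A(a)$.

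Finally, since both indicators take values in $\{0,1\}$, $\psi_1(A,B)\psi_2(A,B,C) = 1$ iff both threshold conditions hold, so
$$1 - \mathrm{E}[\psi_1(A,B)\psi_2(A,B,C)] = \Pr\left\{\log\frac{P_{B|A}(B|A)}{P_{\tilde B}(B)} > \log M_1 - \gamma_1 \text{ or } \log\frac{P_{C|AB}(C|A,B)}{P_{\tilde C|\tilde B}(C|B)} > \log M_2 - \gamma_2\right\}.$$
Substituting this and the chosen $\alpha_1,\alpha_2$ into the conclusion \eqref{equ: lem: bound for covering lemma} of Lemma~\ref{lem: Bound for Covering Lemma} immediately produces the stated bound. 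There is no genuine obstacle here; the proof is essentially a bookkeeping exercise, and the only design decision is picking $\alpha_1,\alpha_2$ to line up the exponential terms with $\exp(\gamma_1)$ and $\exp(\gamma_2)$.
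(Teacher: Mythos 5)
Your proof is correct and follows exactly the same route as the paper: the same choice of $\alpha_1=\exp(\gamma_1)/M_1$, $\alpha_2=\exp(\gamma_2)/M_2$, the same indicator functions $\psi_1,\psi_2$, verification of conditions \eqref{equ: alpha_1psi_1condition} and \eqref{equ: alpha_2psi_2condition}, and substitution into Lemma~\ref{lem: Bound for Covering Lemma}. The paper states the verification step as ``easy to check,'' whereas you spell it out; otherwise there is no difference.
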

\begin{proof}
    Let $\alpha_1 = \frac{\exp(\gamma_1)}{M_1}$, $\alpha_2 = \frac{\exp(\gamma_2)}{M_2}$,
    \begin{align*}
        \psi_1(a,b) &= \mathbf{1}\left\{\log\frac{P_{B|A}(b|a)}{P_{\tilde B}(b)} \leq \log M_1 - \gamma_1 \right\},\\
        \psi_2(a,b,c) &= \mathbf{1}\left\{\log\frac{P_{C|AB}(c|a,b)}{P_{\tilde C|\tilde B}(c|b)} \leq \log M_2 - \gamma_2\right\}.
    \end{align*}
    Then, we can easily check that these constants and functions satisfy \eqref{equ: alpha_1psi_1condition} and \eqref{equ: alpha_2psi_2condition}. Plugging these functions and constants into \eqref{equ: lem: bound for covering lemma}, we have the desired bound.
\end{proof}

This corollary can be regarded as a bound in terms of the \textit{information spectrum} (cf.\ \cite{hanspringerinformation}). To the best of our knowledge, this type of bound has not been reported so far (although, there are some converse bounds \cite[Lemma 15]{zhou2017second} and \cite[Theorem 3]{kostina2017rate-distortion}).

On the other hand, the next corollary gives a bound in terms of the smooth max R\'enyi divergence $D_{\infty}^{\delta}(P\| Q)$ defined as
\begin{align*}
    D_{\infty}^{\delta}(P\|Q) \teq \inf_{\substack{\psi:\cA \to [0,1]:\\  \sum_{a \in \cA}\psi(a) P(a)\geq 1-\delta}} \left| \log \sup_{a \in \cA}\frac{\psi(a)P(a)}  {Q(a)} \right|^{+},
\end{align*}
where $|x|^{+} = \max\{0, x\}$.
\begin{cor}
    \label{cor: bound by smooth max renyi divergence}
    For any set $\cF\subseteq \cA\times \cB\times \cC$, any real numbers $\delta_1,\delta_2 \geq 0$, and any integers $M_1, M_2 \geq 1$, we have
    \begin{align*}
        &\mathrm{E}\left[\mathrm{E}\left[ \mathrm{E}\left[\mathbf{1}\{ (A,\tilde B, \tilde C) \notin \cF\} \big|A, \tilde B \right]^{M_2} \Big| A \right]^{M_1}\right]\\
        &\leq \delta_1 + \delta_2 + \Pr\left\{(A,B,C) \notin \cF\right\}\notag\\
        &\quad +e^{-\exp(- D_{\infty}^{\delta_2}(P_{ABC}\|P_{AB}\times P_{\tilde C|\tilde B}))M_2
          + D_{\infty}^{\delta_1}(P_{AB}\|P_{A}\times P_{\tilde B})}\\
        &\quad + e^{-\exp(-D_{\infty}^{\delta_1}(P_{AB}\|P_{A}\times P_{\tilde B})) M_1}.
    \end{align*}
\end{cor}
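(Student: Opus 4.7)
The plan is to specialize Lemma~\ref{lem: Bound for Covering Lemma} by choosing $\psi_1, \psi_2, \alpha_1, \alpha_2$ in a way dictated by the definition of the smooth max R\'enyi divergence. Fix $\eta > 0$. By definition of $D_{\infty}^{\delta_1}(P_{AB}\|P_{A}\times P_{\tilde B})$, there exists a function $\psi_1:\cA\times\cB\to[0,1]$ such that $\mathrm{E}[\psi_1(A,B)] = \sum_{(a,b)}\psi_1(a,b)P_{AB}(a,b) \geq 1-\delta_1$ and
\begin{align*}
    \sup_{(a,b)} \frac{\psi_1(a,b)P_{AB}(a,b)}{P_{A}(a)P_{\tilde B}(b)} \leq \exp(D_{\infty}^{\delta_1}(P_{AB}\|P_{A}\times P_{\tilde B}) + \eta).
\end{align*}
Setting $\alpha_1 = \exp(-(D_{\infty}^{\delta_1}(P_{AB}\|P_{A}\times P_{\tilde B})+\eta))$ then gives condition~\eqref{equ: alpha_1psi_1condition}. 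Analogously, choose $\psi_2:\cA\times\cB\times\cC\to[0,1]$ attaining $D_{\infty}^{\delta_2}(P_{ABC}\|P_{AB}\times P_{\tilde C|\tilde B})$ within $\eta$ and set $\alpha_2 = \exp(-(D_{\infty}^{\delta_2}(P_{ABC}\|P_{AB}\times P_{\tilde C|\tilde B})+\eta))$ to satisfy \eqref{equ: alpha_2psi_2condition}.

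Next, I plug these choices into the bound \eqref{equ: lem: bound for covering lemma}. The exponential terms become $e^{-\alpha_1 M_1} = \exp(-\exp(-D_{\infty}^{\delta_1}(P_{AB}\|P_{A}\times P_{\tilde B})-\eta)M_1)$ and $e^{-\alpha_2 M_2 - \log \alpha_1} = \exp(-\exp(-D_{\infty}^{\delta_2}(P_{ABC}\|P_{AB}\times P_{\tilde C|\tilde B})-\eta) M_2 + D_{\infty}^{\delta_1}(P_{AB}\|P_{A}\times P_{\tilde B})+\eta)$, which match the desired form after sending $\eta\to 0$.

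The remaining step is to show $1 - \mathrm{E}[\psi_1(A,B)\psi_2(A,B,C)] \leq \delta_1 + \delta_2$. Since $\psi_1,\psi_2\in[0,1]$, we use the pointwise identity $1 - \psi_1\psi_2 = (1-\psi_1) + \psi_1(1-\psi_2) \leq (1-\psi_1)+(1-\psi_2)$, whence taking expectation yields $1 - \mathrm{E}[\psi_1\psi_2] \leq (1 - \mathrm{E}[\psi_1(A,B)]) + (1 - \mathrm{E}[\psi_2(A,B,C)]) \leq \delta_1 + \delta_2$. Combining this with the reformulated exponential terms and letting $\eta\downarrow 0$ (using continuity of $\exp$) delivers the claimed inequality.

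The only subtlety I anticipate is that the infimum in the definition of $D_{\infty}^{\delta}$ may not be attained, which is why I introduced the slack parameter $\eta$ and took a limit at the end; this is a standard device and should not cause any difficulty. All other steps are direct applications of Lemma~\ref{lem: Bound for Covering Lemma} and elementary algebraic manipulations, so there is no substantive obstacle in the argument.
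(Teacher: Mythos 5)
Your proposal is correct and follows essentially the same approach as the paper: choose near-optimal $\psi_1,\psi_2$ from the infimum in the definition of $D_\infty^\delta$ (with slack $\eta$, matching the paper's $\epsilon$), set $\alpha_i$ to make conditions \eqref{equ: alpha_1psi_1condition}--\eqref{equ: alpha_2psi_2condition} hold, bound $1-\mathrm{E}[\psi_1\psi_2]\le\delta_1+\delta_2$ via the elementary pointwise inequality (equivalent to the paper's $xy\ge x+y-1$), plug into Lemma~\ref{lem: Bound for Covering Lemma}, and let the slack go to zero. The only cosmetic difference is that the paper sets $\alpha_i = \exp(-\bar D_\infty^{\delta_i}(\cdot\,|\,\psi_i))$ directly while you set $\alpha_i = \exp(-(D_\infty^{\delta_i}+\eta))$, which is a slightly smaller but still admissible constant; both converge to the same bound.
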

\begin{proof}
    For an arbitrarily fixed $\epsilon > 0$, let $\psi_1$ and $\psi_2$ be functions such that
    $\mathrm{E}[\psi_1(A,B)] \geq 1-\delta_1$, $\mathrm{E}[\psi_2(A,B,C)] \geq 1-\delta_2$,
    \begin{align}
        \bar D_{\infty}^{\delta_1}(P_{AB}\|P_{A}\times P_{\tilde B}| \psi_{1})\leq D_{\infty}^{\delta_1}(P_{AB}\|P_{A}\times P_{\tilde B}) + \epsilon,
        \label{equ: cor: proof: bar D1 leq D1}
    \end{align}
    and
    \begin{align}
        &\bar D_{\infty}^{\delta_2}(P_{ABC}\|P_{AB}\times P_{\tilde C|\tilde B}| \psi_{2})\notag\\
        &\leq D_{\infty}^{\delta_2}(P_{ABC}\|P_{AB}\times P_{\tilde C|\tilde B}) + \epsilon,
        \label{equ: cor: proof: bar D2 leq D2}
    \end{align}
    where
    \begin{align*}
        \bar D_{\infty}^{\delta}(P\| Q| \psi) = \left| \log \sup_{a \in \cA}\frac{\psi(a)P(a)}  {Q(a)} \right|^{+}.
    \end{align*}
    Then, we have
    \begin{align}
        &\mathrm{E}[\psi_1(A,B)\psi_2(A,B,C)] \notag\\
        &\geqo{(a)} \mathrm{E}[\psi_1(A,B)]+\mathrm{E}[\psi_2(A,B,C)] - 1 \notag\\
        &\geq 1 - \delta_1 - \delta_2,
        \label{equ: ineq for psi1 and psi2 for renyi divergence}
    \end{align}
    where (a) follows since $x y \geq x + y -1$ for $x,y \in [0, 1]$.

    On the other hand, let $\alpha_1$ and $\alpha_2$ be constants such that
    \begin{align*}
        \alpha_1 &= \exp\left(-\bar D_{\infty}^{\delta_1}(P_{AB}\|P_{A}\times P_{\tilde B}| \psi_{1})\right),\\
        \alpha_2 &= \exp\left(- \bar D_{\infty}^{\delta_2}(P_{ABC}\|P_{AB}\times P_{\tilde C|\tilde B}| \psi_{2})\right).
    \end{align*}
    Then, for any $(a,b,c)\in \cA\times\cB\times\cC$, we have
    \begin{align*}
        &\alpha_1 \psi_1(a,b)P_{AB}(a, b)\\
        & \leq \left(\inf_{(a, b) \in \cA \times \cB} \frac{P_{A}(a) P_{\tilde B}(b)}{\psi_{1}(a, b)P_{AB}(a, b)}\right) \psi_1(a,b)P_{AB}(a, b)\\
        &\leq P_{A}(a) P_{\tilde B}(b),
    \end{align*}
    and
    \begin{align*}
        &\alpha_2 \psi_2(a,b,c)P_{ABC}(a, b, c)\\
        &\leq \left(\inf_{(a, b, c) \in \cA \times \cB \times \cC} \frac{P_{AB}(a, b) P_{\tilde C|\tilde B}(c|b)}{\psi_{2}(a, b, c)P_{ABC}(a, b, c)}\right)\\
        &\quad \times \psi_2(a,b,c)P_{ABC}(a, b, c)\\
        &\leq P_{AB}(a, b) P_{\tilde C|\tilde B}(c|b).
    \end{align*}
    Thus, $\psi_{1}$, $\psi_{2}$, $\alpha_{1}$, and $\alpha_{2}$ satisfy \eqref{equ: alpha_1psi_1condition} and \eqref{equ: alpha_2psi_2condition}.

    Plugging these functions and constants into \eqref{equ: lem: bound for covering lemma}, we have 
    \begin{align*}
        &\mathrm{E}\left[\mathrm{E}\left[ \mathrm{E}\left[\mathbf{1}\{ (A,\tilde B, \tilde C) \notin \cF\} \big|A, \tilde B \right]^{M_2} \Big| A \right]^{M_1}\right]\\
        &\leq \delta_1 + \delta_2 + \Pr\left\{(A,B,C) \notin \cF\right\}\notag\\
        &\quad +e^{-\exp(- D_{\infty}^{\delta_2}(P_{ABC}\|P_{AB}\times P_{\tilde C|\tilde B}) - \epsilon)M_2}\\
        &\quad\times e^{D_{\infty}^{\delta_1}(P_{AB}\|P_{A}\times P_{\tilde B}) + \epsilon}\\
        &\quad +e^{-\exp(- D_{\infty}^{\delta_1}(P_{AB}\|P_{A}\times P_{\tilde B}) - \epsilon) M_1},
    \end{align*}
    where we use inequalities \eqref{equ: cor: proof: bar D1 leq D1}, \eqref{equ: cor: proof: bar D2 leq D2}, and \eqref{equ: ineq for psi1 and psi2 for renyi divergence}. Since $\epsilon > 0$ is arbitrary, this completes the proof.
\end{proof}
\begin{rem}
    The original definition of the smooth max R\'enyi divergence (cf.\ \cite{uyematsu2014ITWrevisiting}) is as follows:
    \mathindent=0mm
    \begin{align*}
        D_{\infty}^{\delta}(P\|Q)^{-} \teq \inf_{\substack{\psi:\cA \to [0,1]:\\  \sum_{a \in \cA}\psi(a) P(a)\geq 1-\delta}} \log \sup_{\cB \subseteq \cA}\frac{\sum_{b \in \cB}\psi(b)P(b)}{\sum_{b \in \cB}Q(b)}.
    \end{align*}
    \mathindent=7mm
    Since for non-negative real valued functions $f(b)$ and $g(b)$, it holds that (cf.\ e.g.\ \cite[Lemma 16.7.1]{cover2006eit})
    \begin{align*}
        \frac{\sum_{b\in\cB} f(b)}{\sum_{b\in\cB} g(b)} \leq \sup_{b\in\cB}\frac{f(b)}{g(b)},
    \end{align*}
    $D_{\infty}^{\delta}(P\|Q)^{-}$ can be simply defined as
    \begin{align*}
        D_{\infty}^{\delta}(P\|Q)^{-} \teq \inf_{\substack{\psi:\cA \to [0,1]:\\  \sum_{a \in \cA}\psi(a) P(a)\geq 1-\delta}} \log \sup_{a \in \cA}\frac{\psi(a)P(a)}  {Q(a)}.
    \end{align*}
    In this definition, it may be a negative value depending on $\delta$. Since this case is meaningless in this study, we adopt $D_{\infty}^{\delta}(P\|Q)$. Here we also note that
    \begin{align}
        D_{\infty}^{\delta}(P\|Q) = \left| D_{\infty}^{\delta}(P\|Q)^{-} \right|^{+}.
        \label{rem: equ: D = |D-|+}
    \end{align}
\end{rem}

\section{Inner and Outer Bounds on the Set of Pairs of Numbers of Codewords}
\label{sec: inner and outer bounds}
In this section, we give outer and inner bounds on $\cM(D,\epsilon| X)$ by using the smooth max R\'enyi divergence.

First of all, we show a bound on the probability of the two events $\cE_{1}(D_1)$ and $\cE_{2}(D_2)$ for the successive refinement problem. In what follows, let $\cU$ be an arbitrary finite or countably infinite set.
\begin{thm}
    \label{thm: existance of code for SR}
    For a source $X$, let $(\tilde U, \tilde Y, \tilde Z) \in \cU\times\cY\times\cZ$ be RVs such that $(\tilde U, \tilde Y, \tilde Z)$ is independent of $X$. Then, for any real numbers $D_1$, $D_2 \geq 0$, there exists a code $(f_1,f_2,\varphi_1,\varphi_2)$ such that numbers of codewords of encoder 1 and encoder 2 are $M_1$ and $M_2$, respectively, and
    \begin{align*}
        &\Pr\{\cE_{1}(D_1)\cup\cE_{2}(D_2)\}\\
        &\leq \mathrm{E}\left[\mathrm{E} \left[ \mathrm{E} \left[\mathbf{1}\{ (X, \tilde U, \tilde Y, \tilde Z) \notin \cD\} \middle| X, \tilde U, \tilde Y \right]^{{M_{2}}} \middle| X \right]^{M_{1}}\right],
    \end{align*}
    where
    \begin{align*}
        \cD &= \left\{(x,u,y,z)\in\cX\times\cU\times\cY\times\cZ:\right.\\
        &\quad \left. d_1(x, y)\leq D_1, d_2(x, z)\leq D_2\right\}.
    \end{align*}    
\end{thm}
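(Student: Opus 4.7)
The plan is to apply the covering Lemma~\ref{lem: Covering Lemma} via a standard random-coding and expurgation argument. First I would match the variables of the lemma to those of the theorem: set $A = X$, bundle $\tilde B = (\tilde U, \tilde Y) \in \cU \times \cY$, $\tilde C = \tilde Z \in \cZ$, and define $\cF = \{(x,(u,y),z)\in \cX\times(\cU\times\cY)\times\cZ : (x,u,y,z) \in \cD\}$. The independence hypothesis of the lemma, namely $(\tilde B, \tilde C) \perp A$, is exactly the assumed independence of $(\tilde U, \tilde Y, \tilde Z)$ from $X$, and the conditional law of $\tilde C$ given $\tilde B$ coincides with $P_{\tilde Z | \tilde U \tilde Y}$.

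Next I would build a random codebook in two layers. For each $i \in [1:M_1]$ draw $(\tilde U_i, \tilde Y_i) \sim P_{\tilde U \tilde Y}$ independently, and for each $i$ and each $j \in [1:M_2]$ draw $\tilde Z_{i,j} \sim P_{\tilde Z|\tilde U \tilde Y}(\cdot|\tilde U_i, \tilde Y_i)$ independently, with everything mutually independent and independent of $X$. I would then define the encoders by joint search: given $x$, if there exists a pair $(i,j)$ with $(x, \tilde U_i, \tilde Y_i, \tilde Z_{i,j}) \in \cD$, set $f_1(x) = i$ and $f_2(x) = j$ for one (say lexicographically smallest) such pair; otherwise set $(f_1(x), f_2(x)) = (1,1)$. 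The decoders are the obvious look-up rules $\varphi_1(i) = \tilde Y_i$ and $\varphi_2(i,j) = \tilde Z_{i,j}$. Once the codebook is fixed, $f_1, f_2$ are bona fide functions of $X$ alone.

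With this construction, whenever the joint covering succeeds — i.e.\ some pair $(i,j)$ is chosen — the definition of $\cD$ yields both $d_1(X, \tilde Y_{f_1(X)}) \leq D_1$ and $d_2(X, \tilde Z_{f_1(X), f_2(X)}) \leq D_2$, so $\cE_1(D_1) \cup \cE_2(D_2)$ is contained in the failure event $\bigcap_{i=1}^{M_1}\bigcap_{j=1}^{M_2}\{(X, \tilde U_i, \tilde Y_i, \tilde Z_{i,j}) \notin \cD\}$. Taking expectation over the random codebook and invoking Lemma~\ref{lem: Covering Lemma} with the identification above, the codebook-averaged probability of $\cE_1(D_1) \cup \cE_2(D_2)$ is at most the RHS of the theorem. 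A standard probabilistic-method argument then extracts a deterministic codebook — and hence an actual code $(f_1, f_2, \varphi_1, \varphi_2)$ with exactly $M_1$ and $M_2$ codewords — meeting the claimed bound.

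The step requiring most care is not a difficulty but a bookkeeping observation: the auxiliary $\tilde U$ never appears in the code itself, only in the random codebook distribution and the analysis. Bundling $(\tilde U, \tilde Y)$ into the outer codeword $\tilde B$ is essential for the three-level covering lemma to reproduce the triply-nested expectation on the RHS, and one must verify that the distortion-based success condition genuinely depends only on $(x,y,z)$ while the sampling structure still uses the full joint law of $(\tilde U, \tilde Y, \tilde Z)$; this ensures the encoder's selection rule is well-defined and that the failure event matches the event appearing in Lemma~\ref{lem: Covering Lemma}.
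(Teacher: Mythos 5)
Your proposal is correct and follows essentially the same route as the paper's proof: identical variable identification ($A = X$, $\tilde B = (\tilde U,\tilde Y)$, $\tilde C = \tilde Z$, $\cF = \cD$), the same two-layer superposition codebook, the same joint-search encoder with a default to $(1,1)$, the same lookup decoders, and the same invocation of Lemma~\ref{lem: Covering Lemma} followed by the standard existence argument. The only cosmetic difference is that the paper phrases the encoder's search criterion via the two distortion inequalities and writes the codebook-averaged error probability as an \emph{equality} with the joint failure event, whereas you phrase the criterion via membership in $\cD$ (equivalent, since $\cD$ does not constrain $u$) and settle for the containment $\cE_1(D_1)\cup\cE_2(D_2)\subseteq\bigcap_{i,j}\{(X,\tilde U_i,\tilde Y_i,\tilde Z_{i,j})\notin\cD\}$, which is all the bound requires.
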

\begin{proof}
    We generate $(\tilde u_1, \tilde y_1), (\tilde u_2, \tilde y_2),\cdots,(\tilde u_{M_{1}}, \tilde y_{M_{1}}) \in\cU\times\cY$
    independently subject to the probability distribution $P_{\tilde U\tilde Y}$,
    and define the set $\cC_{1} \teq \{(\tilde u_1, \tilde y_1), (\tilde u_2, \tilde y_2),\cdots,(\tilde u_{M_{1}}, \tilde y_{M_{1}})\}$.
    For any $i\in[1:M_{1}]$, we generate $\tilde z_{i,1},\tilde z_{i,2},\cdots,\tilde z_{i,M_2}\in\cZ$
    independently subject to the probability distribution $P_{\tilde Z|\tilde U\tilde Y}(\cdot|u_{i},y_{i})$,
    and define the set $\cC_{2,i} \teq \{\tilde z_{i,1},\tilde z_{i,2},\cdots,\tilde z_{i,M_2}\}$.
    We denote $\{\cC_{2,1},\cC_{2,2},\cdots,\cC_{2,M_1}\}$ as $\cC_2$.
    For a given set $\cC_{1}$, $\cC_{2}$ and a given symbol $x\in\cX$, we
    choose $i \in [1: M_1]$ and $j \in [1: M_2]$ such that
    \begin{align*}
        d_1(x,\tilde y_{i}) \leq D_1 \mbox{ and }
        d_2(x,\tilde z_{i,j}) \leq D_2.
    \end{align*}
    If there does not exist such pair, we set $(i,j)=(1,1)$.
    For this pair, we define encoders $f_1$ and $f_2$ as
    \begin{align*}
        f_1(x) = i \mbox{ and }
        f_2(x) = j.
    \end{align*}
    On the other hand, we define decoders $\varphi_1$ and $\varphi_2$ as
    \begin{align*}
        \varphi_{1}(i) = \tilde y_{i} \mbox{ and }
        \varphi_{2}(i,j) = \tilde z_{i,j}.
    \end{align*}
    
    By taking the average over the random selection of $C_{1}$ and $C_{2}$,
    the average probability of $\Pr\{\cE_1(D_1)\cup\cE_2(D_2)\}$ is as follows:
    \begin{align*}
        & \mathrm{E}[\Pr\{\cE_1(D_1)\cup\cE_2(D_2)\}]\\
        %
        &= \Pr\left\{\bigcap_{i=1}^{M_{1}} \bigcap_{j=1}^{M_{2}}
            \{d_1(X,\tilde Y_{i})>D_1 \mbox{ or } d_2(X, \tilde Z_{i,j})>D_2\}\right\}\\
        &= \Pr\left\{\bigcap_{i=1}^{M_{1}} \bigcap_{j=1}^{M_{2}}
            \{(X,\tilde U_i, \tilde Y_i, \tilde Z_{i,j}) \notin \cD \}\right\},
    \end{align*}
    where $\{(\tilde U_i, \tilde Y_i, \tilde Z_{i,j})\}$ denote randomly selected sequences in $C_1$ and $C_2$.
    Now, by noting that $\tilde Z_{i,j}$ is generated for a given $(\tilde U_i,\tilde Y_i)$,
    the theorem follows from Lemma \ref{lem: Covering Lemma} by setting that $A=X$, $\tilde B=(\tilde U, \tilde Y)$, and $\tilde C=\tilde Z$.
\end{proof}
\begin{rem}
    This proof is valid even without the RV $\tilde U$. This auxiliary RV is introduced merely for consistency with the outer bound. However, the following intuitive interpretation is possible: $\tilde U$ is partial information of $\tilde Z$ transmitted to two decoders. In an extreme case, if we set $\tilde U = \tilde Z$, $M_2$ has no effect on the bound of Theorem \ref{thm: existance of code for SR}. Hence, we can make $M_2 = 1$. On the other hand, $M_1$ must be increased to satisfy a given probability level. If we set $\tilde U \neq \tilde Z$, $M_1$ may be decreased by increasing $M_2$. In other words, $\tilde U$ manages the balance of the numbers of codewords. This intuition may be useful to set the numbers of codewords in the actual code construction.
\end{rem}

We use the next notation for the sake of simplicity.
\begin{defn}
    For RVs $(A, B, C)$, we define
    \begin{align*}
        I_{\infty}^{\delta}(A; B) &\teq D_{\infty}^{\delta}(P_{AB}\|P_{A}\times P_{B}),\\
        I_{\infty}^{\delta}(A; B| C) &\teq D_{\infty}^{\delta}(P_{ABC}\|P_{AC}\times P_{B|C}).
    \end{align*}
\end{defn}
We also define the following set of probability distributions for a given source $X$ and constants $D$ and $\epsilon$.
\begin{align*}
    \cP(D, \epsilon| X) &\teq \{P_{UYZ|X} \in \cP_{\cU\cY\cZ|\cX}:\\ &\quad \Pr\{d_1(X,Y)>D_1\} \leq \epsilon_1,\\
    &\quad \Pr\{d_2(X,Z)>D_2\} \leq \epsilon_2\}.
\end{align*}
We note that $\cP(D, \epsilon| X)$ depends on the set $\cU$.

Now, by using the above theorem, we give an inner bound on $\cM(D,\epsilon| X)$.
\begin{thm}[Inner bound]
    \label{thm: Cond for M1 M2}
    For a source $X$, real numbers $D_1$, $D_2 \geq 0$ and $\epsilon_1,\epsilon_2 > 0$, and any set $\cU$, we have
    \mathindent=0mm
    \begin{align*}
        \cM(D,\epsilon| X) \supseteq \bigcup_{(\delta, \beta, \gamma) \in \cS(\epsilon)}
        \bigcup_{P_{UYZ|X} \in \cP(D, \gamma| X)} \cM_{\rm I}(\delta, \beta, P_{UYZ|X}),
    \end{align*}
    \mathindent=7mm
    where $\delta = (\delta_{1}, \delta_{2})$, $\beta = (\beta_{1}, \beta_{2})$, $\gamma = (\gamma_{1}, \gamma_{2})$,
    \begin{align}
        &\cS(\epsilon) \teq \left\{(\delta, \beta, \gamma) \in (0, 1]^{6}: \delta_1 + \delta_2 +\gamma_1 + \gamma_2 \right.\notag\\
        &\quad \left. + \beta_1 + \beta_2 \leq \min\{\epsilon_1,\epsilon_2\}\right\},
        \label{equ: cond of d g b e}\\
        &\cM_{\rm I}(\delta, \beta, P_{UYZ|X}) \teq \left\{(M_1, M_2) \in \mathbb{N}^{2}: \right. \notag\\
        &\quad \log M_1 \geq I_{\infty}^{\delta_{1}}(X; U, Y) + \log \log\frac{1}{\beta_1},\notag\\
        &\quad \log M_2 \geq I_{\infty}^{\delta_{2}}(X; Z| U, Y) \notag\notag\\
        &\quad + \left. \log \left(I_{\infty}^{\delta_{1}}(X; U, Y) + \log\frac{1}{\beta_2} \right) \right\},
    \end{align}
    and $(X, U, Y, Z)$ is a tuple of RVs with the probability distribution $P_{X} \times P_{UYZ|X}$.
\end{thm}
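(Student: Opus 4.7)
The plan is to derive the inner bound by random coding: combine Theorem~\ref{thm: existance of code for SR} with Corollary~\ref{cor: bound by smooth max renyi divergence}, then choose $(M_1, M_2)$ so that the resulting bound does not exceed $\min\{\epsilon_1, \epsilon_2\}$. Fix $(\delta, \beta, \gamma) \in \cS(\epsilon)$ and $P_{UYZ|X} \in \cP(D, \gamma| X)$, and let $(X, U, Y, Z) \sim P_X \times P_{UYZ|X}$. Introduce auxiliary RVs $(\tilde U, \tilde Y, \tilde Z)$ independent of $X$ with marginals $P_{\tilde U \tilde Y} = P_{UY}$ and conditional $P_{\tilde Z | \tilde U \tilde Y} = P_{Z | UY}$. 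Theorem~\ref{thm: existance of code for SR} then produces a code of sizes $M_1, M_2$ with
\begin{align*}
\Pr\{\cE_1(D_1)\cup\cE_2(D_2)\} \leq \mathrm{E}\!\left[\mathrm{E}\!\left[\mathrm{E}\!\left[\mathbf{1}\{(X,\tilde U,\tilde Y,\tilde Z)\notin \cD\} \middle| X, \tilde U, \tilde Y\right]^{M_2} \middle| X\right]^{M_1}\right].
\end{align*}

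Next I would apply Corollary~\ref{cor: bound by smooth max renyi divergence} under the identification $A = X$, $B = (U, Y)$, $C = Z$, $\tilde B = (\tilde U, \tilde Y)$, $\tilde C = \tilde Z$, and $\cF = \cD$. Under this identification, $D_\infty^{\delta_1}(P_{AB}\| P_A \times P_{\tilde B}) = I_\infty^{\delta_1}(X; U, Y)$, and the choice $P_{\tilde C|\tilde B} = P_{Z|UY}$ gives $D_\infty^{\delta_2}(P_{ABC}\| P_{AB}\times P_{\tilde C|\tilde B}) = I_\infty^{\delta_2}(X; Z | U, Y)$ by the definition of the conditional smooth max R\'enyi information. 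The union bound together with $P_{UYZ|X} \in \cP(D, \gamma| X)$ yields $\Pr\{(X, U, Y, Z) \notin \cD\} \leq \gamma_1 + \gamma_2$, so that the triple expectation is upper bounded by
\begin{align*}
\delta_1 + \delta_2 + \gamma_1 + \gamma_2 + e^{-\exp(-I_\infty^{\delta_2}(X; Z|U,Y)) M_2 + I_\infty^{\delta_1}(X; U,Y)} + e^{-\exp(-I_\infty^{\delta_1}(X; U,Y)) M_1}.
\end{align*}

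The two rate hypotheses are tailored precisely to annihilate the last two exponentials. From $\log M_1 \geq I_\infty^{\delta_1}(X; U, Y) + \log\log(1/\beta_1)$ one gets $\exp(-I_\infty^{\delta_1}(X; U, Y)) M_1 \geq \log(1/\beta_1)$, hence the second exponential is at most $\beta_1$; symmetrically, $\log M_2 \geq I_\infty^{\delta_2}(X; Z|U,Y) + \log(I_\infty^{\delta_1}(X; U, Y) + \log(1/\beta_2))$ makes the first exponential at most $\beta_2$. Summing all six contributions and using \eqref{equ: cond of d g b e} gives a total of at most $\min\{\epsilon_1, \epsilon_2\}$. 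Since the random-coding average satisfies this bound, at least one deterministic code achieves $\Pr\{\cE_i(D_i)\} \leq \min\{\epsilon_1, \epsilon_2\} \leq \epsilon_i$ for $i \in \{1, 2\}$, which gives $(M_1, M_2) \in \cM(D, \epsilon| X)$ and proves the claimed inclusion.

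I do not foresee a serious obstacle; the argument is essentially bookkeeping once the right identifications are made. The only care needed is that the auxiliary triple $(\tilde U, \tilde Y, \tilde Z)$ simultaneously fulfils the independence requirement of Theorem~\ref{thm: existance of code for SR} and the matching of $P_{\tilde C | \tilde B}$ with $P_{C|B}$ needed so that Corollary~\ref{cor: bound by smooth max renyi divergence} produces the conditional information $I_\infty^{\delta_2}(X; Z|U,Y)$ exactly as it appears in the statement, and that the two elementary inversions of the $\log\log$ thresholds are performed correctly.
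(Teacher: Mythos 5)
Your argument matches the paper's proof essentially step for step: fix $(\delta,\beta,\gamma)$ and $P_{UYZ|X}$, take $(\tilde U,\tilde Y,\tilde Z)$ independent of $X$ with $P_{\tilde U\tilde Y\tilde Z}=P_{UYZ}$, combine Theorem~\ref{thm: existance of code for SR} with Corollary~\ref{cor: bound by smooth max renyi divergence} under the identification $A=X$, $B=(U,Y)$, $C=Z$, and invert the two threshold conditions on $M_1,M_2$ to make the exponential terms at most $\beta_1$ and $\beta_2$. The only cosmetic difference is that the paper argues with the minimal (ceiling) choices of $M_1,M_2$ while you carry general $M_1,M_2$ through; both are fine, so the proposal is correct.
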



\begin{proof}
    We only have to show that $(M,D)$ is $\epsilon$-achievable for $(\delta, \beta, \gamma) \in \cS(\epsilon)$, $P_{UYZ|X} \in \cP(D, \gamma| X)$, and $M_1, M_2 \geq 1$ such that
    \begin{align}
        M_1 &= \left \lceil \exp\left(I_{\infty}^{\delta_{1}}(X; U, Y)\right)\log\frac{1}{\beta_1} \right \rceil,
        \label{equ: cond of M1 M2 for SR 1}\\
        M_2 &= \left \lceil \exp\left(I_{\infty}^{\delta_{2}}(X; Z| U, Y)\right) \left(I_{\infty}^{\delta_{1}}(X; U, Y) + \log\frac{1}{\beta_2} \right) \right \rceil.
        \label{equ: cond of M1 M2 for SR 2}
    \end{align}

    To this end, let $(\tilde U, \tilde Y, \tilde Z) \in \cU \times \cY \times \cZ$ be RVs that is independent of $X$ and has the same marginal distribution as $(U, Y, Z)$, i.e., $P_{\tilde U \tilde Y \tilde Z} = P_{UYZ}$. Then, according to Theorem \ref{thm: existance of code for SR}, there exists a code $(f_1,f_2,\varphi_1, \varphi_2)$ such that numbers of codewords of encoder 1 and encoder 2 are $M_1$ and $M_2$, respectively, and
    \begin{align}
        &\max\{\Pr\{\cE_{1}(D_1)\}, \Pr\{\cE_{2}(D_2)\}\}\notag\\
        &\leq \mathrm{E}\left[\mathrm{E}\left[ \mathrm{E}\left[\mathbf{1}\{ (X, \tilde U, \tilde Y, \tilde Z) \notin \cD\} \middle|X, \tilde U, \tilde Y \right]^{{M_{2}}} \middle| X \right]^{M_{1}}\right].
        \label{equ: achievability bound 1 for SR}
    \end{align}
    
    On the other hand, according to Corollary \ref{cor: bound by smooth max renyi divergence},
    we have
    \begin{align}
        &\mathrm{E}\left[\mathrm{E}\left[ \mathrm{E}\left[\mathbf{1}\{ (X, \tilde U, \tilde Y, \tilde Z) \notin \cD\} \big|X, \tilde U, \tilde Y \right]^{{M_{2}}} \Big| X \right]^{M_{1}}\right]\notag\\
        &\leq \delta_1 + \delta_2 + \Pr\left\{(X, U, Y, Z) \notin \cD\right\}\notag\\
        &\quad +e^{-\exp(- I_{\infty}^{\delta_{2}}(X; Z| U, Y))M_2 + I_{\infty}^{\delta_{1}}(X; U, Y)}\notag\\
        &\quad + e^{-\exp(- I_{\infty}^{\delta_{1}}(X; U, Y)) M_1}\notag\\
        &\overset{\rm (a)}{\leq}  \delta_1 + \delta_2 +\gamma_1 + \gamma_2 + \beta_1 + \beta_2\notag\\
        &\overset{\rm (b)}{\leq}  \min\{\epsilon_1,\epsilon_2\},
        \label{equ: achievability bound 2 for SR}
    \end{align}
    where (a) follows from \eqref{equ: cond of M1 M2 for SR 1}, \eqref{equ: cond of M1 M2 for SR 2} and the fact that $P_{UYZ|X} \in \cP(D, \gamma|X)$, and (b) comes from \eqref{equ: cond of d g b e}. This implies that $(M, D)$ is $\epsilon$-achievable.
\end{proof}
\begin{rem}
    The proof is also valid if we do not restrict $(\tilde U, \tilde Y, \tilde Z)$ to be the same distribution as $(U, Y, Z)$. However, for the sake of simplicity, we consider the restricted case.
\end{rem}

An outer bound on $\cM(D, \epsilon| X)$ is given in the next theorem.
\begin{thm}[Outer bound]
    \label{thm: outer bound}
    For a source $X$, real numbers $D_1$, $D_2 \geq 0$ and $\epsilon_1,\epsilon_2 > 0$, and any set $\cU$ such that $|\cU| \geq |\cX|$, we have
    \begin{align*}
        \cM(D,\epsilon| X) \subseteq \bigcup_{P_{UYZ|X} \in \cP(D, \epsilon| X)} \bigcap_{\delta \in (0, 1]^{2}}\cM_{\mathrm{O}}(\delta, P_{UYZ|X}),
    \end{align*}
    where
    \begin{align*}
        &\cM_{\rm O}(\delta, P_{UYZ|X}) \teq \left\{(M_1, M_2) \in \mathbb{N}^{2}: \right.\\
        &\quad \log M_1\geq I_{\infty}^{\delta_1}(X; U, Y) + \log \delta_1,\\
        &\quad \left. \log M_2 \geq I_{\infty}^{\delta_2}(X; Z| U, Y) + \log \delta_2 \right\}.
    \end{align*}
\end{thm}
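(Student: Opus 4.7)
The plan is to fix an arbitrary $(M_1, M_2) \in \cM(D, \epsilon| X)$ and exhibit a single $P_{UYZ|X} \in \cP(D, \epsilon| X)$ such that $(M_1, M_2) \in \cM_{\mathrm{O}}(\delta, P_{UYZ|X})$ for every $\delta \in (0,1]^2$ simultaneously. By $\epsilon$-achievability, pick a code $(f_1, f_2, \varphi_1, \varphi_2)$ with the stated numbers of codewords satisfying $\Pr\{\cE_i(D_i)\} \leq \epsilon_i$ for $i = 1, 2$. Without loss of generality assume $M_1 \leq |\cX|$ (otherwise replace $M_1$ by the size of the image of $f_1$, which only tightens the target inequalities); the hypothesis $|\cU| \geq |\cX|$ then permits an injection $[1:M_1] \hookrightarrow \cU$. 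Define $U \teq f_1(X)$ (under this embedding), $Y \teq \varphi_1(f_1(X))$, and $Z \teq \varphi_2(f_1(X), f_2(X))$, and let $P_{UYZ|X}$ be the induced conditional law. Since $\Pr\{d_1(X, Y) > D_1\} = \Pr\{\cE_1(D_1)\} \leq \epsilon_1$ and similarly for $\epsilon_2$, this $P_{UYZ|X}$ lies in $\cP(D, \epsilon| X)$; crucially it is chosen before $\delta$, so the bounds below will hold for every $\delta$ at once.

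For the first inequality I would bound $I_{\infty}^{\delta_1}(X; U, Y)$ via a threshold smoothing function. Since $Y = \varphi_1(U)$ is deterministic given $U$, writing $p(u) \teq \Pr\{U = u\}$ we have $P_{UY}(u, y) = p(u) \mathbf{1}\{y = \varphi_1(u)\}$ and $P_{XUY}(x, u, y) = P_X(x) \mathbf{1}\{u = f_1(x),\, y = \varphi_1(u)\}$. Set $\psi_1(x, u, y) \teq \mathbf{1}\{p(u) \geq \delta_1/M_1\}$. Because $U$ takes at most $M_1$ values, $\Pr\{p(U) < \delta_1/M_1\} \leq M_1 \cdot \delta_1/M_1 = \delta_1$, so $\mathrm{E}[\psi_1(X, U, Y)] \geq 1 - \delta_1$. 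On the support of $P_{XUY}$, the ratio $\psi_1 P_{XUY}/(P_X \times P_{UY})$ collapses to $\mathbf{1}\{p(u) \geq \delta_1/M_1\}/p(u) \leq M_1/\delta_1$. By the definition of the smooth max R\'enyi divergence, $I_{\infty}^{\delta_1}(X; U, Y) \leq |\log(M_1/\delta_1)|^{+} = \log M_1 - \log \delta_1$, which rearranges to $\log M_1 \geq I_{\infty}^{\delta_1}(X; U, Y) + \log \delta_1$.

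The conditional analogue is the main obstacle, and I would handle it with a per-$u$ threshold. Define $q(z|u) \teq P_{Z|U}(z|u)$; because $Y$ is a function of $U$, $P_{Z|UY}(z|u, \varphi_1(u)) = q(z|u)$ on the support of $P_{UY}$. Choose $\psi_2(x, u, y, z) \teq \mathbf{1}\{q(z|u) \geq \delta_2/M_2\}$. Given $U = u$, $Z = \varphi_2(u, f_2(X))$ takes at most $M_2$ values, so the conditional probability of $\{z : q(z|u) < \delta_2/M_2\}$ is at most $\delta_2$ for every $u$, yielding $\mathrm{E}[\psi_2(X, U, Y, Z)] \geq 1 - \delta_2$. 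On the support of $P_{XUYZ}$, the ratio $P_{XUYZ}/(P_{XUY} \cdot P_{Z|UY})$ reduces to $1/q(z|u)$, and the threshold $\psi_2$ caps it at $M_2/\delta_2$. Hence $I_{\infty}^{\delta_2}(X; Z|U, Y) \leq \log M_2 - \log \delta_2$, which gives $\log M_2 \geq I_{\infty}^{\delta_2}(X; Z|U, Y) + \log \delta_2$.

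Combining the two inequalities and noting that the constructed $P_{UYZ|X}$ does not depend on $\delta$, the pair $(M_1, M_2)$ belongs to $\cM_{\mathrm{O}}(\delta, P_{UYZ|X})$ for every $\delta \in (0,1]^2$, hence to the intersection. Taking the union over such $P_{UYZ|X} \in \cP(D, \epsilon| X)$ establishes the desired inclusion.
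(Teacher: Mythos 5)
Your proposal is correct and follows essentially the same strategy as the paper's: induce $P_{UYZ|X}$ from the achievable code with $U$ an injectively embedded copy of $f_1(X)$ (so that conditioning on $U$ pins down the first codeword), and bound each smooth max R\'enyi divergence by thresholding the relevant (conditional) probability at $\delta_i/M_i$. The paper packages this threshold argument into Lemma~\ref{lem: lower bound of image of g} (built on Lemma~\ref{lem: Han's lemma for bounded image}), applied once with $C$ a dummy constant and once with $C = (U, Y)$, whereas you inline the same computation directly.
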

\begin{rem}
    \label{rem:LooseOuterBound}
    The RHS of the outer bound can be further bounded as
    \begin{align*}
        &\bigcup_{P_{UYZ|X} \in \cP(D, \epsilon| X)} \bigcap_{\delta \in (0, 1]^{2}}\cM_{\mathrm{O}}(\delta, P_{UYZ|X})\\
        &\subseteq \bigcup_{(\delta, \beta, \gamma) \in \cS(\epsilon)} \bigcup_{P_{UYZ|X} \in \cP(D, \epsilon| X)} \cM_{\mathrm{O}}(\delta, P_{UYZ|X}),
    \end{align*}
    where we note that $\beta$ and $\gamma$ do not affect the bound. Although this bound is looser than that of Theorem \ref{thm: outer bound}, it may be easier to compare with the inner bound of Theorem \ref{thm: Cond for M1 M2}. In fact, it can be immediately noticed that the difference between this loose bound and the inner bound comes from the difference between 
    $\cP(D, \gamma| X)$ and $\cP(D, \epsilon| X)$, and the difference between $\cM_{\rm I}(\delta, \beta, P_{UYZ|X})$ and $\cM_{\mathrm{O}}(\delta, P_{UYZ|X})$.
\end{rem}

Before proving the theorem, we show some necessary lemmas.
\begin{lem}
    \label{lem: Han's lemma for bounded image}
    Suppose that a pair of RVs $(A,B)$ on $\cA\times\cB$ satisfies
    \begin{align}
        |\{a\in\cA: P_{A|B}(a|b)>0\}|\le M,\notag\\
        \quad \forall b\in\cB \mbox{ s.t. } P_{B}(b)>0,
        \label{equ: bound of P_{A|B} by M}
    \end{align}
    for some $M > 0$. Then, for any $\epsilon \in (0, 1]$, we have
    \begin{align*}
        \Pr\left\{P_{A|B}(A|B)\ge \frac{\epsilon}{M}\right\}> 1-\epsilon. 
    \end{align*}
\end{lem}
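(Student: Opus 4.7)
The plan is to prove the equivalent upper bound $\Pr\{P_{A|B}(A|B) < \epsilon/M\} < \epsilon$ by a direct conditional argument. Fix any $b \in \cB$ with $P_{B}(b) > 0$. By hypothesis \eqref{equ: bound of P_{A|B} by M}, the conditional distribution $P_{A|B}(\cdot|b)$ is supported on at most $M$ elements. Hence
\begin{align*}
\Pr\{P_{A|B}(A|B) < \epsilon/M \mid B = b\}
&= \sum_{\substack{a\in\cA:\\ 0 < P_{A|B}(a|b) < \epsilon/M}} P_{A|B}(a|b)\\
&< M \cdot \frac{\epsilon}{M} = \epsilon,
\end{align*}
since the sum has at most $M$ nonzero terms, each strictly less than $\epsilon/M$ (and if no such $a$ exists, the sum is $0 < \epsilon$ because $\epsilon > 0$).

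Averaging this strict inequality against $P_{B}$ over the support of $B$ yields
\begin{align*}
&\Pr\{P_{A|B}(A|B) < \epsilon/M\}\\
&= \sum_{b\in\cB:\, P_{B}(b)>0} P_{B}(b)\,\Pr\{P_{A|B}(A|B) < \epsilon/M \mid B = b\}\\
&< \epsilon,
\end{align*}
and taking complements gives the claimed bound. The only subtlety worth flagging is the preservation of strictness when passing from the conditional bound to the unconditional one: since the slack $\epsilon - \Pr\{P_{A|B}(A|B) < \epsilon/M \mid B = b\}$ is strictly positive for every $b$ in the support of $P_{B}$, its expectation under $P_{B}$ is strictly positive as well, so the averaged inequality remains strict. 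This is really the only non-trivial point — everything else is bookkeeping — and the argument is essentially a sharper variant of Markov's inequality applied to $1/P_{A|B}(A|B)$, tailored to recover strict inequality rather than the loose bound that a naive Markov step would give.
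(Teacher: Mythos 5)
Your proof is correct and follows exactly the route the paper indicates, namely the conditional version of the atom-counting argument behind Han's Lemma~2.6.2: fix $b$ in the support of $B$, bound $\Pr\{P_{A|B}(A|B) < \epsilon/M \mid B=b\}$ by at most $M$ terms each strictly below $\epsilon/M$, and average over $P_B$. Your care in noting that strictness survives the averaging (since the slack is positive on the nonempty support of $P_B$) is exactly the point one needs to check to get the strict inequality claimed in the lemma.
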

\begin{proof}
    Since the lemma can be proved in a similar manner as \cite[Lemma 2.6.2]{hanspringerinformation}, we omit the proof.
\end{proof}

The next lemma is an extended version of \cite[Lemma 4]{uyematsu2014ITWrevisiting}, which gives a bound on the size of the image of a function.
\begin{lem}
    \label{lem: lower bound of image of g}
    For a function $g:\cA \ra \cB \times \cC$ and $c \in \cC$, let $\|g\|_{c}$ denote the size of the image of $g$ when one output is fixed to $c$, i.e.,
    \begin{align*}
        \|g\|_{c} = |\{b\in\cB: g(a)=(b,c), \exists a\in\cA\}|.
    \end{align*}
    Then, for any $\delta\in(0,1]$, any RV $A \in \cA$, and $(B, C) = g(A)$, we have
    \begin{align*}
        \log \sup_{c\in\cC}\|g\|_{c} \geq I_{\infty}^{\delta}(A; B| C) + \log \delta.
    \end{align*}
\end{lem}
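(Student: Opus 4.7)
The plan is to reduce the claim to Lemma \ref{lem: Han's lemma for bounded image} applied to the pair $(B, C)$, exploiting the fact that the equation $(B, C) = g(A)$ forces $B$ to be a deterministic function of $A$. I set $M \teq \sup_{c \in \cC} \|g\|_{c}$; by the very definition of $\|g\|_{c}$, for every $c \in \cC$ with $P_{C}(c) > 0$ the support $\{b \in \cB : P_{B|C}(b|c) > 0\}$ has cardinality at most $M$.

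With this support bound in hand, I would invoke Lemma \ref{lem: Han's lemma for bounded image} with the roles of $A$ and $B$ interchanged (and $\epsilon$ replaced by $\delta$), obtaining $\Pr\{P_{B|C}(B|C) \geq \delta/M\} > 1 - \delta$. This suggests choosing
\[
\psi(a, b, c) \teq \mathbf{1}\left\{P_{B|C}(b|c) \geq \tfrac{\delta}{M}\right\}
\]
as the test function in the infimum that defines $D_{\infty}^{\delta}(P_{ABC} \| P_{AC} \times P_{B|C}) = I_{\infty}^{\delta}(A; B| C)$; by construction $\mathrm{E}[\psi(A, B, C)] > 1 - \delta$, so $\psi$ is admissible.

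The final step is to bound the supremum of the Radon--Nikodym ratio against this $\psi$. Writing $P_{ABC}(a,b,c) = P_{AC}(a,c)\, P_{B|AC}(b|a,c)$ and using that $(B, C) = g(A)$ forces $P_{B|AC}(b|a,c) \in \{0, 1\}$ whenever $P_{AC}(a,c) > 0$, I obtain, for every $(a, b, c)$ with $\psi(a,b,c) = 1$,
\[
\frac{\psi(a,b,c) P_{ABC}(a,b,c)}{P_{AC}(a,c) P_{B|C}(b|c)} \leq \frac{1}{P_{B|C}(b|c)} \leq \frac{M}{\delta},
\]
and the ratio is zero outside the support of $\psi$. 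Since $M/\delta \geq 1$, applying $|\log(\cdot)|^{+}$ preserves the bound, yielding $I_{\infty}^{\delta}(A; B| C) \leq \log M - \log \delta$, which rearranges to the stated inequality. I do not foresee any serious obstacle; the only subtle point is the observation that, under the deterministic map $g$, $P_{B|AC}$ is indicator-valued, which is precisely what makes the Radon--Nikodym ratio controllable and lets Lemma \ref{lem: Han's lemma for bounded image} do all the remaining work.
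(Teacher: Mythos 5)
Your proof is correct and follows essentially the same route as the paper: define $M=\sup_c\|g\|_c$, use the deterministic structure of $g$ to bound the conditional support $|\{b:P_{B|C}(b|c)>0\}|\le M$, invoke Lemma~\ref{lem: Han's lemma for bounded image} (relabelled to the pair $(B,C)$) to certify the indicator $\psi(a,b,c)=\mathbf{1}\{P_{B|C}(b|c)\geq\delta/M\}$ as an admissible test function, and then bound the resulting likelihood ratio by $M/\delta$. The one minor cosmetic difference is that you invoke $P_{B|AC}\in\{0,1\}$ at the ratio-bounding step, whereas the paper only needs the weaker fact $P_{B|AC}\le1$ there; the determinism of $g$ is genuinely used only in the support-cardinality step, exactly as in the paper.
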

\begin{proof}
    Let $M = \sup_{c\in\cC}\|g\|_{c}$. Define a subset $\cD_{\delta}\subseteq \cB \times \cC$ and the function $\psi_{o}:\cA\times\cB\times\cC\ra [0,1]$ as
    \mathindent=0mm
    \begin{align}
        \cD_{\delta} &\teq \left\{(b,c)\in\cB\times\cC: P_{B | C}(b | c) \geq \frac{\delta}{M}\right\},
        \label{equ: def of D_delta}\\
        \psi_{o}(a,b,c) &\teq \mathbf{1}\{(a,b,c)\in \cA\times\cD_{\delta}\}.
    \end{align}
    \mathindent=7mm
    Since $P_{BC}(b,c) = \sum_{a\in\cA} P_{A}(a) \mathbf{1}\{(b,c)=g(a)\}$,
    $P_{B | C}(b | c) > 0$ for $c \in \cC$ such that $P_{C}(c) > 0$ if and only if there exists $a\in\cA$ such that $(b,c)=g(a)$ and $P_{A}(a) > 0$. Thus, for $c \in \cC$ such that $P_{C}(c) > 0$, we have    
    \begin{align*}
        &|\{b\in\cB: P_{B | C}(b | c)>0\}\\
        &\quad =  |\{b \in \cB: (b, c)=g(a), \exists a \in\cA \mbox{ s.t. }P_{A}(a)>0\}\\
        &\quad \leq \|g\|_{c}\\
        &\quad \leq M.
    \end{align*}    
    Then, by using Lemma \ref{lem: Han's lemma for bounded image}, it is easy to see that
    \begin{align*}
        &\sum_{(a,b,c)\in\cA\times\cB\times\cC} \psi_{o}(a,b,c) P_{ABC}(a,b,c)\\
        &=\Pr\left\{P_{B| C}(B| C)\geq \frac{\delta}{M}\right\}\\
        & > 1 - \delta.
    \end{align*}

    Thus, we have
    \begin{align*}
        &I_{\infty}^{\delta}(A; B| C)\\
        &\leq \left| \log \sup_{(a,b,c)\in \cA\times\cB\times\cC}\frac{\psi_{o}(a,b,c)P_{ABC}(a,b,c)}
            {P_{AC}(a, c)P_{B| C}(b| c)} \right|^{+}\\
        &= \left| \log \sup_{(a,b,c)\in \cA\times\cD_{\delta}}\frac{P_{B|AC}(b|a, c)}{P_{B| C}(b| c)} \right|^{+}\\
        &\leqo{(a)} \left| \log \sup_{(a,b,c)\in \cA\times\cD_{\delta}}\frac{P_{B|AC}(b|a, c)}{\delta/M} \right|^{+}\\
        &\leq \log M - \log \delta,
    \end{align*}
    where (a) comes from the definition \eqref{equ: def of D_delta}. This completes the proof.
\end{proof}
\begin{rem}
    \label{rem: upper bound of smooth renyi mutual information}
    For a triple of RVs $(A, B, C)$ on $\cA \times \cB \times \cC$, let $M = |\cB|$. Then, in the same way as the above proof, we can easily show that
    \begin{align*}
        I_{\infty}^{\delta}(A; B| C) \leq \log |\cB| - \log \delta.
    \end{align*}
\end{rem}

Now, we give the proof of Theorem \ref{thm: outer bound}.
\begin{proof}[{Proof of Theorem \ref{thm: outer bound}}]
    Let $\|f_{1}\|$ be the size of the image of an encoder $f_{1}$. Since $\|f_{1}\| \leq |\cX|$ and $|\cX| \leq |\cU|$ by the assumption, there exists an injective function $\mathrm{id}: [1: \|f_{1}\|] \ra \cU$. For this function, let $\cU_{\mathrm{id}} \subseteq \cU$ be the image of $\mathrm{id}$ and $\mathrm{id}^{-1}: \cU_{\mathrm{id}} \ra [1: \|f_{1}\|]$ be the inverse function of $\mathrm{id}$ on $\cU_{\mathrm{id}}$.
    
    Suppose that $(M,D)$ is $\epsilon$-achievable. Then, there exists a code $(f_1,f_2,\varphi_1,\varphi_2)$ such that
    \begin{align*}
        \Pr\left\{d_1(X,\varphi_1(f_1(X)))>D_1\right\} \leq \epsilon_1,\\
        \Pr\left\{d_2(X,\varphi_2(f_1(X),f_2(X)))>D_2\right\} \leq \epsilon_2.
    \end{align*}
    Thus, by setting $U=\mathrm{id}(f_1(X))$, $Y=\varphi_1(f_1(X))$, and $Z=\varphi_2(f_1(X),f_2(X))$, we have
    \begin{align}
        \Pr\left\{d_1(X,Y)>D_1\right\} \leq \epsilon_1,
        \label{equ: outer bound cond of Y and D1}\\
        \Pr\left\{d_2(X,Z)>D_2\right\} \leq \epsilon_2.
        \label{equ: outer bound cond of Z and D2}
    \end{align}
    
    For a constant value $c$, let $g_1(x) = (\mathrm{id}(f_1(x)), \varphi_1(f_1(x)), c)$, $A = X$, $B = (U, Y)$, and $C = c$. According to Lemma \ref{lem: lower bound of image of g}, for any $\delta_{1} \in (0, 1]$, we have
    \begin{align}
        \log \|g_1\|_{c}
        &\geq I_{\infty}^{\delta_1}(X; U, Y| C) + \log \delta_1\notag\\
        &= I_{\infty}^{\delta_1}(X; U, Y) + \log \delta_1.
        \label{equ: log g1 geq D1}
    \end{align}
    On the other hand, we have
    \begin{align}
        \|g_1\|_{c}
        &= |\{(u,y)\in\cU_{\mathrm{id}}\times\cY: g_1(x)=(u, y, c), \exists x\in\cX\}|\notag\\
        &= \sum_{u \in \cU_{\mathrm{id}}} \sum_{y \in \cY} \mathbf{1}\{g_1(x)=(u, y, c), \exists x\in\cX\}\notag\\
        &= \sum_{u \in \cU_{\mathrm{id}}} \mathbf{1}\{g_1(x)=(u, \varphi_{1}(\mathrm{id}^{-1}(u)), c), \exists x\in\cX\}\notag\\
        &\leq M_1,
        \label{equ: g1 leq M1}
    \end{align}
    where the last inequality follows since the size of $\cU_{\mathrm{id}}$ is at most $\|f_{1}\|$ and $\|f_{1}\| \leq M_{1}$. Combining \eqref{equ: log g1 geq D1} and \eqref{equ: g1 leq M1}, we have
    \begin{align}
        \log M_1 \geq I_{\infty}^{\delta_1}(X; U, Y) + \log \delta_1.
        \label{euq: log M1 geq D1}
    \end{align}

    Let $g_2(x) = (\varphi_2(f_1(x),f_2(x)), \mathrm{id}(f_1(x)), \varphi_1(f_1(x)))$, $A = X$, $B = Z$, and $C = (U,Y)$. Then, according to Lemma \ref{lem: lower bound of image of g}, for any $\delta_{2} \in (0, 1]$, we have
    \begin{align}
        \log \sup_{(u,y)\in\cU\times\cY}\|g_2\|_{(u,y)} \geq I_{\infty}^{\delta_2}(X; Z| U, Y) + \log \delta_2.
        \label{equ: log g2 geq D2}
    \end{align}
    On the other hand, for any $(u,y)\in\cU_{\mathrm{id}}\times\cY$, we have
    \begin{align}
        \|g_2\|_{(u,y)}
        &= \sum_{z\in\cZ}
        \mathbf{1}\{z=\varphi_2(\mathrm{id}^{-1}(u),f_2(x)), \notag\\
        &\quad \mathrm{id}^{-1}(u) = f_1(x), y = \varphi_1(\mathrm{id}^{-1}(u)), \exists x\in\cX\}\notag\\
        &\leq \sum_{z\in\cZ}
        \mathbf{1}\{\exists x\in\cX, z=\varphi_2(\mathrm{id}^{-1}(u),f_2(x))\}\notag\\
        &\leq \sum_{z\in\cZ}
        \mathbf{1}\{\exists j\in[1:M_2], z=\varphi_2(\mathrm{id}^{-1}(u),j)\}\notag\\
        &\leq \sum_{j\in[1:M_2]}\sum_{z\in\cZ}\mathbf{1}\{z=\varphi_2(\mathrm{id}^{-1}(u),j)\}\notag\\
        &= M_2.
        \label{equ: g2 leq M2}
    \end{align}
    We note that for any $(u,y)\in \{\cU\setminus \cU_{\mathrm{id}}\} \times \cY$, it holds that $\|g_2\|_{(u,y)} = 0$. Combining \eqref{equ: log g2 geq D2} and \eqref{equ: g2 leq M2}, we have
    \begin{align}
        \log M_2 \geq I_{\infty}^{\delta_2}(X; Z| U, Y) + \log \delta_2.
        \label{euq: log M2 geq D2}
    \end{align}
    Since $\delta_{1} \in (0, 1]$ and $\delta_{2} \in (0, 1]$ are arbitrary, \eqref{euq: log M1 geq D1} and \eqref{euq: log M2 geq D2} imply that
    \begin{align*}
        (M_1, M_2) \in \bigcap_{\delta \in (0, 1]^{2}}\cM_{\mathrm{O}}(\delta, P_{UYZ|X}).
    \end{align*}

    Now, by recalling that $(X, U, Y, Z)$ satisfy \eqref{equ: outer bound cond of Y and D1} and \eqref{equ: outer bound cond of Z and D2}, for any $\epsilon$-achievable pair $(M, D)$, we have
    \begin{align*}
        (M_1, M_2) \in \bigcup_{P_{UYZ|X} \in \cP(D, \epsilon| X)} \bigcap_{\delta \in (0, 1]^{2}}\cM_{\mathrm{O}}(\delta, P_{UYZ|X}).
    \end{align*}
    This completes the proof.
\end{proof}

\begin{rem}
    If we do not employ the RV $U$ which has a role in fixing the RV $f_{1}(X)$ to a certain codeword, we cannot bound $\|g_2\|_{(u,y)}$ by $M_{2}$ in \eqref{equ: g2 leq M2}. Thus in this proof, introducing $U$ is quite important.
\end{rem}

\begin{rem}
    In \cite{matsuta2015nab}, we gave inner and outer bounds on $\cM(D, \epsilon| X)$ by using the $\alpha$-mutual information of order infinity \cite{verdu2015alpha}, where the $\alpha$-mutual information is a generalized version of the mutual information. In this paper, however, we use the smooth max R\'enyi divergence. This is because it is compatible with the information spectrum quantity which is well studied and useful to analyze rates of a code.
\end{rem}

Finally, we discuss the difference between our inner and outer bounds of Theorems \ref{thm: Cond for M1 M2} and \ref{thm: outer bound}.

If cardinalities of sets $\cX$, $\cY$, $\cZ$, $\cU$ are small, the RHSs of the outer bound and the inner bound may be given by computing their boundaries. Thus, the difference between these two bounds is actually evaluated. On the other hand, if cardinalities of the sets are large, it is difficult to compute their boundaries. However, we can evaluate the difference roughly.

Let $|\cU| = |\cX|$ for the sake of simplicity. Since the main interest is in the case where $\epsilon_1, \epsilon_2 > 0$ are sufficiently small, we assume that for a small real number $\rho \in (0, 1/2]$, $\epsilon_1 = \epsilon_2 = \rho$. Then, we can set that $\delta_i = \beta_i = \gamma_i \approx \epsilon_i$ $(i = 1, 2)$ in the inner bound. Thus, the RHS of the inner bound can be approximated as
\begin{align*}
    &\bigcup_{(\delta, \beta, \gamma) \in \cS(\epsilon)}
    \bigcup_{P_{UYZ|X} \in \cP(D, \gamma| X)} \cM_{\rm I}(\delta, \beta, P_{UYZ|X})\\
    &\approx \bigcup_{P_{UYZ|X} \in \cP(D, \epsilon| X)} \cM_{\rm I}(\epsilon, \epsilon, P_{UYZ|X}).
\end{align*}    
On the other hand, the RHS of the outer bound can be bounded as
\begin{align*}
    &\bigcup_{P_{UYZ|X} \in \cP(D, \epsilon| X)} \bigcap_{\delta \in (0, 1]^{2}}\cM_{\mathrm{O}}(\delta, P_{UYZ|X})\\
    &\subseteq \bigcup_{P_{UYZ|X} \in \cP(D, \epsilon| X)} \cM_{\mathrm{O}}(\epsilon, P_{UYZ|X}).        
\end{align*}    
Thus, the difference between the outer bound and the inner bound can be evaluated by the difference between $\cM_{\rm I}(\epsilon, \epsilon, P_{UYZ|X})$ and $\cM_{\mathrm{O}}(\epsilon, P_{UYZ|X})$ for each fixed $P_{UYZ|X} \in \cP(D, \epsilon| X)$.

According to the definitions of $\cM_{\rm I}(\epsilon, \epsilon, P_{UYZ|X})$ and $\cM_{\mathrm{O}}(\epsilon, P_{UYZ|X})$, the difference comes from boundary pairs $(M_{\textrm{I}, 1}, M_{\textrm{I}, 2})$ of $\cM_{\rm I}(\epsilon, \epsilon, P_{UYZ|X})$ and $(M_{\textrm{O},1}, M_{\textrm{O},2})$ of $\cM_{\mathrm{O}}(\epsilon, P_{UYZ|X})$, where
\mathindent=0mm
\begin{align*}        
    M_{\textrm{I},1} &= \left \lceil \exp \left( I_{\infty}^{\rho}(X; U, Y) + \log \log\frac{1}{\rho}\right) \right \rceil,\\
    M_{\textrm{I},2} &= \left \lceil \exp \left( I_{\infty}^{\rho}(X; Z| U, Y) + \log \left(I_{\infty}^{\rho}(X; U, Y) + \log\frac{1}{\rho} \right) \right) \right \rceil,\\
    M_{\textrm{O},1} &= \left \lceil \exp \left( I_{\infty}^{\rho}(X; U, Y) + \log \rho \right) \right \rceil,\\
    M_{\textrm{O},2} &= \left \lceil \exp \left( I_{\infty}^{\rho}(X; Z| U, Y) + \log \rho \right) \right \rceil.
\end{align*}
\mathindent=7mm
Clearly, the difference between $M_{\textrm{I},1}$ and $M_{\textrm{O},1}$ can be evaluated by second terms of exponents, i.e, $\log \log\frac{1}{\rho} - \log \rho$. Similarly, the difference between $M_{\textrm{I},2}$ and $M_{\textrm{O},2}$ can be evaluated by second terms of exponents, i.e, $\log \left(I_{\infty}^{\rho}(X; U, Y) + \log\frac{1}{\rho} \right) - \log \rho$. Since $I_{\infty}^{\rho}(X; U, Y) \geq 0$, the differences are at most $\log \left(I_{\infty}^{\rho}(X; U, Y) + \log\frac{1}{\rho} \right) - \log \rho$.

Especially, for a finite set $\cX$, since $I_{\infty}^{\rho}(X; U, Y) \leq \log |\cX| - \log \rho$ (see Remark \ref{rem: upper bound of smooth renyi mutual information}), the differences are at most $\log \left(\log|\cX| + 2\log\frac{1}{\rho} \right) - \log \rho$. Furthermore, when we regard $\cX$ as an $n$-fold Cartesian product $\cX^n$, the differences of exponents are at most $\log \left(n \log|\cX| + 2\log\frac{1}{\rho} \right) - \log \rho$. Dividing it by $n$, it obviously vanishes as $n$ tends to infinity. This implies that the RHSs of the inner bound and the outer bound asymptotically coincide with each other in terms of \textit{rate} (which is the exponent of the number of codewords divided by $n$) as $n$ tends to infinity. In fact, in the next section, the rate-distortion region can be given by using our inner and outer bounds.

\section{General Formula for the Rate-Distortion Region}
\label{sec: the limit of the rate region}
In this section, we deal with the coding for an $n$-length source sequence and give a general formula for the rate-distortion region.

First of all, we introduce the spectral (conditional) sup-mutual information rate \cite{hanspringerinformation}.
\begin{defn}
    \label{defn: spectral sup-mutual information rate}
    For a sequence $(\bfX, \bfY, \bfZ) = \{(X^n, Y^n, Z^n)\}_{n = 1}^{\infty}$ of RVs, we define
    \begin{align*}
        \overline{I}(\bfX; \bfY) &\teq \plimsup_{n \ra \infty} \frac{1}{n} \log \frac{P_{Y^n|X^n}(Y^n|X^n)}{P_{Y^n}(Y^n)},\\
        \overline{I}(\bfX; \bfY| \bfZ) &\teq \plimsup_{n \ra \infty} \frac{1}{n} \log \frac{P_{Y^n|X^n Z^n}(Y^n|X^n, Z^n)}{P_{Y^n| Z^n}(Y^n| Z^n)}.
    \end{align*}    
\end{defn}

The smooth max R\'enyi divergence is related to the spectral sup-mutual information rate as shown in the corollary of the next lemma.
\begin{lem}
    \label{lem: smooth max renyi divergense is spectrum divergence}
    Consider two sequences $\bfX$ and $\bfY$ of RVs over the same set. Then, we have
    \mathindent=0mm
    \begin{align*}
        \lim_{\delta\downarrow 0}\limsup_{n\ra\infty} \frac{1}{n} D_{\infty}^{\delta}(P_{X^n} \| P_{Y^n}) = \plimsup_{n\ra\infty} \frac{1}{n} \log \frac{P_{X^n}(X^n)}{P_{Y^n}(X^n)}.
    \end{align*}
    \mathindent=7mm
\end{lem}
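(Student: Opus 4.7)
The plan is to establish the identity by proving the inequalities $\leq$ and $\geq$ separately. Denote the right-hand side by $R \teq \plimsup_{n\ra\infty}\frac{1}{n}\log\frac{P_{X^n}(X^n)}{P_{Y^n}(X^n)}$ and introduce the auxiliary set $\cA_n(\alpha) \teq \{x^n : \frac{1}{n}\log\frac{P_{X^n}(x^n)}{P_{Y^n}(x^n)} > \alpha\}$.

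For the upper bound, I would fix $\epsilon > 0$, set $\alpha = R + \epsilon$, and take the indicator $\psi_n(x^n) \teq \mathbf{1}\{\frac{1}{n}\log\frac{P_{X^n}(x^n)}{P_{Y^n}(x^n)} \leq \alpha\}$ as a candidate in the infimum defining $D_\infty^\delta$. The definition of $\plimsup$ in probability gives $\sum_{x^n}\psi_n(x^n) P_{X^n}(x^n) = 1 - \Pr\{X^n \in \cA_n(\alpha)\} \ra 1$, so for every fixed $\delta \in (0,1]$ the candidate $\psi_n$ is admissible once $n$ is large enough. On its support we have $\psi_n(x^n) P_{X^n}(x^n)/P_{Y^n}(x^n) \leq e^{n\alpha}$, hence $\frac{1}{n} D_\infty^\delta(P_{X^n}\|P_{Y^n}) \leq |\alpha|^+$ eventually. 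Taking $\limsup_n$ and then $\epsilon \downarrow 0$ followed by $\delta \downarrow 0$ gives the $\leq$ direction.

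For the lower bound, fix any $\alpha < R$. The definition of $\plimsup$ supplies $\eta > 0$ and an increasing subsequence $\{n_k\}$ with $\Pr\{X^{n_k} \in \cA_{n_k}(\alpha)\} \geq \eta$. Pick any $\delta \in (0, \eta)$ and any admissible $\psi_n$. The $\delta$-slack implies
\begin{align*}
\sum_{x^n \in \cA_n(\alpha)} \psi_n(x^n) P_{X^n}(x^n) \geq \Pr\{X^n \in \cA_n(\alpha)\} - \delta,
\end{align*}
which is at least $\eta - \delta > 0$ along the subsequence. Since $P_{X^n}(x^n) > e^{n\alpha} P_{Y^n}(x^n)$ on $\cA_n(\alpha)$, a change-of-measure bound yields $P_{Y^n}(\cA_n(\alpha)) < e^{-n\alpha}$. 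Applying the elementary inequality $\sum_b f(b)/\sum_b g(b) \leq \sup_b f(b)/g(b)$ (restricted to $\cA_{n_k}(\alpha)$),
\begin{align*}
\sup_{x^n} \frac{\psi_n(x^n) P_{X^n}(x^n)}{P_{Y^n}(x^n)} \geq \frac{\eta - \delta}{P_{Y^{n_k}}(\cA_{n_k}(\alpha))} \geq (\eta - \delta)\, e^{n_k \alpha}.
\end{align*}
This bound is independent of $\psi_n$, so $\frac{1}{n_k} D_\infty^\delta(P_{X^{n_k}}\|P_{Y^{n_k}}) \geq |\alpha + \frac{1}{n_k}\log(\eta - \delta)|^+$. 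Passing to the limit along $n_k$ and then letting $\alpha \uparrow R$ and $\delta \downarrow 0$ gives $\lim_{\delta\downarrow 0}\limsup_n \frac{1}{n} D_\infty^\delta \geq |R|^+$, which matches the upper bound (and equals $R$ itself when $R \geq 0$, which is implicit because the LHS is non-negative).

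The hard part will be the lower bound. Extracting the subsequence from the $\plimsup$ definition is routine, but producing a lower bound on $\sup_{x^n} \psi_n(x^n) P_{X^n}(x^n)/P_{Y^n}(x^n)$ that holds \emph{uniformly} in the admissible $\psi_n$ requires combining the change-of-measure estimate $P_{Y^n}(\cA_n(\alpha)) \leq e^{-n\alpha}$ with the ``sup $\geq$ average'' inequality highlighted in the remark following Corollary~\ref{cor: bound by smooth max renyi divergence}. Consistent bookkeeping of the $|\cdot|^+$ on both sides is a small but easy-to-miss detail.
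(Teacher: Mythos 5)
Your argument is correct, but it takes a genuinely different route from the paper. The paper's proof is very short: it first commutes the $|\cdot|^+$ truncation past the limits (using $\limsup_n |a_n|^+ = |\limsup_n a_n|^+$ together with \eqref{rem: equ: D = |D-|+}), then invokes Warsi's \cite[Lemma~3]{6691300} as a black box to identify $\lim_{\delta\downarrow 0}\limsup_n \frac{1}{n}D_\infty^\delta(P_{X^n}\|P_{Y^n})^-$ with the spectral quantity, and finally applies the non-negativity of the spectral sup-divergence from \cite[Lemma~3.2.1]{hanspringerinformation} to drop the truncation on the right-hand side. You instead re-derive the cited Warsi identity from scratch: the upper bound by feeding the threshold indicator $\psi_n=\mathbf{1}\{\frac{1}{n}\log\frac{P_{X^n}}{P_{Y^n}}\le R+\epsilon\}$ into the infimum, and the lower bound by combining the change-of-measure estimate $P_{Y^n}(\cA_n(\alpha))<e^{-n\alpha}$ with the $\frac{\sum f}{\sum g}\le\sup\frac{f}{g}$ inequality to obtain a lower bound on $\sup\psi P_{X^n}/P_{Y^n}$ that holds uniformly over admissible $\psi$. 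This buys you a self-contained elementary argument at the cost of length. Two small points worth tightening if you write it out: (i) the quantifier order in the lower bound should be ``fix $\alpha<R$, obtain $\eta(\alpha)>0$ and a subsequence, then require $\delta<\eta(\alpha)$'' — letting $\delta\downarrow 0$ before $\alpha\uparrow R$ (this is harmless because $\delta\mapsto\limsup_n\frac{1}{n}D_\infty^\delta$ is non-increasing, but it should be stated); (ii) the case $R=+\infty$ should be dispatched separately, since $\alpha=R+\epsilon$ is then undefined (the upper bound is vacuous there).
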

\begin{proof}
    For a sequence $\{a_n\}_{n = 1}^{\infty}$ of real numbers, it holds that $\limsup_{n \ra \infty} |a_{n}|^{+} = | \limsup_{n \ra \infty} a_{n}|^{+}$. Thus, according to \eqref{rem: equ: D = |D-|+}, we have
    \begin{align}
        &\lim_{\delta\downarrow 0}\limsup_{n\ra\infty} \frac{1}{n} D_{\infty}^{\delta}(P_{X^n} \| P_{Y^n})\notag\\
        &\quad = \left|\lim_{\delta\downarrow 0} \limsup_{n\ra\infty} \frac{1}{n} D_{\infty}^{\delta}(P_{X^n} \| P_{Y^n})^{-} \right|^{+}.
        \label{equ: lim D = |lim D-|+}
    \end{align}
    According to \cite[Lemma 3]{6691300}, it holds that
    \begin{align}
        &\lim_{\delta\downarrow 0}\limsup_{n\ra\infty} \frac{1}{n} D_{\infty}^{\delta}(P_{X^n} \| P_{Y^n})^{-}\notag\\
        &\quad = \plimsup_{n\ra\infty} \frac{1}{n} \log \frac{P_{X^n}(X^n)}{P_{Y^n}(X^n)}.
        \label{equ: D- = plimsup}
    \end{align}
    Furthermore, according to \cite[Lemma 3.2.1]{hanspringerinformation}, the RHS of \eqref{equ: D- = plimsup} is non-negative. Thus, by combining \eqref{equ: lim D = |lim D-|+} and \eqref{equ: D- = plimsup}, we have the lemma.    
\end{proof}
\begin{cor}
    \label{cor: smooth max renyi divergense is spectrum divergence}
    For a sequence $(\bfX, \bfY, \bfZ)$ of RVs, we have
    \mathindent=0mm
    \begin{align*}
        \lim_{\delta\downarrow 0}\limsup_{n\ra\infty} \frac{1}{n} I_{\infty}^{\delta}(X^n; Y^n) &= \overline{I}(\bfX; \bfY),\\
        \lim_{\delta\downarrow 0}\limsup_{n\ra\infty} \frac{1}{n} I_{\infty}^{\delta}(X^n; Y^n| Z^n) &= \overline{I}(\bfX; \bfY| \bfZ).
    \end{align*}
    \mathindent=7mm
\end{cor}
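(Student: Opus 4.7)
The plan is to deduce both identities as direct consequences of Lemma~\ref{lem: smooth max renyi divergense is spectrum divergence}, which already gives the desired limit identity for a generic pair of distributions $(P_{X^n}, P_{Y^n})$ on a common alphabet. The only work is to set up the right substitutions and simplify the resulting spectral log-likelihood ratios.

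For the first identity, I would apply Lemma~\ref{lem: smooth max renyi divergense is spectrum divergence} on the product alphabet $\cX^n\times\cY^n$, taking the ``first'' distribution to be $P_{X^nY^n}$ and the ``second'' distribution to be the product $P_{X^n}\times P_{Y^n}$. By the definition $I_{\infty}^{\delta}(X^n;Y^n)=D_{\infty}^{\delta}(P_{X^nY^n}\|P_{X^n}\times P_{Y^n})$, the left-hand side of the lemma becomes $\lim_{\delta\downarrow 0}\limsup_{n\to\infty}\tfrac{1}{n}I_{\infty}^{\delta}(X^n;Y^n)$. On the right-hand side, the log-ratio simplifies as
\begin{align*}
\frac{1}{n}\log\frac{P_{X^nY^n}(X^n,Y^n)}{P_{X^n}(X^n)P_{Y^n}(Y^n)}=\frac{1}{n}\log\frac{P_{Y^n|X^n}(Y^n|X^n)}{P_{Y^n}(Y^n)},
\end{align*}
so taking $\plimsup_{n\to\infty}$ yields exactly $\overline{I}(\bfX;\bfY)$ by Definition~\ref{defn: spectral sup-mutual information rate}.

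For the second identity, I would run the same argument on the alphabet $\cX^n\times\cY^n\times\cZ^n$, choosing as the ``first'' distribution $P_{X^nY^nZ^n}$ and as the ``second'' distribution $P_{X^nZ^n}\times P_{Y^n|Z^n}$ (interpreted as the joint distribution on triples obtained by the product construction, which is a genuine probability distribution on $\cX^n\times\cY^n\times\cZ^n$). Then the left-hand side of Lemma~\ref{lem: smooth max renyi divergense is spectrum divergence} is $\lim_{\delta\downarrow 0}\limsup_{n\to\infty}\tfrac{1}{n}I_{\infty}^{\delta}(X^n;Y^n|Z^n)$ by definition, and the log-ratio on the right-hand side simplifies as
\begin{align*}
\frac{1}{n}\log\frac{P_{X^nY^nZ^n}(X^n,Y^n,Z^n)}{P_{X^nZ^n}(X^n,Z^n)P_{Y^n|Z^n}(Y^n|Z^n)}=\frac{1}{n}\log\frac{P_{Y^n|X^nZ^n}(Y^n|X^n,Z^n)}{P_{Y^n|Z^n}(Y^n|Z^n)},
\end{align*}
whose $\plimsup_{n\to\infty}$ is $\overline{I}(\bfX;\bfY|\bfZ)$.

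There is essentially no hard step: the only thing to verify is that the ``second'' distributions $P_{X^n}\times P_{Y^n}$ and $P_{X^nZ^n}\times P_{Y^n|Z^n}$ are themselves bona fide probability distributions on the same alphabets as the ``first'' distributions, which is immediate, and that they are absolutely continuous with respect to the ``first'' ones on the relevant support so that the log-ratios are well-defined; this is handled exactly as in the proof of Lemma~\ref{lem: smooth max renyi divergense is spectrum divergence}, since the definition of $D_{\infty}^{\delta}$ already absorbs points where the second distribution vanishes into the $|\cdot|^{+}$ operator. Hence the corollary follows in a few lines.
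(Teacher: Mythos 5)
Your proposal is correct and matches the paper's intent exactly: the paper omits the proof, stating only that the corollary ``immediately follows from Lemma~\ref{lem: smooth max renyi divergense is spectrum divergence} and Definition~\ref{defn: spectral sup-mutual information rate},'' which is precisely the substitution-and-simplification argument you spell out. The one point worth checking, that $P_{X^nZ^n}\times P_{Y^n|Z^n}$ is a bona fide probability distribution on $\cX^n\times\cY^n\times\cZ^n$ so the lemma applies, you have handled correctly.
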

\begin{proof}
    Since this corollary immediately follows from Lemma \ref{lem: smooth max renyi divergense is spectrum divergence} and Definition \ref{defn: spectral sup-mutual information rate}, we omit the proof.    
\end{proof}

Let $\bfP_{\bfU\bfY\bfZ|\bfX}$ be a sequence of conditional probability distributions $P_{U^nY^nZ^n|X^n} \in \cP_{\cU^n\cY^n\cZ^n|\cX^n}$. We define
\begin{align*}
    \cP_{\mathrm{G}}(D| \bfX) &\teq \{\bfP_{\bfU\bfY\bfZ|\bfX}: \overline{D}_{1}(\bfX,\bfY) \leq D_1,\\ &\quad \overline{D}_{2}(\bfX,\bfZ) \leq D_2\},\\
    \overline{D}_{1}(\bfX,\bfY) &\teq \plimsup_{n \ra \infty} d_1^{(n)}(X^n, Y^n),\\
    \overline{D}_{2}(\bfX,\bfZ) &\teq \plimsup_{n \ra \infty} d_2^{(n)}(X^n, Z^n),\\
    \cR_{\mathrm{G}}(\bfP_{\bfU\bfY\bfZ|\bfX}| \bfX) &\teq \{(R_1,R_2) \in \mathbb{R}^{2}: R_1\geq \overline{I}(\bfX;\bfU, \bfY),\\
    &\quad R_2\geq \overline{I}(\bfX;\bfZ|\bfU, \bfY)\},
\end{align*}
where $(\bfX, \bfU, \bfY, \bfZ)$ is a sequence of RVs $(X^n, U^n, Y^n, Z^n)$ induced by $\bfP_{\bfU\bfY\bfZ|\bfX}$ and a general source $\bfX$. We note that $\cP_{\mathrm{G}}(D| \bfX)$ depends on $\cU$.

The main result of this section is the next theorem which gives a general formula for the rate-distortion region.
\begin{thm}
    \label{thm: general formula for the rate-distortion region}
    For a general source $\bfX$, real numbers $D_{1}, D_{2} \geq 0$, and any set $\cU$ such that $|\cU| \geq |\cX|$, we have
    \begin{align}
        \cR(D| \bfX) &= \bigcup_{\substack{\bfP_{\bfU\bfY\bfZ|\bfX} \in \cP_{\mathrm{G}}(D| \bfX)}} \cR_{\mathrm{G}}(\bfP_{\bfU\bfY\bfZ|\bfX}| \bfX).
        \label{equ: thm: RDR = general formula}
    \end{align}
\end{thm}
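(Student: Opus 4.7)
The plan is to prove \eqref{equ: thm: RDR = general formula} by establishing the two inclusions separately: the direct part ($\supseteq$) via the inner bound of Theorem \ref{thm: Cond for M1 M2}, and the converse ($\subseteq$) via the outer bound of Theorem \ref{thm: outer bound}. In both directions Corollary \ref{cor: smooth max renyi divergense is spectrum divergence} is invoked to pass from the smooth max R\'enyi (conditional) divergence to the spectral sup-mutual information rate, and a diagonal argument handles the interchange of the limits in $n$ and in the smoothing parameters.

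For the direct part, fix $\bfP_{\bfU\bfY\bfZ|\bfX} \in \cP_{\mathrm{G}}(D|\bfX)$ and any $(R_1,R_2) \in \cR_{\mathrm{G}}(\bfP_{\bfU\bfY\bfZ|\bfX}|\bfX)$. For each $n$, I would pick a vanishing slack $\eta_n \downarrow 0$ and set $\gamma_i^{(n)} \teq \Pr\{d_i^{(n)}(X^n,\cdot) > D_i + \eta_n\}$, which tends to $0$ by the definition of $\cP_{\mathrm{G}}(D|\bfX)$. Choose $\delta^{(n)},\beta^{(n)} \downarrow 0$ slowly enough that $(\delta^{(n)},\beta^{(n)},\gamma^{(n)}) \in \cS(\vep^{(n)})$ for some $\vep^{(n)} \downarrow 0$. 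Theorem \ref{thm: Cond for M1 M2} then supplies, for each $n$, a code at distortion level $D+(\eta_n,\eta_n)$ with excess-distortion probability at most $\vep^{(n)}$ and rates
\begin{align*}
    \tfrac{1}{n}\log M_1^{(n)} &\leq \tfrac{1}{n}I_{\infty}^{\delta_1^{(n)}}(X^n;U^n,Y^n) + o(1),\\
    \tfrac{1}{n}\log M_2^{(n)} &\leq \tfrac{1}{n}I_{\infty}^{\delta_2^{(n)}}(X^n;Z^n|U^n,Y^n) + o(1),
\end{align*}
the $o(1)$ absorbing the ceiling and the $\log\log(1/\beta_1)$ and $\log(I_\infty^{\delta_1}(\cdot)+\log(1/\beta_2))$ terms, each of which grows at most like $\log n$. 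A careful diagonalization letting $\delta^{(n)} \downarrow 0$ and Corollary \ref{cor: smooth max renyi divergense is spectrum divergence} then give $\limsup_n (1/n)\log M_1^{(n)} \leq \olI(\bfX;\bfU,\bfY) \leq R_1$ and $\limsup_n (1/n)\log M_2^{(n)} \leq \olI(\bfX;\bfZ|\bfU,\bfY) \leq R_2$, while $\vep^{(n)},\eta_n \to 0$ yield $\plimsup_n d_i^{(n)} \leq D_i$; hence $(R,D)$ is fm-achievable.

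For the converse, let $(R_1,R_2) \in \cR(D|\bfX)$ with a witnessing code sequence $\{(f_1^{(n)},f_2^{(n)},\varphi_1^{(n)},\varphi_2^{(n)})\}$. Using the injection $\mathrm{id}$ from the proof of Theorem \ref{thm: outer bound} (which exists because $|\cU| \geq |\cX|$), set $U^n \teq \mathrm{id}(f_1^{(n)}(X^n))$, $Y^n \teq \varphi_1^{(n)}(f_1^{(n)}(X^n))$, $Z^n \teq \varphi_2^{(n)}(f_1^{(n)}(X^n),f_2^{(n)}(X^n))$, and let $\bfP_{\bfU\bfY\bfZ|\bfX}$ be the induced sequence of conditional distributions. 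The hypotheses $\plimsup_n d_i^{(n)} \leq D_i$ of fm-achievability immediately give $\bfP_{\bfU\bfY\bfZ|\bfX} \in \cP_{\mathrm{G}}(D|\bfX)$. Applying Lemma \ref{lem: lower bound of image of g} to this code exactly as in the proof of Theorem \ref{thm: outer bound} yields, for every $\delta_1,\delta_2 \in (0,1]$,
\begin{align*}
    \tfrac{1}{n}\log M_1^{(n)} &\geq \tfrac{1}{n}I_{\infty}^{\delta_1}(X^n;U^n,Y^n) + \tfrac{1}{n}\log\delta_1,\\
    \tfrac{1}{n}\log M_2^{(n)} &\geq \tfrac{1}{n}I_{\infty}^{\delta_2}(X^n;Z^n|U^n,Y^n) + \tfrac{1}{n}\log\delta_2.
\end{align*}
Taking $\limsup_n$, using $\limsup_n R_i^{(n)} \leq R_i$, and then letting $\delta_1,\delta_2 \downarrow 0$ via Corollary \ref{cor: smooth max renyi divergense is spectrum divergence} give $R_1 \geq \olI(\bfX;\bfU,\bfY)$ and $R_2 \geq \olI(\bfX;\bfZ|\bfU,\bfY)$, i.e., $(R_1,R_2) \in \cR_{\mathrm{G}}(\bfP_{\bfU\bfY\bfZ|\bfX}|\bfX)$.

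The main obstacle is the scheduling of the six vanishing parameters in the direct part: $\delta^{(n)},\beta^{(n)},\gamma^{(n)}$ must tend to zero fast enough that $\vep^{(n)} \downarrow 0$ to secure the distortion-probability constraint, yet $\delta^{(n)}$ must tend to zero slowly enough that $(1/n) I_{\infty}^{\delta_i^{(n)}}$ still converges to the spectral sup-mutual information, whose limit in Corollary \ref{cor: smooth max renyi divergense is spectrum divergence} is stated only after first taking $\limsup_n$ and then $\delta \downarrow 0$. Resolving this requires a standard diagonal extraction, coupled with the closedness of $\cR(D|\bfX)$ noted after its definition, which lets us promote limiting rate inequalities obtained along a subsequence to membership in the claimed union.
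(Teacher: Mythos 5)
Your proposal is correct and follows essentially the same route as the paper: split into a direct part (Theorem~\ref{thm: Cond for M1 M2}) and a converse part (Theorem~\ref{thm: outer bound}), pass from $I_\infty^\delta$ to $\olI$ via Corollary~\ref{cor: smooth max renyi divergense is spectrum divergence}, and finish with a diagonal argument. Two cosmetic differences are worth noting. In the direct part, the paper first fixes $\epsilon$ (and hence $\delta,\beta,\gamma$), takes $\limsup_{n\to\infty}$ to obtain rate bounds $\olI(\bfX;\bfU,\bfY)$ and $\olI(\bfX;\bfZ|\bfU,\bfY)$ for each fixed $\epsilon$, and only then diagonalizes over codes; you instead drive all six parameters to zero with $n$ from the outset. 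Both are fine, but the paper's ordering is simpler because with $\delta$ fixed, the bound $\limsup_n \tfrac1n I_\infty^{\delta}(X^n;\cdot)\le\olI(\bfX;\cdot)$ follows at once from the monotonicity of $I_\infty^\delta$ in $\delta$, without having to control the speed at which $\delta^{(n)}$ vanishes — precisely the scheduling issue you flag as the ``main obstacle.'' Also, for the direct part the paper assumes $\olI(\bfX;\bfU,\bfY)<\infty$ and $\olI(\bfX;\bfZ|\bfU,\bfY)<\infty$ at the outset (otherwise $\cR_{\mathrm G}$ contributes nothing); your $o(1)$ absorption of the $\log\bigl(I_\infty^{\delta_1}(\cdot)+\log(1/\beta_2)\bigr)$ term implicitly relies on this finiteness, so it is worth stating. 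In the converse, you apply Lemma~\ref{lem: lower bound of image of g} directly to the code, whereas the paper first extracts a vanishing sequence $\gamma_n$ so that $(M^{(n)},D+\gamma_n)$ is $\gamma_n$-achievable and then invokes Theorem~\ref{thm: outer bound}; your version is a mild streamlining but reaches the same inequalities, and the membership $\bfP_{\bfU\bfY\bfZ|\bfX}\in\cP_{\mathrm G}(D|\bfX)$ indeed follows directly from the fm-achievability hypotheses.
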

\begin{rem}
    We can show that the RHS of \eqref{equ: thm: RDR = general formula} is a closed set by using the diagonal line argument (cf.\ \cite[Remark 5.7.5]{hanspringerinformation}).
\end{rem}
\begin{rem}
    We are not sure whether a sequence $\bfU$ of auxiliary RVs is really necessary or not. It may be possible to characterize the region without it.
\end{rem}

The proof of this theorem is presented in the subsequent two sections. In these sections, for a code, we denote
\begin{align*}
    \hat Y^n &= \varphi_1^{(n)}(f_1^{(n)}(X^n)),\\
    \hat Z^n &= \varphi_2^{(n)}(f_1^{(n)}(X^n),f_2^{(n)}(X^n)).
\end{align*}

\subsection{Direct Part}
In this section, we show
\begin{align}
    \cR(D| \bfX) &\supseteq \bigcup_{\bfP_{\bfU\bfY\bfZ|\bfX} \in \cP_{\mathrm{G}}(D| \bfX)} \cR_{\mathrm{G}}(\bfP_{\bfU\bfY\bfZ|\bfX}| \bfX).
    \label{equ: general direct part}
\end{align}

Let $\bfP_{\bfU\bfY\bfZ|\bfX} \in \cP_{\mathrm{G}}(D| \bfX)$ and suppose that $\overline{I}(\bfX; \bfU, \bfY) < \infty$ and $\overline{I}(\bfX; \bfZ| \bfU, \bfY) < \infty$.
Then, for any $\epsilon, \epsilon_1, \epsilon_2 > 0$ such that $\epsilon_1 = \epsilon_2 = \epsilon$, any $(\delta, \beta, \gamma) \in \cS(\epsilon)$ in \eqref{equ: cond of d g b e}, and every sufficiently large $n$, we have
\begin{align*}
    \Pr\{d_1^{(n)}(X^n,Y^n) > D_1 + \epsilon\} &\leq \gamma_1,\\
    \Pr\{d_2^{(n)}(X^n,Z^n) > D_2 + \epsilon\} &\leq \gamma_2.
\end{align*}
Hence, according to Theorem \ref{thm: Cond for M1 M2}, there exists a sequence of codes $\{(f_1^{(n)}, f_2^{(n)}, \varphi_1^{(n)}, \varphi_2^{(n)})\}$ such that for sufficiently large $n$,
\begin{align*}
    \Pr\{d_1^{(n)}(X^n, \hat Y^n)>D_1 + \epsilon\}&\leq \epsilon,\\
    \Pr\{d_2^{(n)}(X^n, \hat Z^n)>D_2 + \epsilon\} &\leq \epsilon,
\end{align*}
and
\begin{align*}
    \frac{1}{n}\log M_1^{(n)}
    &= \frac{1}{n}\log \left \lceil \exp\left(I_{\infty}^{\delta_{1}}(X^n; U^n, Y^n)\right)\log\frac{1}{\beta_1} \right \rceil,\\
    \frac{1}{n}\log M_2^{(n)}
    &= \frac{1}{n} \log \left \lceil \exp\left(I_{\infty}^{\delta_{2}}(X^n; Z^n| U^n, Y^n)\right)\right. \notag\\
    &\quad \times \left. \left(I_{\infty}^{\delta_{1}}(X^n; U^n, Y^n) + \log\frac{1}{\beta_2} \right) \right \rceil.
\end{align*}

Thus, we have
\begin{align*}
    \limsup_{n\ra\infty} R_{1}^{(n)}
    &\leq \limsup_{n\ra\infty}\frac{1}{n}I_{\infty}^{\delta_1}(X^n; U^n, Y^n)\\
    &\leqo{(a)} \overline{I}(\bfX; \bfU, \bfY),
\end{align*}
and
\begin{align*}
    & \limsup_{n\ra\infty} R_{2}^{(n)}\\
    &\leq \limsup_{n\ra\infty} \frac{1}{n} \log \left(\exp\left(I_{\infty}^{\delta_{2}}(X^n; Z^n| U^n, Y^n)\right)\right. \notag\\
    &\quad \times \left. \left(I_{\infty}^{\delta_{1}}(X^n; U^n, Y^n) + \log\frac{1}{\beta_2} + 1\right) \right)\\
    &\leq \limsup_{n\ra\infty} \frac{1}{n} \log \left(\exp\left(I_{\infty}^{\delta_{2}}(X^n; Z^n| U^n, Y^n)\right)\right. \notag\\
    &\quad \times \left. n\left(\limsup_{n \ra \infty}\frac{1}{n} I_{\infty}^{\delta_{1}}(X^n; U^n, Y^n) + \delta_{2} \right)\right)\\
    &\leqo{(a)} \limsup_{n\ra\infty} \frac{1}{n} \log \left(\exp\left(I_{\infty}^{\delta_{2}}(X^n; Z^n| U^n, Y^n)\right)\right.\notag\\
    &\quad \times \left. n\left(\overline{I}(\bfX; \bfU, \bfY) + \delta_{2} \right)\right)\\
    &\eqo{(b)} \limsup_{n\ra\infty}\frac{1}{n}I_{\infty}^{\delta_2}(X^n; Z^n| U^n, Y^n)\\
    &\leqo{(a)} \overline{I}(\bfX; \bfZ|\bfU, \bfY),
\end{align*}
where (a) comes from Corollary \ref{cor: smooth max renyi divergense is spectrum divergence} and the fact that $I_{\infty}^{\delta}$ is a non-increasing function of $\delta$, and (b) follows since $\overline{I}(\bfX; \bfU, \bfY) < \infty$.

Now, by using the usual diagonal line argument, we can construct a sequence of codes $\{(f_1^{(n)}, f_2^{(n)}, \varphi_1^{(n)}, \varphi_2^{(n)})\}$ such that
\begin{align*}
    \limsup_{n\ra\infty} R_1^{(n)}
    &\leq \overline{I}(\bfX; \bfU, \bfY),\\
    \limsup_{n\ra\infty} R_2^{(n)} 
    &\leq \overline{I}(\bfX; \bfZ|\bfU, \bfY).
\end{align*}
and for any $\epsilon > 0$,
\begin{align*}
    \lim_{n\ra\infty}\Pr\{d_1^{(n)}(X^n,\hat Y^n)>D_1 + \epsilon\}
    &=0,\\
    \lim_{n\ra\infty}\Pr\{d_2^{(n)}(X^n,\hat Z^n)>D_2 + \epsilon\}
    &=0.
\end{align*}
This implies that
\begin{align*}
    \plimsup_{n\ra\infty} d_1^{(n)}(X^n,\hat Y^n)
    &\leq D_1,\\
    \plimsup_{n\ra\infty} d_2^{(n)}(X^n,\hat Z^n)
    &\leq D_2.
\end{align*}
Thus, for any $\bfP_{\bfU\bfY\bfZ|\bfX} \in \cP_{\mathrm{G}}(D| \bfX)$ such that $\overline{I}(\bfX; \bfU, \bfY) \in \mathbb{R}$ and $\overline{I}(\bfX; \bfZ| \bfU, \bfY) \in \mathbb{R}$, we have
\begin{align*}
    (\overline{I}(\bfX; \bfU, \bfY), \overline{I}(\bfX; \bfZ|\bfU, \bfY)) \in \cR(D|\bfX).   
\end{align*}
This implies \eqref{equ: general direct part}.

\subsection{Converse Part}
In this section, we show that
\begin{align}
    \cR(D| \bfX) &\subseteq \bigcup_{\bfP_{\bfU\bfY\bfZ|\bfX} \in \cP_{\mathrm{G}}(D| \bfX)} \cR_{\mathrm{G}}(\bfP_{\bfU\bfY\bfZ|\bfX}| \bfX).
    \label{equ: general converse part}
\end{align}

Suppose that $(R, D)$ is fm-achievable. Then, there exists a sequence of codes satisfying
\begin{align*}
    \plimsup_{n\ra\infty} d_1^{(n)}(X^n,\hat Y^n)
    &\leq D_1,\\
    \plimsup_{n\ra\infty} d_2^{(n)}(X^n,\hat Z^n)
    &\leq D_2,
\end{align*}
and
\begin{align}
    \limsup_{n \ra \infty} \frac{1}{n} \log M_i^{(n)} \leq R_i, \quad \forall i \in \{1, 2\}.
    \label{equ: converse log M leq R}
\end{align}
Thus, we have for any $\gamma>0$
\begin{align*}
    \lim_{n\ra\infty}\Pr\{d_1^{(n)}(X^n,\hat Y^n)>D_1+\gamma\} &=0,\\
    \lim_{n\ra\infty}\Pr\{d_2^{(n)}(X^n,\hat Z^n)>D_2+\gamma\} &=0.
\end{align*}
This implies that there exists a sequence $\{\gamma_{n}\}_{n = 1}^{\infty}$ such that $\lim_{n \ra \infty} \gamma_{n} = 0$ and
\begin{align*}
    \Pr\{d_1^{(n)}(X^n,\hat Y^n)>D_1+\gamma_{n}\} \leq \gamma_{n},\\
    \Pr\{d_2^{(n)}(X^n,\hat Z^n)>D_2+\gamma_{n}\} \leq \gamma_{n}.
\end{align*}
This means that $(M^{(n)}, D + \gamma_{n})$ is $\gamma_{n}$-achievable. According to Theorem \ref{thm: outer bound}, there exists a sequence $\bfP_{\bfU\bfY\bfZ|\bfX} = \{P_{U^nY^nZ^n|X^n}\}_{n = 1}^{\infty}$ of conditional probability distributions such that $P_{U^nY^nZ^n|X^n} \in \cP(D + \gamma_{n}, \gamma_{n}| X^n)$ and for any $\delta \in (0, 1]^2$,
\begin{align*}
    \limsup_{n \ra \infty} \frac{1}{n}\log M_1^{(n)}
    &\geq \limsup_{n \ra \infty} \frac{1}{n}I_{\infty}^{\delta}(X^n; U^n, Y^n),\\
    \limsup_{n \ra \infty} \frac{1}{n}\log M_2^{(n)}
    &\geq \limsup_{n \ra \infty} \frac{1}{n}I_{\infty}^{\delta}(X^n; Z^n| U^n, Y^n).
\end{align*}
Since this holds for any $\delta \in (0, 1]^2$, we have
\begin{align}
    \limsup_{n\ra\infty}\frac{1}{n}\log M_1^{(n)}
    &\geq \lim_{\delta \downarrow 0} \limsup_{n\ra\infty}\frac{1}{n}I_{\infty}^{\delta}(X^n; U^n, Y^n)\notag\\
    &\eqo{(a)} \overline{I}(\bfX; \bfU, \bfY),
    \label{equ: converse M1 geq supectrum I}\\
    \limsup_{n\ra\infty}\frac{1}{n}\log M_2^{(n)}
    &\geq \lim_{\delta \downarrow 0} \limsup_{n\ra\infty}\frac{1}{n}I_{\infty}^{\delta}(X^n; Z^n| U^n, Y^n)\notag\\
    &\eqo{(a)} \overline{I}(\bfX; \bfZ| \bfU, \bfY).
    \label{equ: converse M2 geq supectrum I}
\end{align}
where (a) are comes from Corollary \ref{cor: smooth max renyi divergense is spectrum divergence}. On the other hand, since $P_{U^nY^nZ^n|X^n} \in \cP(D + \gamma_{n}, \gamma_{n}| X^n)$, $\bfP_{\bfU\bfY\bfZ|\bfX}$ must satisfy $\overline{D}_{1}(\bfX; \bfY) \leq D_{1}$ and $\overline{D}_{2}(\bfX; \bfZ) \leq D_{2}$, i.e., $\bfP_{\bfU\bfY\bfZ|\bfX} \in \cP_{\mathrm{G}}(D| \bfX)$.

By combining \eqref{equ: converse log M leq R}, \eqref{equ: converse M1 geq supectrum I}, and \eqref{equ: converse M2 geq supectrum I}, we can conclude that for any fm-achievable pair $(R, D)$,
\begin{align*}
    (R_{1}, R_{2}) &\in \bigcup_{\bfP_{\bfU\bfY\bfZ|\bfX} \in \cP_{\mathrm{G}}(D| \bfX)} \cR_{\mathrm{G}}(\bfP_{\bfU\bfY\bfZ|\bfX}| \bfX).
\end{align*}
This implies \eqref{equ: general converse part}.

\subsection{Discrete Stationary Memoryless Sources}
In this section, we show that the rate-distortion region given in Theorem \ref{thm: general formula for the rate-distortion region} coincides with the region by Rimoldi \cite{272493} when a source $\bfX$ is a discrete stationary memoryless source.

Let $\cX$, $\cY$, and $\cZ$ be finite sets. Since $\bfX = \{X^n\}_{n = 1}^{\infty}$ is a discrete stationary memoryless source, we assume that $X^n = (X_{1}, X_{2}, \cdots, X_{n})$ is a sequence of independent copies of an RV $X$ on $\cX$. We also assume that distortion measures $d_{1}^{(n)}$ and $d_{2}^{(n)}$ are additive, i.e., for two functions $d_{1}: \cX \times \cY \ra [0, +\infty)$ and $d_{2}: \cX \times \cZ \ra [0, +\infty)$, distortion measures are represented as
\begin{align*}
    d_{1}^{(n)}(x^n, y^n) &= \frac{1}{n} \sum_{i = 1}^{n} d_{1}(x_{i}, y_{i}),\\
    d_{2}^{(n)}(x^n, z^n) &= \frac{1}{n} \sum_{i = 1}^{n} d_{2}(x_{i}, z_{i}).
\end{align*}
We define
\begin{align*}
    \cP_{\mathrm{M}}(D| X) &\teq \{P_{YZ|X} \in \cP_{\cY \cZ| \cX}: \mathrm{E}[d_{1}(X,Y)] \leq D_1,\\
    &\quad \mathrm{E}[d_{2}(X,Z)] \leq D_2\},
\end{align*}
and for $P_{YZ|X} \in \cP_{\cY \cZ| \cX}$,
\begin{align*}
    \cR_{\mathrm{M}}(P_{YZ|X}| X) &\teq \{(R_1,R_2) \in \mathbb{R}_{\geq 0}^{2}: R_1\geq I(X;Y),\notag\\
    &\quad R_1+R_2 \geq I(X;Y,Z)\},
\end{align*}
where $(X, Y, Z)$ is the tuple of RVs induced by a conditional probability distribution $P_{YZ|X} \in \cP_{\cY\cZ|\cX}$ and a given RV $X$. Then, we have the next theorem.
\begin{thm}
    For a discrete stationary memoryless source $\bfX$, additive distortion measures, and any set $\cU$ such that $|\cU| \geq |\cZ| + 1$, we have
    \begin{align}
        & \bigcup_{\bfP_{\bfU\bfY\bfZ|\bfX} \in \cP_{\mathrm{G}}(D| \bfX)} \cR_{\mathrm{G}}(\bfP_{\bfU\bfY\bfZ|\bfX}| \bfX)\notag\\
        &= \bigcup_{P_{YZ|X} \in \cP_{\mathrm{M}}(D| X)} \cR_{\mathrm{M}}(P_{YZ|X}| X).
        \label{equ: thm: general formula = i.i.d}
    \end{align}
\end{thm}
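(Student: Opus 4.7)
My plan is to prove the two inclusions in \eqref{equ: thm: general formula = i.i.d} separately: $\supseteq$ via an explicit auxiliary-RV construction, and $\subseteq$ by time-sharing together with a subsequential limit.

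For the $\supseteq$ direction, I would fix $P_{YZ|X} \in \cP_{\mathrm{M}}(D|X)$ and an arbitrary $(R_1, R_2) \in \cR_{\mathrm{M}}(P_{YZ|X}|X)$, so $R_1 \geq I(X;Y)$ and $R_1 + R_2 \geq I(X;Y,Z)$. I build a single-letter $P_{UYZ|X}$ on $\cU = \cZ \cup \{\ast\}$ (which fits inside any $\cU$ with $|\cU| \geq |\cZ|+1$) by introducing $V \sim \mathrm{Bernoulli}(p)$ independent of $(X,Y,Z)$ and setting $U = Z$ if $V=1$ and $U = \ast$ if $V=0$. A short chain-rule calculation using $I(X;V) = 0$ and $I(X;U|Y,Z) = I(X;V|Y,Z) = 0$ yields
\begin{align*}
    I(X;U,Y) &= (1-p)\,I(X;Y) + p\,I(X;Y,Z),\\
    I(X;Z|U,Y) &= (1-p)\,I(X;Z|Y),
\end{align*}
so as $p$ ranges over $[0,1]$ the pair $(I(X;U,Y), I(X;Z|U,Y))$ sweeps the slope-$(-1)$ segment between $(I(X;Y), I(X;Z|Y))$ and $(I(X;Y,Z), 0)$. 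I select $p$ so that $I(X;U,Y) \leq R_1$ and $I(X;Z|U,Y) \leq R_2$ (with $p=1$ if $R_1 \geq I(X;Y,Z)$), and extend $P_{UYZ|X}$ to $\bfP_{\bfU\bfY\bfZ|\bfX}$ by product distributions. The WLLN applied to the resulting sums of i.i.d.\ log-densities and to the additive distortion gives $\overline{I}(\bfX;\bfU,\bfY) = I(X;U,Y)$, $\overline{I}(\bfX;\bfZ|\bfU,\bfY) = I(X;Z|U,Y)$, and $\overline{D}_i(\bfX,\cdot) = \mathrm{E}[d_i(X,\cdot)] \leq D_i$. Hence $\bfP \in \cP_{\mathrm{G}}(D|\bfX)$ and $(R_1, R_2) \in \cR_{\mathrm{G}}(\bfP|\bfX)$.

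For the $\subseteq$ direction, given any $\bfP \in \cP_{\mathrm{G}}(D|\bfX)$ and $(R_1, R_2) \in \cR_{\mathrm{G}}(\bfP|\bfX)$, I construct a single-letter $P^*_{YZ|X} \in \cP_{\mathrm{M}}(D|X)$ with $(R_1,R_2) \in \cR_{\mathrm{M}}(P^*_{YZ|X}|X)$. For each $n$ introduce a time-sharing $T_n$ uniform on $\{1,\dots,n\}$ independent of $(X^n, U^n, Y^n, Z^n)$ and set $\tilde X_n = X_{T_n}$, $\tilde Y_n = Y_{T_n}$, $\tilde Z_n = Z_{T_n}$; memorylessness gives $\tilde X_n \sim P_X$. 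Since $\cX \times \cY \times \cZ$ is finite, by compactness some subsequence of the joint distributions $\{P_{\tilde X_n \tilde Y_n \tilde Z_n}\}$ converges to $P^*_{XYZ}$ with $P^*_X = P_X$. Boundedness of $d_i$ on the finite alphabet together with $\overline{D}_i(\bfX,\cdot) \leq D_i$ forces $\limsup_n \mathrm{E}[d_i^{(n)}(X^n,\cdot)] \leq D_i$, and the identity $\mathrm{E}[d_i^{(n)}(X^n,\cdot)] = \mathrm{E}[d_i(\tilde X_n,\cdot)]$ transfers this to $\mathrm{E}_{P^*}[d_i(X,\cdot)] \leq D_i$, so $P^*_{YZ|X} \in \cP_{\mathrm{M}}(D|X)$. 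Standard single-letterization (chain rule, i.i.d.\ $X^n$, and $I(X_{T_n};T_n)=0$) yields
\begin{align*}
    \frac{1}{n} I(X^n; U^n, Y^n) &\geq I(\tilde X_n; \tilde Y_n),\\
    \frac{1}{n} I(X^n; U^n, Y^n, Z^n) &\geq I(\tilde X_n; \tilde Y_n, \tilde Z_n);
\end{align*}
and subadditivity of $\plimsup$ applied to the chain-rule decomposition of $\log\frac{P_{U^nY^nZ^n|X^n}}{P_{U^nY^nZ^n}}$ gives $\overline{I}(\bfX;\bfU,\bfY,\bfZ) \leq \overline{I}(\bfX;\bfU,\bfY) + \overline{I}(\bfX;\bfZ|\bfU,\bfY) \leq R_1 + R_2$. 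Continuity of mutual information on the finite alphabet passes $I(\tilde X_{n_k}; \tilde Y_{n_k}) \to I_{P^*}(X;Y)$ and $I(\tilde X_{n_k}; \tilde Y_{n_k}, \tilde Z_{n_k}) \to I_{P^*}(X;Y,Z)$ along the convergent subsequence.

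The main technical obstacle is the bridging inequality $\overline{I}(\bfX;\bfU,\bfY) \geq \limsup_n \frac{1}{n} I(X^n; U^n, Y^n)$ (and its analogue with $Z^n$ adjoined), which connects the $\plimsup$-in-probability spectral info rate to the $\limsup$ of the expected info rates; these need not coincide in general. For finite alphabets, the info density's positive part is uniformly $L^1$-bounded since $\mathrm{E}\bigl[\bigl(\frac{1}{n}\log\frac{P_{U^nY^n|X^n}}{P_{U^nY^n}}\bigr)^{+}\bigr] \leq \frac{1}{n} H(U^n,Y^n) \leq \log(|\cU||\cY|)$, so a truncation argument paralleling the proof of Lemma \ref{lem: smooth max renyi divergense is spectrum divergence} (splitting on the event that the info density exceeds $\overline{I}+\epsilon$, whose probability vanishes by definition of $\plimsup$) yields the bridging inequality. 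Composed with the single-letterization and subsequential limit above, this delivers $R_1 \geq I_{P^*}(X;Y)$ and $R_1+R_2 \geq I_{P^*}(X;Y,Z)$, placing $(R_1,R_2) \in \cR_{\mathrm{M}}(P^*_{YZ|X}|X)$ and completing the $\subseteq$ inclusion.
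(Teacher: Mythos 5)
Your $\supseteq$ direction (building $U$ by Bernoulli time-sharing between $Z$ and a fresh symbol $\ast$) is the same construction the paper uses, and the verification via $\overline{I}(\bfX;\cdot)$ reducing to single-letter $I(X;\cdot)$ for i.i.d.\ triples is correct.

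Your $\subseteq$ direction takes a genuinely different route. The paper first drops $\bfU$ using $\overline{I}(\bfX;\bfU,\bfY)+\overline{I}(\bfX;\bfZ|\bfU,\bfY)\geq\overline{I}(\bfX;\bfY,\bfZ)$, then invokes Han's Lemmas~5.8.1--5.8.2 to pass to product conditionals, single-letterizes via the argument on \cite[p.372]{hanspringerinformation}, and finally closes the remaining $\gamma,\delta$ slack using convexity and continuity of $R_1(D_1)$ and $R_{\rmb}(R_1,D_1,D_2)$ (Remark~\ref{rem: boundary and continuity}). You instead introduce a time-sharing index $T_n$, extract a convergent subsequence of $P_{\tilde X_n\tilde Y_n\tilde Z_n}$ by compactness of the finite simplex, and exploit continuity of $I$ and of $\mathrm{E}[d_i]$. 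Your route is more self-contained: it does not rely on the RDF-boundary continuity facts from \cite{481803,kostina2017successive}, and it makes explicit a clean bridge between the information-spectrum and average-quantity worlds. Both approaches use finiteness of the alphabets in an essential way.

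However, the justification you give for that bridge --- $\overline{I}(\bfX;\bfU,\bfY)\geq\limsup_n \tfrac{1}{n}I(X^n;U^n,Y^n)$ --- has a gap. You deduce it from $L^1$-boundedness of the positive part of the normalized information density $i_n$ plus a truncation at $\overline{I}+\epsilon$. Splitting $\mathrm{E}[i_n]=\mathrm{E}[i_n\mathbf{1}\{i_n\leq\overline{I}+\epsilon\}]+\mathrm{E}[i_n\mathbf{1}\{i_n>\overline{I}+\epsilon\}]$, the first term is indeed $\leq\overline{I}+\epsilon$, but for the second term to vanish you need uniform integrability of $\{i_n^+\}$, which $L^1$-boundedness alone does not provide (in general, $\plimsup S_n$ can be strictly smaller than $\limsup\mathrm{E}[S_n]$, e.g.\ $S_n=n$ with probability $1/n$). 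The fix is available precisely because $\bfX$ is memoryless over a finite alphabet: by Bayes,
\begin{align*}
    i_n \;=\; \frac{1}{n}\log\frac{P_{X^n|U^nY^n}(X^n|U^n,Y^n)}{P_{X^n}(X^n)} \;\leq\; \frac{1}{n}\log\frac{1}{P_{X^n}(X^n)} \;=\; \frac{1}{n}\sum_{i=1}^n\log\frac{1}{P_X(X_i)} \;\leq\; \max_{x:\,P_X(x)>0}\log\frac{1}{P_X(x)} \;<\;\infty,
\end{align*}
so $i_n$ is uniformly bounded above by a constant $C$ independent of $n$ (and of $\bfP_{\bfU\bfY\bfZ|\bfX}$), which makes the truncation argument go through: $\mathrm{E}[i_n\mathbf{1}\{i_n>\overline{I}+\epsilon\}]\leq C\,\Pr\{i_n>\overline{I}+\epsilon\}\to 0$. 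With this replacement your $\subseteq$ argument is sound; as written, the deduction from $L^1$-boundedness alone would not survive scrutiny and would not extend beyond the memoryless finite-alphabet setting.
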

\begin{rem}
    \label{rem:CardinalBound}
    According to Theorem \ref{thm: general formula for the rate-distortion region}, the above theorem holds even if $|\cU| \geq |\cZ|$. However, we assume that $|\cU| \geq |\cZ| + 1$ in order to prove the above theorem without the help of Theorem \ref{thm: general formula for the rate-distortion region}.
\end{rem}

\begin{rem}
    \label{rem: boundary and continuity}
    The RHS of \eqref{equ: thm: general formula = i.i.d} can be written as
    \mathindent=0mm
    \begin{align}
        \{(R_1, R_2) \in \mathbb{R}_{\geq 0}^{2}: R_1 \geq R_{1}(D_1), R_1 + R_2 \geq R_{\mathrm{b}}(R_1, D_1, D_2)\},
        \label{equ: i.i.d. boundary}
    \end{align}
    \mathindent=7mm
    where $R_{1}(D_1)$ is the rate-distortion function and $R_{\mathrm{b}}(R_1, D_1, D_2)$ gives the boundary for a given $R_1$, which are defined as (see, e.g., \cite[Corollary 1]{481803}, \cite[(22)]{zhou2017second})
    \begin{align*}
        R_{1}(D_1) &\teq \min_{P_{Y|X} \in \cP_{\cY|\cX}: \mathrm{E}[d_{1}(X,Y)] \leq D_1} I(X; Y),\\
        R_{\mathrm{b}}(R_1, D_1, D_2) &\teq \min_{\substack{P_{YZ|X} \in \cP_{\cY\cZ|\cX}:\\  \mathrm{E}[d_{1}(X,Y)] \leq D_1, \mathrm{E}[d_{2}(X,Z)] \leq D_2, \\ I(X; Y) \leq R_1}} I(X; Y, Z).
    \end{align*}
    We note that $R_{1}(D_1)$ and $R_{\mathrm{b}}(R_1, D_1, D_2)$ are convex and continuous functions of a triple $(R_1, D_1, D_2)$ (see, e.g., \cite[Remark 5.2.1]{hanspringerinformation} and \cite[Lemma 4]{kostina2017successive}).
\end{rem}

We will prove the theorem by two parts separately.
\begin{proof}[{Proof: The left-hand side (LHS) of \eqref{equ: thm: general formula = i.i.d} $\subseteq$ The RHS of \eqref{equ: thm: general formula = i.i.d}}]
    We have
    \begin{align}
        & \bigcup_{\bfP_{\bfU\bfY\bfZ|\bfX} \in \cP_{\mathrm{G}}(D| \bfX)} \cR_{\mathrm{G}}(\bfP_{\bfU\bfY\bfZ|\bfX}| \bfX)\notag\\
        &\subseteq \bigcup_{\bfP_{\bfU\bfY\bfZ|\bfX} \in \cP_{\mathrm{G}}(D| \bfX)} \{(R_1,R_2): R_1\geq \overline{I}(\bfX;\bfU, \bfY),\notag\\
        &\quad R_1+R_2 \geq \overline{I}(\bfX;\bfU, \bfY) + \overline{I}(\bfX;\bfZ|\bfU, \bfY)\}\notag\\
        &\overset{\mathrm{(a)}}{\subseteq} \bigcup_{\bfP_{\bfU\bfY\bfZ|\bfX} \in \cP_{\mathrm{G}}(D| \bfX)} \{(R_1,R_2): R_1\geq \overline{I}(\bfX;\bfU, \bfY),\notag\\
        &\quad R_1+R_2 \geq \overline{I}(\bfX;\bfU,\bfY,\bfZ)\}\notag\\
        &\overset{\mathrm{(b)}}{\subseteq} \bigcup_{\substack{\bfP_{\bfY\bfZ|\bfX}:\notag\\ \overline{D}_{1}(\bfX, \bfY) \leq D_1, \overline{D}_{2}(\bfX, \bfZ) \leq D_2}}\{(R_1,R_2): R_1\geq \overline{I}(\bfX;\bfY),\notag\\
        &\quad R_1+R_2 \geq \overline{I}(\bfX;\bfY,\bfZ)\},
        \label{equ: i.i.d. starting inequality}
    \end{align}
    where (a) and (b) respectively come from the fact that
    \begin{align*}
        \overline{I}(\bfX;\bfU, \bfY) + \overline{I}(\bfX;\bfZ|\bfU, \bfY) &\geq \overline{I}(\bfX; \bfU, \bfY, \bfZ),\\
        \overline{I}(\bfX; \bfU, \bfY) &\geq \overline{I}(\bfX; \bfY).
    \end{align*}

    For a sequence $(\bfX, \bfY, \bfZ) = \{(X^n, Y^n, Z^n)\}$ of RVs induced by $\bfP_{\bfY\bfZ|\bfX}$ and a given source $\bfX$, let $(\bfX, \bar \bfY, \bar \bfZ) = \{(X^n, \bar Y^n, \bar Z^n)\}$ be a sequence of RVs such that $(X_1,\bar{Y}_{1},\bar{Z}_{1}), (X_2,\bar{Y}_{2},\bar{Z}_{2}), \cdots, (X_n,\bar{Y}_{n},\bar{Z}_{n})$ are independent and 
    \begin{align*}
        P_{X_{i}\bar{Y}_{i}\bar{Z}_{i}}(x, y, z) = P_{X_{i} Y_{i} Z_{i}}(x, y, z),
    \end{align*}
    where $P_{X_{i} Y_{i} Z_{i}}(x, y, z)$ is the $i$-th marginal distribution of $(X^n, Y^n, Z^n)$. Then, according to \cite[Lemma 5.8.1 and 5.8.2]{hanspringerinformation}, we have
    $\overline{D}_{1}(\bfX, \bfY) \geq \overline{D}_{1}(\bfX, \bar{\bfY})$, $\overline{D}_{2}(\bfX, \bfZ) \geq \overline{D}_{2}(\bfX, \bar{\bfZ})$, $\overline{I}(\bfX;\bfY) \geq \overline{I}(\bfX;\bar{\bfY})$, and $\overline{I}(\bfX;\bfY,\bfZ) \geq \overline{I}(\bfX;\bar{\bfY},\bar{\bfZ})$. Thus, by introducing the set $\cI$ of probability distributions for independent RVs as
    \begin{align*}
        \cI &\teq \left\{\bfP_{\bfY\bfZ|\bfX} = \{P_{Y^nZ^n|X^n}\}: P_{Y^nZ^n|X^n} = \prod_{i = 1}^{n} P_{Y_{i}Z_{i}|X_{i}},\right.\\
        &\quad \left. \exists P_{Y_{i}Z_{i}|X_{i}} \in \cP_{\cY\cZ|\cX}, i \in [1: n] \right\},
    \end{align*}
    we have
    \begin{align}
        &\bigcup_{\substack{\bfP_{\bfY \bfZ| \bfX}:\\ \overline{D}_{1}(\bfX, \bfY) \leq D_1, \overline{D}_{2}(\bfX, \bfZ) \leq D_2}}\{(R_1,R_2): R_1\geq \overline{I}(\bfX;\bfY),\notag\\
        &\quad R_1+R_2 \geq \overline{I}(\bfX;\bfY,\bfZ)\}\notag\\
        &\subseteq \bigcup_{\substack{\bfP_{\bfY \bfZ| \bfX} \in \cI:\\ \overline{D}_{1}(\bfX, \bfY) \leq D_1, \overline{D}_{2}(\bfX, \bfZ) \leq D_2}}\{(R_1,R_2): R_1\geq \overline{I}(\bfX;\bfY),\notag\\
        &\quad R_1+R_2 \geq \overline{I}(\bfX;\bfY,\bfZ)\}.
        \label{equ: i.i.d. general in cl of memoryless}
    \end{align}

    On the other hand, in the same way as \cite[p.372]{hanspringerinformation}, for any $\delta > 0$, $\gamma > 0$, and any $\bfP_{\bfY\bfZ|\bfX} \in \cI$ such that $\overline{D}_{1}(\bfX, \bfY)\leq D_1$ and $\overline{D}_{2}(\bfX, \bfZ)\leq D_2$, there exists $P_{YZ|X} \in \cP_{\cY\cZ|\cX}$ such that
    \begin{align*}
        \overline{I}(\bfX;\bfY)&\geq I(X;Y) - \delta,\\
        \overline{I}(\bfX;\bfY,\bfZ)&\geq I(X;Y,Z) - \delta,
    \end{align*}
    and
    \begin{align*}
        D_1 &\geq {\rm E}[d_{1}(X,Y)] - \gamma,\\
        D_2 &\geq {\rm E}[d_{2}(X,Z)] - \gamma.
    \end{align*}
    Thus, we have
    \begin{align}
        &\bigcup_{\substack{\bfP_{\bfY\bfZ|\bfX} \in \cI:\\ \overline{D}_{1}(\bfX, \bfY) \leq D_1, \overline{D}_{2}(\bfX, \bfZ) \leq D_2}}\{(R_1,R_2): R_1\geq \overline{I}(\bfX;\bfY),\notag\\
        &\quad R_1+R_2 \geq \overline{I}(\bfX;\bfY,\bfZ)\}\notag\\
        &\subseteq \bigcap_{\gamma>0}\bigcap_{\delta>0}\bigcup_{P_{YZ|X} \in \cP_{\mathrm{M}}(D + \gamma| X)}\{(R_1,R_2):\notag\\
        &\quad R_1\geq I(X;Y)-\delta, R_1+R_2 \geq I(X;Y,Z)-\delta\}\notag\\
        &= \bigcap_{\gamma>0} \bigcap_{\delta>0} \{(R_1, R_2): R_1 + \delta \geq R_{1}(D_1 + \gamma),\notag\\
        &\quad R_1 + R_2 + \delta \geq R_{\mathrm{b}}(R_1 + \delta, D_1 + \gamma, D_2 + \gamma)\},
        \label{equ: i.i.d. cl of memoryless in cl of boundary}
    \end{align}
    where the last equality comes from \eqref{equ: i.i.d. boundary}.

    Since $R_{1}(D_1)$ and $R_{\mathrm{b}}(R_1, D_1, D_2)$ are convex and continuous functions of a triple $(R_1, D_1, D_2)$ (see Remark \ref{rem: boundary and continuity}), it holds that for any $\epsilon > 0$, there exist sufficiently small $\gamma_{\epsilon} > 0$ and $\delta_{\epsilon} > 0$ such that
    \begin{align}
        &\bigcap_{\delta > 0} \bigcap_{\gamma > 0} \{(R_1, R_2): R_1 + \delta \geq R_{1}(D_1 + \gamma),\notag\\
        &\quad R_1 + R_2 + \delta \geq R_{\mathrm{b}}(R_1 + \delta, D_1 + \gamma, D_2 + \gamma)\}\notag\\
        &\subseteq \bigcap_{\delta_{\epsilon} > \delta > 0} \bigcap_{\gamma_{\epsilon}> \gamma > 0} \{(R_1, R_2): R_1 + \delta \geq R_{1}(D_1 + \gamma),\notag\\
        &\quad R_1 + R_2 + \delta \geq R_{\mathrm{b}}(R_1 + \delta, D_1 + \gamma, D_2 + \gamma)\}\notag\\
        &\subseteq \bigcap_{\delta_{\epsilon} > \delta > 0} \bigcap_{\gamma_{\epsilon}> \gamma > 0} \{(R_1, R_2): R_1 \geq R_{1}(D_1) - \delta - \epsilon,\notag\\
        &\quad R_1 + R_2 \geq R_{\mathrm{b}}(R_1, D_1, D_2) - \delta - \epsilon\}\notag\\
        &= \bigcap_{\delta_{\epsilon} > \delta > 0} \{(R_1, R_2): R_1\geq R_{1}(D_1) - \delta - \epsilon,\notag\\
        &\quad R_1+ R_2 \geq R_{\mathrm{b}}(R_1, D_1, D_2) - \delta - \epsilon\}.
        \label{equ: i.i.d. continuity of boundary points}
    \end{align}
    By combining \eqref{equ: i.i.d. starting inequality}, \eqref{equ: i.i.d. general in cl of memoryless}, \eqref{equ: i.i.d. cl of memoryless in cl of boundary}, and \eqref{equ: i.i.d. continuity of boundary points}, we have
    \begin{align*}
        & \bigcup_{\bfP_{\bfU\bfY\bfZ|\bfX} \in \cP_{\mathrm{G}}(D| \bfX)} \cR_{\mathrm{G}}(\bfP_{\bfU\bfY\bfZ|\bfX}| \bfX)\notag\\
        &\subseteq \bigcap_{\epsilon > 0} \bigcap_{\delta_{\epsilon} > \delta > 0} \{(R_1, R_2): R_1\geq R_{1}(D_1) - \delta - \epsilon,\notag\\
        &\quad R_1 + R_2 \geq R_{\mathrm{b}}(R_1, D_1, D_2) - \delta - \epsilon\}\notag\\
        &= \{(R_1, R_2): R_1\geq R_{1}(D_1), R_1 + R_2 \geq R_{\mathrm{b}}(R_1, D_1, D_2)\}.
    \end{align*}
    According to Remark \ref{rem: boundary and continuity}, this completes the proof.
\end{proof}

\begin{proof}[{Proof: The LHS of \eqref{equ: thm: general formula = i.i.d} $\supseteq$ The RHS of \eqref{equ: thm: general formula = i.i.d}}]    
    Since $|\cU| \geq |\cZ| + 1$, there exists an injective function $\mathrm{id}: \cZ \ra \cU$. For this function, let $\cU_{\mathrm{id}} \subseteq \cU$ be the image of $\mathrm{id}$ and $\mathrm{id}^{-1}: \cU_{\mathrm{id}} \ra \cZ$ be the inverse function of $\mathrm{id}$ on $\cU_{\mathrm{id}}$. Let $u^* \in \cU$ be a symbol such that $u^* \notin \cU_{\mathrm{id}}$.
    
    For any $P_{YZ|X} \in \cP_{\mathrm{M}}(D| X)$ and any $\alpha \in [0, 1]$, we define $P_{UYZ|X} \in \cP_{\cU\cY\cZ| \cX}$ as
    \mathindent=0mm
    \begin{align*}
        P_{UYZ|X}(u, y, z| x) \teq
        \begin{cases}
            \alpha P_{YZ|X}(y, z| x) & \mbox{ if } u = \mathrm{id}(z),\\
            (1 - \alpha) P_{YZ|X}(y, z| x) & \mbox{ if } u = u^*,\\
            0 & \mbox{ otherwise}.
        \end{cases}
    \end{align*}
    \mathindent=7mm
    Since 
    \begin{align*}
        P_{XUY}(x, u, y) &= 
        \begin{cases}
            \alpha P_{XYZ}(x, y, \mathrm{id}^{-1}(u)) & \mbox{ if } u \in \cU_{\mathrm{id}},\\
            (1 - \alpha) P_{XY}(x, y) & \mbox{ if } u = u^*,\\
            0 & \mbox{ otherwise},
        \end{cases}\\
        P_{X|UY}(x| u, y) &=
        \begin{cases}
            P_{X|YZ}(x| y, \mathrm{id}^{-1}(u)) & \mbox{ if } u \in \cU_{\mathrm{id}},\\
            P_{X|Y}(x| y) & \mbox{ if } u = u^*,\\
            0 & \mbox{ otherwise},
        \end{cases}
    \end{align*}
    we have
    \begin{align*}
        I(X; U, Y) 
        &= H(X) - \alpha H(X|Y, Z) - (1 - \alpha) H(X|Y)\\
        &= \alpha I(X; Y, Z) + (1 - \alpha) I(X; Y).
    \end{align*}
    Furthermore, since
    \begin{align*}
        P_{X|UYZ}(x|u, y, z) &=
        \begin{cases}
            P_{X|YZ}(x|y, z) & \mbox{ if } u = \mathrm{id}(z),\\
            P_{X|YZ}(x|y, z) & \mbox{ if } u = u^*,\\
            0 & \mbox{ otherwise},
        \end{cases}
    \end{align*}
    we have
    \mathindent=0mm
    \begin{align*}
        I(X; Z| U, Y) &= H(X|U, Y) - H(X|U, Y, Z)\\
        &= \alpha H(X|Y, Z) + (1 - \alpha) H(X|Y) - H(X|Y, Z)\\
        &= (1 - \alpha) I(X; Z| Y).
    \end{align*}
    \mathindent=7mm
    
    Now by defining $\bfP_{\bfU\bfY\bfZ|\bfX}$ as
    \begin{align*}
        P_{U^nY^nZ^n|X^n}(u^n, y^n, z^n| x^n) &= \prod_{i = 1}^{n} P_{UYZ|X}(u_i, y_i, z_i| x_i),
    \end{align*}
    $(\bfX, \bfU,\bfY,\bfZ)$ becomes a sequence of independent copies of RVs $(X, U, Y, Z)$. Thus, we have
    \mathindent=0mm    
    \begin{align*}
        \overline{I}(\bfX; \bfU, \bfY) &= I(X; U, Y) = \alpha I(X; Y, Z) + (1 - \alpha) I(X; Y),\\
        \overline{I}(\bfX;\bfZ|\bfU, \bfY) &= I(X; Z| U, Y) = (1 - \alpha) I(X; Z| Y),\\
        \overline{D}_{1}(\bfX, \bfY) &= \mathrm{E}[d_{1}(X, Y)] \leq D_1,\\
        \overline{D}_{2}(\bfX, \bfZ) &= \mathrm{E}[d_{2}(X, Z)] \leq D_2,
    \end{align*}
    \mathindent=7mm
    where we use the fact that for i.i.d.\ RVs $\{A_{i}\}_{i = 1}^{\infty}$ taking values in a finite set, $\plimsup_{n \ra \infty} \frac{1}{n} \sum_{i = 1}^{n} A_{i} = \mathrm{E}[A_{1}]$. Hence, by noting that $\bfP_{\bfU\bfY\bfZ|\bfX} \in \cP_{\mathrm{G}}(D| \bfX)$, we have for any $\alpha \in [0, 1]$,
    \begin{align}
        &(\alpha I(X; Y, Z) + (1 - \alpha) I(X; Y), (1 - \alpha) I(X; Z| Y))\notag\\
        &\quad \in \bigcup_{\bfP_{\bfU\bfY\bfZ|\bfX} \in \cP_{\mathrm{G}}(D| \bfX)} \cR_{\mathrm{G}}(\bfP_{\bfU\bfY\bfZ|\bfX}| \bfX).
        \label{equ: memoryless alpha rate in RG}
    \end{align}
    This implies that for any $(R_1, R_2) \in \cR_{\mathrm{M}}(P_{YZ|X}| X)$, it holds that $(R_1, R_2) \in \bigcup_{\bfP_{\bfU\bfY\bfZ|\bfX} \in \cP_{\mathrm{G}}(D| \bfX)} \cR_{\mathrm{G}}(\bfP_{\bfU\bfY\bfZ|\bfX}| \bfX)$. This is because for any $(R_1, R_2) \in \cR_{\mathrm{M}}(P_{YZ|X}| X)$ such that $I(X;Y,Z) \geq R_1$, there exists $\alpha \in [0, 1]$ such that
    \begin{align*}
        R_1 &= \alpha I(X;Z|Y) + I(X;Y)\\
        &= \alpha I(X;Y,Z)+(1-\alpha)I(X;Y).
    \end{align*}
    Then, we have
    \begin{align}
        R_2&\geq I(X;Y,Z) - R_1\notag\\
        &= I(X;Y,Z) - \alpha I(X;Y,Z) - (1-\alpha)I(X;Y)\notag\\
        &= (1-\alpha)I(X;Z|Y).
        \label{equ: time-sharing argument}
    \end{align}
    According to \eqref{equ: memoryless alpha rate in RG}, such pair $(R_1, R_2)$ is included in the region $\bigcup_{\bfP_{\bfU\bfY\bfZ|\bfX} \in \cP_{\mathrm{G}}(D| \bfX)} \cR_{\mathrm{G}}(\bfP_{\bfU\bfY\bfZ|\bfX}| \bfX)$. On the other hand, for any $(R_1, R_2) \in \cR_{\mathrm{M}}(P_{YZ|X}| X)$ such that $R_1 > I(X;Y,Z)$, we have $R_2\geq 0$. Since it holds that $(I(X;Y,Z),0) \in \bigcup_{\bfP_{\bfU\bfY\bfZ|\bfX} \in \cP_{\mathrm{G}}(D| \bfX)} \cR_{\mathrm{G}}(\bfP_{\bfU\bfY\bfZ|\bfX}| \bfX)$ due to \eqref{equ: memoryless alpha rate in RG}, such pair $(R_1, R_2)$ is also included in the region $\bigcup_{\bfP_{\bfU\bfY\bfZ|\bfX} \in \cP_{\mathrm{G}}(D| \bfX)} \cR_{\mathrm{G}}(\bfP_{\bfU\bfY\bfZ|\bfX}| \bfX)$.

    Therefore, we have
    \begin{align*}
        \bigcup_{\bfP_{\bfU\bfY\bfZ|\bfX} \in \cP_{\mathrm{G}}(D| \bfX)} \cR_{\mathrm{G}}(\bfP_{\bfU\bfY\bfZ|\bfX}| \bfX)
        \supseteq \cR_{\mathrm{M}}(P_{YZ|X}| X).
    \end{align*}
    Since this holds for any $P_{YZ|X} \in \cP_{\mathrm{M}}(D| X)$, this completes the proof.
\end{proof}

\begin{rem}
    Unlike the rate-distortion region by Rimoldi, our region includes a sequence  $\bfU$ of auxiliary RVs. This comes from the fact that the time-sharing argument as in \eqref{equ: time-sharing argument} cannot be employed because it holds that in general,
    \begin{align*}
        \overline{I}(\bfX; \bfY, \bfZ) \neq \overline{I}(\bfX; \bfY) + \overline{I}(\bfX; \bfZ| \bfY).
    \end{align*}  
\end{rem}

\subsection{Mixed Sources}
In this section, we give the rate-distortion region for mixed sources.

The mixed source $\bfX$ is defined by $\bfX_{1} = \{X_{1}^{n}\}_{n=1}^{\infty}$ and $\bfX_{2} = \{X_{2}^{n}\}_{n=1}^{\infty}$ as
\begin{align*}
    P_{X^n}(x^n) = \alpha_{1} P_{X_{1}^n}(x^n) + \alpha_{2} P_{X_{2}^n}(x^n),
\end{align*}
where $\alpha_1, \alpha_2 \in [0, 1]$ and $\alpha_1 + \alpha_2 = 1$.

The next lemma shows a fundamental property of the information spectrum of mixed sources.
\begin{lem}
    \label{lem: mixed source is max mutual information}
    For sequences of RVs $(\bfX_{1}, \bfY_{1}, \bfZ_{1})$ and $(\bfX_{2}, \bfY_{2}, \bfZ_{2})$, let $(\bfX, \bfY, \bfZ)$ be defined by
    \begin{align*}
        P_{X^nY^nZ^n}(x^n, y^n, z^n)
        &=\alpha_{1} P_{X_{1}^nY_{1}^nZ_{1}^n}(x^n, y^n, z^n)\\
        &\quad + \alpha_{2} P_{X_{2}^nY_{2}^nZ_{2}^n}(x^n, y^n, z^n).
    \end{align*}
    Then, we have
    \begin{align*}
        \overline{I}(\bfX; \bfY) &= \max\{\overline{I}(\bfX_{1}; \bfY_{1}), \overline{I}(\bfX_{2}; \bfY_{2})\},\\
        \overline{I}(\bfX; \bfY| \bfZ) &= \max\{\overline{I}(\bfX_{1}; \bfY_{1}|\bfZ_{1}), \overline{I}(\bfX_{2}; \bfY_{2}|\bfZ_{2})\}.
    \end{align*}
\end{lem}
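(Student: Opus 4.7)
I will prove the unconditional identity $\overline{I}(\bfX;\bfY) = \max_i \overline{I}(\bfX_i;\bfY_i)$ in two stages---a mixture reduction, then an asymptotic identification of the mixed information density with each component density---and then explain how the conditional identity admits essentially the same treatment.

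\emph{Stage 1 (mixture reduction).} Since $P_{X^nY^n} = \alpha_1 P_{X_1^nY_1^n} + \alpha_2 P_{X_2^nY_2^n}$, for any sequence of measurable functions $\{g_n\}$,
\begin{align*}
\Pr\{g_n(X^n,Y^n) > \lambda\} &= \alpha_1 \Pr\{g_n(X_1^n, Y_1^n) > \lambda\}\\
&\quad + \alpha_2 \Pr\{g_n(X_2^n, Y_2^n) > \lambda\},
\end{align*}
so the left-hand side vanishes as $n\to\infty$ iff both summands do. Hence
$$\plimsup_n g_n(X^n, Y^n) = \max_{i=1,2}\plimsup_n g_n(X_i^n, Y_i^n).$$
Applying this with $g_n(x^n, y^n) \teq \frac{1}{n}\log \frac{P_{Y^n|X^n}(y^n|x^n)}{P_{Y^n}(y^n)}$---the \emph{mixed} information density---yields
$$\overline{I}(\bfX;\bfY) = \max_{i=1,2} \plimsup_n \frac{1}{n}\log\frac{P_{Y^n|X^n}(Y_i^n|X_i^n)}{P_{Y^n}(Y_i^n)}.$$

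\emph{Stage 2 (density identification).} It remains to show that, under $P_{X_i^nY_i^n}$, the log-discrepancy
$$A_n \teq \frac{1}{n}\log \frac{P_{Y^n|X^n}(Y_i^n|X_i^n)\,P_{Y_i^n}(Y_i^n)}{P_{Y^n}(Y_i^n)\,P_{Y_i^n|X_i^n}(Y_i^n|X_i^n)}$$
converges to $0$ in probability; the two $\plimsup$s in Stage~1 then coincide and the first claim follows. Writing $i'$ for the complementary index, a direct calculation gives
\begin{align*}
\frac{P_{Y^n}}{P_{Y_i^n}} &= \alpha_i + \alpha_{i'}\frac{P_{Y_{i'}^n}}{P_{Y_i^n}},\\
\frac{P_{Y^n|X^n}}{P_{Y_i^n|X_i^n}} &= \frac{\alpha_i + \alpha_{i'}\, P_{X_{i'}^nY_{i'}^n}/P_{X_i^nY_i^n}}{\alpha_i + \alpha_{i'}\, P_{X_{i'}^n}/P_{X_i^n}}.
\end{align*}
Each of the three component likelihood ratios $P_{Y_{i'}^n}/P_{Y_i^n}$, $P_{X_{i'}^n}/P_{X_i^n}$, and $P_{X_{i'}^nY_{i'}^n}/P_{X_i^nY_i^n}$ has expectation at most $1$ under the corresponding marginal of $P_{X_i^nY_i^n}$, so Markov's inequality gives that each is at most $e^{n\epsilon}$ with probability $\geq 1 - e^{-n\epsilon}$ under $P_{X_i^nY_i^n}$. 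On the intersection of these three high-probability events, both of the displayed mixture-ratios lie in the interval $[\alpha_i/(\alpha_i + e^{n\epsilon}),(\alpha_i + e^{n\epsilon})/\alpha_i]$, whose $\frac{1}{n}\log$ is contained in $[-\epsilon + o(1), \epsilon + o(1)]$. Since $\epsilon>0$ is arbitrary, $A_n\to 0$ in probability, as required.

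\emph{Conditional case and main obstacle.} For $\overline{I}(\bfX;\bfY|\bfZ)$, rewrite the conditional density as $\frac{1}{n}\log\frac{P_{X^nY^nZ^n}\,P_{Z^n}}{P_{X^nZ^n}\,P_{Y^nZ^n}}$, a log-ratio of four mixture measures, and repeat the two-stage argument with Markov bounds applied to four component likelihood ratios (for the joint on $(X,Y,Z)$ and the marginals on $(X,Z)$, $(Y,Z)$, and $Z$) in place of three. The technically delicate point is the Stage~2 analysis of $P_{Y^n|X^n}/P_{Y_i^n|X_i^n}$: this quantity is not itself a likelihood ratio but a ratio of two mixtures, and it therefore carries no intrinsic expectation-one (martingale) structure. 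The remedy is to apply Markov's inequality separately to the component likelihood ratios that appear in the numerator and denominator and to verify that their combined effect, after squeezing through $\alpha_i + \alpha_{i'}(\cdot)$, contributes only $O(\epsilon)$ on the $\frac{1}{n}\log$ scale.
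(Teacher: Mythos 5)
Your proof is correct and follows the route the paper cites from Han's book. Your Stage~1 mixture reduction is exactly Han's Lemma~1.4.2, your Stage~2 Markov-inequality identification of the mixed information density with each component's information density is the content of Han's Lemma~7.9.1, and your four-marginal decomposition for the conditional case makes explicit what the paper leaves implicit.
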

\begin{proof}
    Since this lemma can be proved in the same way as \cite[Lemma 7.9.1]{hanspringerinformation} by using \cite[Lemma 1.4.2]{hanspringerinformation}, we omit the details.
\end{proof}

The next theorem shows that the rate-distortion region for a mixed source is the intersection of those of two sources.
\begin{thm}
    For a mixed source $\bfX$ defined by $\bfX_{1}$ and $\bfX_{2}$, and any real numbers $D_1, D_2 \geq 0$, we have
    \begin{align*}
        \cR(D| \bfX) = \cR(D| \bfX_{1}) \cap \cR(D| \bfX_{2}).
    \end{align*}
\end{thm}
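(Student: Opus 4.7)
The plan is to derive both inclusions from the general formula of Theorem~\ref{thm: general formula for the rate-distortion region}, using Lemma~\ref{lem: mixed source is max mutual information} as the main analytic tool. The underlying observation is that whenever a joint $P_{X^n U^n Y^n Z^n}$ arises from a mixed source $\bfX$ via \emph{any} conditional distribution, it automatically inherits a mixture structure to which Lemma~\ref{lem: mixed source is max mutual information} directly applies; distortion event probabilities inherit the same decomposition.

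For the easier inclusion $\cR(D|\bfX) \subseteq \cR(D|\bfX_1) \cap \cR(D|\bfX_2)$, I would take any $(R_1,R_2) \in \cR(D|\bfX)$ and use Theorem~\ref{thm: general formula for the rate-distortion region} to obtain some $\bfP_{\bfU\bfY\bfZ|\bfX} \in \cP_{\mathrm{G}}(D|\bfX)$ with $\overline{I}(\bfX;\bfU,\bfY) \leq R_1$ and $\overline{I}(\bfX;\bfZ|\bfU,\bfY) \leq R_2$. Since $P_{X^n} = \alpha_1 P_{X_1^n} + \alpha_2 P_{X_2^n}$, the joint $P_{X^n U^n Y^n Z^n}$ splits as $\alpha_1 P_{X_1^n}\!\cdot\! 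P_{U^n Y^n Z^n|X^n} + \alpha_2 P_{X_2^n}\!\cdot\! P_{U^n Y^n Z^n|X^n}$, so Lemma~\ref{lem: mixed source is max mutual information} yields $\overline{I}(\bfX;\bfU,\bfY) = \max_i \overline{I}(\bfX_i;\bfU_i,\bfY_i)$ and $\overline{I}(\bfX;\bfZ|\bfU,\bfY) = \max_i \overline{I}(\bfX_i;\bfZ_i|\bfU_i,\bfY_i)$, where $(\bfX_i,\bfU_i,\bfY_i,\bfZ_i)$ denotes $\bfP$ applied to $\bfX_i$. The mixture identity $\Pr\{d^{(n)} > \alpha\} = \alpha_1 \Pr_1 + \alpha_2 \Pr_2$ forces $\overline{D}_j(\bfX_i,\cdot) \leq D_j$ for each $i,j$, hence $\bfP \in \cP_{\mathrm{G}}(D|\bfX_i)$ and $(R_1,R_2) \in \cR_{\mathrm{G}}(\bfP|\bfX_i)$, and Theorem~\ref{thm: general formula for the rate-distortion region} gives $(R_1,R_2) \in \cR(D|\bfX_i)$ for both $i = 1, 2$.

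For the harder inclusion $\cR(D|\bfX_1) \cap \cR(D|\bfX_2) \subseteq \cR(D|\bfX)$, I would take $(R_1,R_2) \in \cR(D|\bfX_i)$ for both $i$ and use Theorem~\ref{thm: general formula for the rate-distortion region} to obtain sequences $\bfP^{(1)}$ and $\bfP^{(2)}$ of conditional distributions achieving the rate pair for $\bfX_1$ and $\bfX_2$ respectively. The key step is to merge these into a single conditional distribution for $\bfX$ by setting $P_{U^n Y^n Z^n|X^n}(u,y,z|x) \teq \beta_1^{(n)}(x) P^{(1)}(u,y,z|x) + \beta_2^{(n)}(x) P^{(2)}(u,y,z|x)$ with the posterior weights $\beta_i^{(n)}(x) \teq \alpha_i P_{X_i^n}(x)/P_{X^n}(x)$ whenever $P_{X^n}(x) > 0$ (and chosen arbitrarily otherwise). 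A direct computation verifies $P_{X^n}\cdot P_{U^n Y^n Z^n|X^n} = \alpha_1 P_{X_1^n}\cdot P^{(1)} + \alpha_2 P_{X_2^n}\cdot P^{(2)}$, i.e., the joint is exactly the mixture required by Lemma~\ref{lem: mixed source is max mutual information}. The lemma then gives $\overline{I}(\bfX;\bfU,\bfY) = \max_i \overline{I}(\bfX_i;\bfU_i^{(i)},\bfY_i^{(i)}) \leq R_1$ and analogously $\overline{I}(\bfX;\bfZ|\bfU,\bfY) \leq R_2$, while the mixture decomposition of distortion events gives $\bfP \in \cP_{\mathrm{G}}(D|\bfX)$. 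Theorem~\ref{thm: general formula for the rate-distortion region} then places $(R_1,R_2) \in \cR(D|\bfX)$.

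The main obstacle is the construction of this single $\bfP$ in the reverse inclusion: since $\bfP^{(1)}$ and $\bfP^{(2)}$ are defined independently on all of $\cX^n$, one cannot in general simply partition on supports and reproduce the two component joints. Identifying the posterior weights $\beta_i^{(n)}(x)$ as the unique choice making $P_{X^n}\cdot \bfP$ equal to $\alpha_1 P_{X_1^n}\cdot \bfP^{(1)} + \alpha_2 P_{X_2^n}\cdot \bfP^{(2)}$ is the crux of the argument; once it is in place, Lemma~\ref{lem: mixed source is max mutual information} handles the information-theoretic bounds mechanically.
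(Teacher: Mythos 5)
Your proposal is correct and follows essentially the same route as the paper: the posterior-weighted mixture conditional $P_{U^nY^nZ^n|X^n} = \sum_i \beta_i^{(n)}(x) P^{(i)}$ you identify as the crux is exactly the paper's $\bfP_{\tilde\bfU\tilde\bfY\tilde\bfZ|\bfX}$, and both proofs reduce to Theorem~\ref{thm: general formula for the rate-distortion region}, Lemma~\ref{lem: mixed source is max mutual information}, and the mixture decomposition of distortion-exceedance probabilities. The only cosmetic difference is that the paper packages the argument as a chain of set equalities by reparametrizing $\cP_{\mathrm{G}}(D|\bfX)$ via pairs $(\bfP^{(1)},\bfP^{(2)})$, whereas you split it into two explicit inclusions, which is logically equivalent.
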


\begin{proof}
    For $\bfP_{\bfU_{1}\bfY_{1}\bfZ_{1}| \bfX_{1}}$ and $\bfP_{\bfU_{2}\bfY_{2}\bfZ_{2}| \bfX_{2}}$, we define a mixture $\bfP_{\tilde\bfU\tilde\bfY\tilde\bfZ| \bfX}$ of these two components by
    \begin{align*}
        &P_{\tilde U^n\tilde Y^n\tilde Z^n| X^n}(u^n, y^n, z^n| x^n)\\
        &\quad = \frac{\alpha_{1} P_{X_{1}^n}(x^n) P_{U_{1}^nY_{1}^nZ_{1}^n| X_{1}^n}(u^n, y^n, z^n| x^n)}{P_{X^n}(x^n)}\\
        &\qquad + \frac{\alpha_{2} P_{X_{2}^n}(x^n) P_{U_{2}^nY_{2}^nZ_{2}^n| X_{2}^n}(u^n, y^n, z^n| x^n)}{P_{X^n}(x^n)}.
    \end{align*}
    In order to prove the theorem, we give an equivalent expression of the rate-distortion region using $\bfP_{\bfU_{1}\bfY_{1}\bfZ_{1}| \bfX_{1}}$, $\bfP_{\bfU_{2}\bfY_{2}\bfZ_{2}| \bfX_{2}}$, and the mixture $\bfP_{\tilde\bfU\tilde\bfY\tilde\bfZ| \bfX}$.

    When $\bfP_{\bfU_{1}\bfY_{1}\bfZ_{1}| \bfX_{1}} = \bfP_{\bfU_{2}\bfY_{2}\bfZ_{2}| \bfX_{2}} = \bfP_{\bfU\bfY\bfZ| \bfX}$, we have $\bfP_{\tilde\bfU\tilde\bfY\tilde\bfZ| \bfX} = \bfP_{\bfU\bfY\bfZ| \bfX}$ by the definition. This implies that for any $\bfP_{\bfU\bfY\bfZ| \bfX}$, there exist $\bfP_{\bfU_{1}\bfY_{1}\bfZ_{1}| \bfX_{1}}$ and $\bfP_{\bfU_{2}\bfY_{2}\bfZ_{2}| \bfX_{2}}$ such that $\bfP_{\bfU\bfY\bfZ| \bfX} = \bfP_{\tilde\bfU\tilde\bfY\tilde\bfZ| \bfX}$. On the other hand, for any $\bfP_{\bfU_{1}\bfY_{1}\bfZ_{1}| \bfX_{1}}$ and $\bfP_{\bfU_{2}\bfY_{2}\bfZ_{2}| \bfX_{2}}$, there trivially exists $\bfP_{\bfU\bfY\bfZ| \bfX}$ such that $\bfP_{\bfU\bfY\bfZ| \bfX} = \bfP_{\tilde\bfU\tilde\bfY\tilde\bfZ| \bfX}$. Thus, we have an equivalent expression:
    \begin{align*}
        &\bigcup_{\bfP_{\bfU\bfY\bfZ|\bfX} \in \cP_{\mathrm{G}}(D| \bfX)} \cR_{\mathrm{G}}(\bfP_{\bfU\bfY\bfZ|\bfX}| \bfX)\\
        &= \bigcup_{\substack{\bfP_{\bfU_{1}\bfY_{1}\bfZ_{1}| \bfX_{1}}, \bfP_{\bfU_{2}\bfY_{2}\bfZ_{2}| \bfX_{2}}:\\ \overline{D}_{1}(\bfX, \tilde \bfY) \leq D_1, \overline{D}_{2}(\bfX, \tilde \bfZ) \leq D_2}} \cR_{\mathrm{G}}(\bfP_{\tilde \bfU \tilde \bfY \tilde \bfZ|\bfX}| \bfX).
    \end{align*}

    Thus, according to Theorem \ref{thm: general formula for the rate-distortion region}, we have
    \begin{align*}
        &\cR(D| \bfX)\\
        &=\bigcup_{\substack{\bfP_{\bfU_{1}\bfY_{1}\bfZ_{1}| \bfX_{1}}, \bfP_{\bfU_{2}\bfY_{2}\bfZ_{2}| \bfX_{2}}:\\ \overline{D}_{1}(\bfX, \tilde \bfY) \leq D_1, \overline{D}_{2}(\bfX, \tilde \bfZ) \leq D_2}} \cR_{\mathrm{G}}(\bfP_{\tilde \bfU \tilde \bfY \tilde \bfZ|\bfX}| \bfX)\\
        &\eqo{(a)} \bigcup_{\substack{\bfP_{\bfU_{1}\bfY_{1}\bfZ_{1}| \bfX_{1}}, \bfP_{\bfU_{2}\bfY_{2}\bfZ_{2}| \bfX_{2}}:\\ \max\{\overline{D}_{1}(\bfX_{1}, \bfY_{1}), \overline{D}_{1}(\bfX_{2}, \bfY_{2})\} \leq D_1,\\ \max\{\overline{D}_{2}(\bfX_{1}, \bfZ_{1}), \overline{D}_{2}(\bfX_{2}, \bfZ_{2})\} \leq D_2}} \{(R_1,R_2):\\
        &\quad R_1\geq \max\{\overline{I}(\bfX_{1}; \bfU_{1}, \bfY_{1}), \overline{I}(\bfX_{2}; \bfU_{2}, \bfY_{2})\}\\
        &\quad R_2\geq \max\{\overline{I}(\bfX_{1}; \bfZ_{1}| \bfU_{1}, \bfY_{1}), \overline{I}(\bfX_{2}; \bfZ_{2}| \bfU_{2}, \bfY_{2})\}\}\\
        &= \bigcup_{\bfP_{\bfU_{1}\bfY_{1}\bfZ_{1}| \bfX_{1}} \in \cP_{\mathrm{G}}(D| \bfX_{1})} \bigcup_{\bfP_{\bfU_{2}\bfY_{2}\bfZ_{2}| \bfX_{2}} \in \cP_{\mathrm{G}}(D| \bfX_{2})}\\
        &\quad \cR_{\mathrm{G}}(\bfP_{\bfU_{1}\bfY_{1}\bfZ_{1}|\bfX_{1}}| \bfX_{1}) \cap \cR_{\mathrm{G}}(\bfP_{\bfU_{2}\bfY_{2}\bfZ_{2}|\bfX_{2}}| \bfX_{2})\\
        &= \cR(D|\bfX_{1}) \cap \cR(D|\bfX_{2}).
    \end{align*}
    where (a) comes from \cite[Lemma 1.4.2]{hanspringerinformation}, Lemma \ref{lem: mixed source is max mutual information} and the fact that
    \begin{align*}
        &P_{X^n\tilde U^n\tilde Y^n\tilde Z^n}(x^n, u^n, y^n, z^n)\\
        &\quad = \alpha_{1} P_{X_{1}^nU_{1}^nY_{1}^nZ_{1}^n}(x^n, u^n, y^n, z^n)\\
        &\qquad + \alpha_{2} P_{X_{2}^nU_{2}^nY_{2}^nZ_{2}^n}(x^n, u^n, y^n, z^n).
    \end{align*}
    This completes the proof.
\end{proof}

\section{Conclusion}
\label{sec: conclusion}
In this paper, we have dealt with the successive refinement problem. We gave inner and outer bounds using the smooth max R\'enyi divergence on the set of pairs of numbers of codewords. These bounds are obtained by extended versions of our previous covering lemma and converse bound. By using these bounds, we also gave a general formula using the spectral sup-mutual information rate for the rate-distortion region. Further, we showed some special cases of our rate-distortion region for discrete stationary memoryless sources and mixed sources.

\section*{Acknowledgment}
The authors would like to thank the anonymous reviewers for their valuable comments. This work was supported in part by JSPS KAKENHI Grant Number 15K15935.

\end{document}